\newtheorem{definition}{Definition}[section]
\newtheorem{theorem}{Theorem}[section]
\newtheorem{corollary}{Corollary}[section]
\newtheorem{lemma}{Lemma}[section]
\title{Statistical modeling of Quantum error propagation}
\author{
 Zhuoyang Ye \\
  Computer Science\\
  University of California Los Angeles\\
  Los Angeles, CA 15213 \\
  \texttt{yezhuoyang@cs.ucla.edu} \\
}
\begin{document}
\maketitle
\begin{abstract}
In this paper, I design a new statistical abstract model for studying quantum error propagation. For each circuit, I give the algorithm to construct the Error propagation space-time graph(\textbf{EPSTG}) graph as well as the bipartite reverse spanning graph (\textbf{RSG}). Then I prove that the problem of finding an error pattern is $\mathcal{P}$ while calculate the error number distribution is $\textit{NP-complete}$. I invent the new measure for error propagation and show that for widely used transversal $CNOT$ circuit in parallel, the shift of distribution is bounded by $\frac{n}{27}$, where $n$ is the number of physical qubits. The consistency between the result of qiskit simulation and my algorithm justify the correctness of my model. Applying the framework to random circuit, I show that there is severe unbounded error propagation when circuit has global connection. We also apply my framework on parallel transversal logical $CNOT$ gate in surface code, and demonstrate that the error threshold will decrease from $0.231$ to $0.134$ per cycle. 
\end{abstract}


\section{Introduction}

Quantum computers, by virtue of quantum superposition and quantum entanglement, have many algorithmic applications with exponential speedup that might revolutionize the human world and society, such as factoring a large integer number \citep{shor1999polynomial} simulating complex quantum material \citep{childs2012hamiltonian}, solving linear equations \citep{Harrow_2009}, designing new drugs \citep{Santagati_2024}. The vast amount of success in the design of new quantum algorithms has inspired physicists to build a new physical quantum computer and apply these new quantum algorithms on their platform. Such physical systems include Nuclear magnetic resonance system(NMR)\citep{jones1998implementation}, superconducting Josephson junction\citep{clarke2008superconducting}, trapped ions\citep{cirac1995quantum}, linear optics\citep{knill2001scheme},semiconductor and quantum dots\citep{awschalom2002semiconductor}, as well as  Rydberg neutral atoms\citep{saffman2010quantum}. Among all competing physical qubits, trapped ion has already achieved the best single qubit gate fidelity of over $99.9999\%$ \citep{Harty_2014}, which pave the way for the future large scale quantum computer robust to noise. Some people even believe now that quantum supremacy over classical computer has already been achieved on random circuit sampling \citep{arute2019quantum} and Boson sampling tasks \citep{tillmann2013experimental}.

Despite the success in improving quantum hardware and inventing algorithm, the progress in implementing verifiable fault tolerant algorithm in real quantum computer is still very slow, the physical realization of a scalable fault-tolerant quantum computer is still considered by most people to be a fantasy. In 2001, a Shor's algorithm for factoring $15$ has been implemented on Nuclear magnetic resonance NMR \citep{Vandersypen_2001}, where hundreds of radio frequency(RF) pulses were applied on a room temperature molecule with seven spin-1/2 nuclei. However, the record for factoring Shor's algorithm has never been broken. One of the reason is the astronomical resource required to implement the algorithm fault tolerantly\citep{Gidney2021howtofactorbit}.

The biggest challenge for quantum computing is quantum error correction(QEC). Specifically, because of unavoidable decoherence and thermal fluctuation in all quantum system, we have to design a system to detect and correct errors.The famous threshold theorem for quantum computing has been shown to guarantee the feasibility to reduce error rate per physical gate when the physical error rate is below some constant threshold value \cite{gottesman2009introductionquantumerrorcorrection}. Under the same framework of stabilizer formalism \citep{gottesman1998theory}, many quantum error correction codes have been designed, with different structure and design of stabilizers made up of pauli $\hat{Z}$ and $\hat{X}$ operators. Among the first stabilizer code is the Calderbank-Shor-Steane(CSS) codes \citep{calderbank1996good,shor1996fault}. Toric code \citep{KITAEV20032} is the first efficient fault tolerant design of logical qubit, which also inspire the invention of the famous surface code \citep{fowler2012surface}.  

In the process of designing and verifying quantum error correction code, it is essential to build a model for random quantum noise with some ways of measure to characterize and compare the behavior of each code under different physical platform. Normally, to estimate the quantum error threshold, people assume the error model they use is a random pauli bit-flip model with constant error rate. Such simple statistical estimation, however, has not take into consideration the effect of the Clifford circuit on the noise.

A naive way of estimating the error threshold of surface code is given by calculating the most likely misidentification of error syndrome in \citep{fowler2012surface}. For a $d \times d$ surface code with code distance $d$, when there are more than $d_e=\frac{d+1}{2}$ errors in one row, we can no longer discriminate the error with it's dual error. There are $d$ rows and $\frac{d!}{(d_e-1)!d_e!}$ possible ways for such misidentification. Thus, the logical error rate is given by:

\begin{equation}
    P_L=d \frac{d!}{(d_e-1)!d_e!} P_e^{d_e}
    \label{eq:statisticalestimation}
\end{equation}

\begin{figure}[h!]
    \centering
    \includegraphics[width=0.7\linewidth]{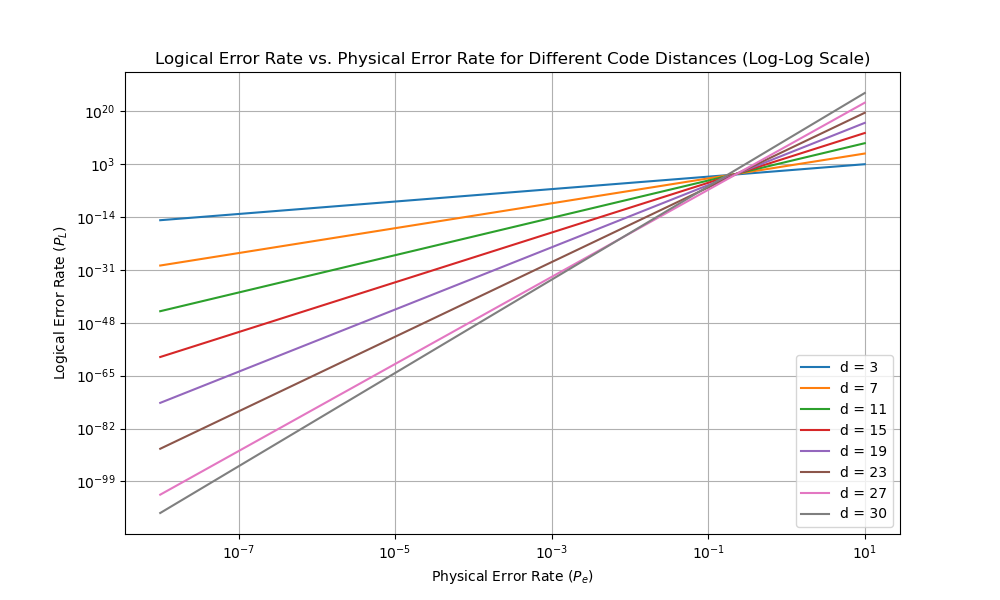}
    \caption{The threshold of surface code estimated by \autoref{eq:statisticalestimation} is roughly $10^{-1}$.}
    \label{fig:surfacethreshold}
\end{figure}

However, this estimation doesn't take into account the complexity introduced by error propagation.

As pointed out by Gottesnman \citep{gottesman2009introductionquantumerrorcorrection},  \begin{quote}\textit{The biggest obstacle which we must overcome in order to create a fault-tolerant
protocol is that of error propagation.}\end{quote}

\begin{figure}[h!]
\centering
\begin{tikzpicture}

    \node[draw, thick, inner sep=0.05cm] (box1) at (0, 0) {
        \begin{tikzpicture}
            \node[anchor=base] (leftPlot1) {
            \resizebox{0.17\columnwidth}{!}{
                \begin{quantikz}
                     Q1        &   \slice{$T=1$} & \targ{}& \slice{\quad $T=2$}    & \ctrl{3} & \slice{\quad $T=3$}  & \ctrl{4} & \slice{\quad $T=4$} &  &\slice{\quad $T=5$}&\ctrl{5}&  \\
                     Q2         & \gate[style={fill=red!80}]{X} & \ctrl{-1} &  &  &  & & & & & &\\
                     Q3        &  &  &   &          &  & &  &  \targ{}           & & &\\
                     Q4        &  &  &   & \targ{}  &  & &  & \ctrl{-1} & & &\\
                     Q5        &  &  &   &  &   & \targ{}& & & & &\\
                     Q6        &  &  &   &  &   & & &   & & \targ{} &
                \end{quantikz}
            }};
            \node[anchor=base, right=0.0cm of leftPlot1] (rightPlot1) {
            \resizebox{0.12\columnwidth}{!}{
                \begin{tikzpicture}[->,>=stealth',shorten >=1pt,auto,node distance=3cm,
                                    thick,main node/.style={circle,draw,font=\sffamily\normalsize\bfseries, minimum size=1cm}]
                  \node[main node] (Q1) {Q1};
                  \node[main node,fill=red] (Q2) [below left of=Q1] {Q2};
                  \node[main node] (Q3) [below right of=Q2] {Q3};
                  \node[main node] (Q4) [below right of=Q1] {Q4};
                  \node[main node] (Q5) [above right of=Q1] {Q5};
                  \node[main node] (Q6) [above left of=Q1] {Q6};
                  \path[every node/.style={font=\sffamily\small}]
                    (Q1) edge node [right] {$T=2$} (Q4)
                        edge node [left] {$T=3$} (Q5)
                        edge node [left] {$T=5$} (Q6)
                    (Q2) edge node [left] {$T=1$} (Q1)
                    (Q4) edge node [left] {$T=4$} (Q3);
                \end{tikzpicture}
            }};
        \end{tikzpicture}
    };

    \node[draw, thick, inner sep=0.05cm, right=0.5cm of box1] (box2) {
        \begin{tikzpicture}
            \node[anchor=base] (leftPlot2) {
            \resizebox{0.17\columnwidth}{!}{
                \begin{quantikz}
                 Q1        &   \slice{$T=1$} & \targ{}& \gate[style={fill=red!80}]{X}\slice{\quad $T=2$}    & \ctrl{3} & \slice{\quad $T=3$}  & \ctrl{4} & \slice{\quad $T=4$} &  &\slice{\quad $T=5$}&\ctrl{5}&  \\
                 Q2         & \gate[style={fill=red!80}]{X} & \ctrl{-1} & \gate[style={fill=red!80}]{X} &  &  & & & & & &\\
                   Q3        &  &  &   &          &  & &  &  \targ{}           & & &\\
                   Q4        &  &  &   & \targ{}  &  & &  & \ctrl{-1} & & &\\
                   Q5        &  &  &   &  &   & \targ{}& & & & &\\
                   Q6        &  &  &   &  &   & & &   & & \targ{} &
                \end{quantikz}
            }};
            \node[anchor=base, right=0.0cm of leftPlot2] (rightPlot2) {
            \resizebox{0.12\columnwidth}{!}{
                \begin{tikzpicture}[->,>=stealth',shorten >=1pt,auto,node distance=2.5cm,
                                    thick,main node/.style={circle,draw,font=\sffamily\Large\bfseries}]
                  \node[main node,fill=red] (Q1) {Q1};
                  \node[main node,fill=red] (Q2) [below left of=Q1] {Q2};
                  \node[main node] (Q3) [below right of=Q2] {Q3};
                  \node[main node] (Q4) [below right of=Q1] {Q4};
                  \node[main node] (Q5) [above right of=Q1] {Q5};
                  \node[main node] (Q6) [above left of=Q1] {Q6};
                  \path[every node/.style={font=\sffamily\small}]
                    (Q1) edge node [right] {$T=2$} (Q4)
                        edge node [left] {$T=3$} (Q5)
                        edge node [left] {$T=5$} (Q6)
                    (Q2) edge node [left] {$T=1$} (Q1)
                    (Q4) edge node [left] {$T=4$} (Q3);
                \end{tikzpicture}
            }};
        \end{tikzpicture}
    };

    \node[draw, thick, inner sep=0.05cm, right=0.5cm of box2] (box3) {
        \begin{tikzpicture}
            \node[anchor=base] (leftPlot3) {
            \resizebox{0.17\columnwidth}{!}{
                \begin{quantikz}
                 Q1        &   \slice{$T=1$} & \targ{}& \gate[style={fill=red!80}]{X}\slice{\quad $T=2$}    & \ctrl{3} &  \gate[style={fill=red!80}]{X}\slice{\quad $T=3$}  & \ctrl{4} & \slice{\quad $T=4$} &  &\slice{\quad $T=5$}&\ctrl{5}&  \\
                 Q2         & \gate[style={fill=red!80}]{X} & \ctrl{-1} & \gate[style={fill=red!80}]{X} &  &   \gate[style={fill=red!80}]{X}& & & & & &\\
                   Q3        &  &  &   &          &  & &  &  \targ{}           & & &\\
                   Q4        &  &  &   & \targ{}  &  \gate[style={fill=red!80}]{X} & &  & \ctrl{-1} & & &\\
                   Q5        &  &  &   &  &   & \targ{}& & & & &\\
                   Q6        &  &  &   &  &   & & &   & & \targ{} &
                \end{quantikz}
            }};
            \node[anchor=base, right=0.0cm of leftPlot3] (rightPlot3) {
            \resizebox{0.12\columnwidth}{!}{
                \begin{tikzpicture}[->,>=stealth',shorten >=1pt,auto,node distance=2.5cm,
                                    thick,main node/.style={circle,draw,font=\sffamily\Large\bfseries}]
                  \node[main node,fill=red] (Q1) {Q1};
                  \node[main node,fill=red] (Q2) [below left of=Q1] {Q2};
                  \node[main node] (Q3) [below right of=Q2] {Q3};
                  \node[main node,fill=red] (Q4) [below right of=Q1] {Q4};
                  \node[main node] (Q5) [above right of=Q1] {Q5};
                  \node[main node] (Q6) [above left of=Q1] {Q6};
                  \path[every node/.style={font=\sffamily\small}]
                    (Q1) edge node [right] {$T=2$} (Q4)
                        edge node [left] {$T=3$} (Q5)
                        edge node [left] {$T=5$} (Q6)
                    (Q2) edge node [left] {$T=1$} (Q1)
                    (Q4) edge node [left] {$T=4$} (Q3);
                \end{tikzpicture}
            }};
        \end{tikzpicture}
    };

    \node[draw, thick, inner sep=0.05cm, below=1cm of box1] (box4) {
        \begin{tikzpicture}
            \node[anchor=base] (leftPlot4) {
            \resizebox{0.17\columnwidth}{!}{
    \begin{quantikz}
     Q1        &   \slice{$T=1$} & \targ{}& \gate[style={fill=red!80}]{X}\slice{\quad $T=2$}    & \ctrl{3} &  \gate[style={fill=red!80}]{X}\slice{\quad $T=3$}  & \ctrl{4} &\gate[style={fill=red!80}]{X} \slice{\quad $T=4$} &  &\slice{\quad $T=5$}&\ctrl{5}&  \\
     Q2         & \gate[style={fill=red!80}]{X} & \ctrl{-1} & \gate[style={fill=red!80}]{X} &  &   \gate[style={fill=red!80}]{X}& &\gate[style={fill=red!80}]{X} & & & &\\
       Q3        &  &  &   &          &  & &  &  \targ{}           & & &\\
       Q4        &  &  &   & \targ{}  &  \gate[style={fill=red!80}]{X} & & \gate[style={fill=red!80}]{X} & \ctrl{-1} & & &\\
       Q5        &  &  &   &  &   & \targ{}& \gate[style={fill=red!80}]{X}& & & &\\
       Q6        &  &  &   &  &   & & &   & & \targ{} &
    \end{quantikz}
            }};
            \node[anchor=base, right=0.0cm of leftPlot4] (rightPlot4) {
            \resizebox{0.12\columnwidth}{!}{
    \begin{tikzpicture}[->,>=stealth',shorten >=1pt,auto,node distance=2.5cm,
                        thick,main node/.style={circle,draw,font=\sffamily\Large\bfseries}]
      \node[main node,fill=red] (Q1) {Q1};
      \node[main node,fill=red] (Q2) [below left of=Q1] {Q2};
      \node[main node] (Q3) [below right of=Q2] {Q3};
      \node[main node,fill=red] (Q4) [below right of=Q1] {Q4};
      \node[main node,fill=red] (Q5) [above right of=Q1] {Q5};
      \node[main node] (Q6) [above left of=Q1] {Q6};

      \path[every node/.style={font=\sffamily\small}]
        (Q1) edge node [right] {$T=2$} (Q4)
            edge node [left] {$T=3$} (Q5)
            edge node [left] {$T=5$} (Q6)
        (Q2) edge node [left] {$T=1$} (Q1)
        (Q4) edge node [left] {$T=4$} (Q3);
    \end{tikzpicture}
            }};
        \end{tikzpicture}
    };

    \node[draw, thick, inner sep=0.05cm, right=0.5cm of box4] (box5) {
        \begin{tikzpicture}
            \node[anchor=base] (leftPlot5) {
            \resizebox{0.17\columnwidth}{!}{
    \begin{quantikz}
     Q1        &   \slice{$T=1$} & \targ{}& \gate[style={fill=red!80}]{X}\slice{\quad $T=2$}    & \ctrl{3} &  \gate[style={fill=red!80}]{X}\slice{\quad $T=3$}  & \ctrl{4} &\gate[style={fill=red!80}]{X} \slice{\quad $T=4$} &   &\gate[style={fill=red!80}]{X}\slice{\quad $T=5$}&\ctrl{5}&  \\
     Q2         & \gate[style={fill=red!80}]{X} & \ctrl{-1} & \gate[style={fill=red!80}]{X} &  &   \gate[style={fill=red!80}]{X}& &\gate[style={fill=red!80}]{X} & & \gate[style={fill=red!80}]{X}& &\\
       Q3        &  &  &   &          &  & &  &  \targ{}  &  \gate[style={fill=red!80}]{X}& &\\
       Q4        &  &  &   & \targ{}  &  \gate[style={fill=red!80}]{X} & & \gate[style={fill=red!80}]{X} & \ctrl{-1} & \gate[style={fill=red!80}]{X} & &\\
       Q5        &  &  &   &  &   & \targ{}& \gate[style={fill=red!80}]{X}& & \gate[style={fill=red!80}]{X}& &\\
       Q6        &  &  &   &  &   & & &   & & \targ{} &
    \end{quantikz}
            }};
            \node[anchor=base, right=0.0cm of leftPlot5] (rightPlot5) {
            \resizebox{0.12\columnwidth}{!}{
    \begin{tikzpicture}[->,>=stealth',shorten >=1pt,auto,node distance=2.5cm,
                        thick,main node/.style={circle,draw,font=\sffamily\Large\bfseries}]
      \node[main node,fill=red] (Q1) {Q1};
      \node[main node,fill=red] (Q2) [below left of=Q1] {Q2};
      \node[main node,fill=red] (Q3) [below right of=Q2] {Q3};
      \node[main node,fill=red] (Q4) [below right of=Q1] {Q4};
      \node[main node,fill=red] (Q5) [above right of=Q1] {Q5};
      \node[main node] (Q6) [above left of=Q1] {Q6};

      \path[every node/.style={font=\sffamily\small}]
        (Q1) edge node [right] {$T=2$} (Q4)
            edge node [left] {$T=3$} (Q5)
            edge node [left] {$T=5$} (Q6)
        (Q2) edge node [left] {$T=1$} (Q1)
        (Q4) edge node [left] {$T=4$} (Q3);
    \end{tikzpicture}
            }};
        \end{tikzpicture}
    };

    \node[draw, thick, inner sep=0.05cm, right=0.5cm of box5] (box6) {
        \begin{tikzpicture}
            \node[anchor=base] (leftPlot6) {
            \resizebox{0.17\columnwidth}{!}{
    \begin{quantikz}
     Q1        &   \slice{$T=1$} & \targ{}& \gate[style={fill=red!80}]{X}\slice{\quad $T=2$}    & \ctrl{3} &  \gate[style={fill=red!80}]{X}\slice{\quad $T=3$}  & \ctrl{4} &\gate[style={fill=red!80}]{X} \slice{\quad $T=4$} &   &\gate[style={fill=red!80}]{X}\slice{\quad $T=5$}&\ctrl{5}& \gate[style={fill=red!80}]{X} \\
     Q2         & \gate[style={fill=red!80}]{X} & \ctrl{-1} & \gate[style={fill=red!80}]{X} &  &   \gate[style={fill=red!80}]{X}& &\gate[style={fill=red!80}]{X} & & \gate[style={fill=red!80}]{X}& &\gate[style={fill=red!80}]{X}\\
       Q3        &  &  &   &          &  & &  &  \targ{}  &  \gate[style={fill=red!80}]{X}& &\gate[style={fill=red!80}]{X}\\
       Q4        &  &  &   & \targ{}  &  \gate[style={fill=red!80}]{X} & & \gate[style={fill=red!80}]{X} & \ctrl{-1} & \gate[style={fill=red!80}]{X} & &\gate[style={fill=red!80}]{X}\\
       Q5        &  &  &   &  &   & \targ{}& \gate[style={fill=red!80}]{X}& & \gate[style={fill=red!80}]{X}& & \gate[style={fill=red!80}]{X}\\
       Q6        &  &  &   &  &   & & &   & & \targ{} & \gate[style={fill=red!80}]{X}
    \end{quantikz}
            }};
            \node[anchor=base, right=0.0cm of leftPlot6] (rightPlot6) {
            \resizebox{0.12\columnwidth}{!}{
    \begin{tikzpicture}[->,>=stealth',shorten >=1pt,auto,node distance=2.5cm,
                        thick,main node/.style={circle,draw,font=\sffamily\Large\bfseries}]
      \node[main node,fill=red] (Q1) {Q1};
      \node[main node,fill=red] (Q2) [below left of=Q1] {Q2};
      \node[main node,fill=red] (Q3) [below right of=Q2] {Q3};
      \node[main node,fill=red] (Q4) [below right of=Q1] {Q4};
      \node[main node,fill=red] (Q5) [above right of=Q1] {Q5};
      \node[main node,fill=red] (Q6) [above left of=Q1] {Q6};

      \path[every node/.style={font=\sffamily\small}]
        (Q1) edge node [right] {$T=2$} (Q4)
            edge node [left] {$T=3$} (Q5)
            edge node [left] {$T=5$} (Q6)
        (Q2) edge node [left] {$T=1$} (Q1)
        (Q4) edge node [left] {$T=4$} (Q3);
    \end{tikzpicture}
            }};
        \end{tikzpicture}
    };
\end{tikzpicture}
\caption{Propagation graph of the left circuit between $T=1$ to $T=5$.}
\label{fig:randompropag}
\end{figure}
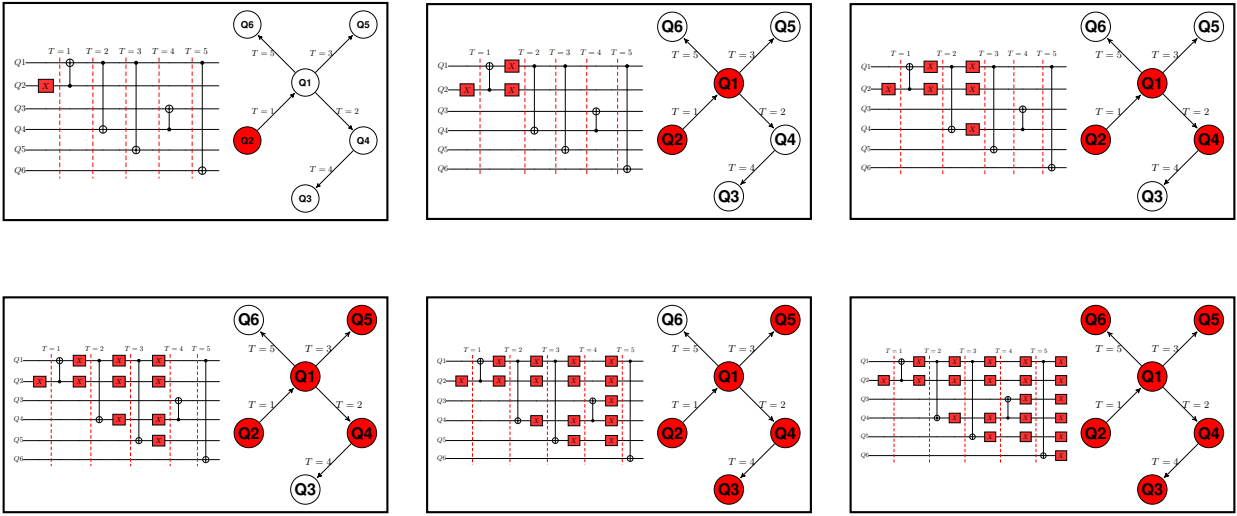

Undersanding quantum error propagation in a general quantum circuit is critical in the development of fault-tolerant quantum computer. The motivation of this work, is try to answer the following questions with respect to this problem:
\begin{enumerate}
    \item How to numerically or mathematically quantify the extent of error propagation and how to compare that quality between different circuits?
    \item What is the time complexity we need to calculate the above measurement in the general case?
    \item What is the optimal way to design a quantum error correction circuit with bounded error propagation?
    \item Can we bound the error propagation so the threshold conclusion is not affected?
\end{enumerate}

With the help of a successful model of error propagation, we might not only build a classical compiler with some abstract interpretation  \citep{abstractInter} to verify the correctness and threshold of different QEC codes, but also search for the best solution for logical gate design and QEC paradigm with classical method automatically, such as artificial intelligence  \citep{Wang_2022}  and SAT solver \citep{tan2024satscalpellatticesurgery}, with the awareness of potential error propagation.

We care about the time complexity of this problem because currently it has been estimated that $200$ million noisy qubits are required to factor a $2048$ bit RSA integer\citep{Gidney2021howtofactorbit} with surface code. We should reasonably expect that the real fault-tolerant quantum computer in the future will have a large scale of physical qubits, with extremely complicated circuit structure and topology after compilation and optimization. Thus, we could only understand the propagation of the error if the problem itself is computationally tractable.

From the view of designing better quantum error correction code, a new concept called \textbf{circuit-level errors} is invented \citep{Pryadko2020maximumlikelihood} to describe the noise model in a given quantum circuit. The \textbf{circuit-level errors} has been used in studying the optimal decoding of surface code \citep{heim2016optimalcircuitleveldecodingsurface} as well as the quantum memor\citep{Bravyi_2024}. Another very similar concept is the \textbf{detector-error model}\citep{derks2024designingfaulttolerantcircuitsusing}, where the quantum errors are analyze after propagation to the detected qubits.

There is much research about how to understand and estimate the propagation of quantum error. Mathematically, a theoretical upper bound for error propagation is provided by using the Frobenius norm \citep{yu2022analysiserrorpropagationquantum}.
A rescheduling algorithm has been proposed in which the author used gate error propagation paths to estimate the output state fidelity of
the quantum circuit \citep{saravanan2022paulierrorpropagationbasedgate}. 
The propagation of errors has also been studied in simulation of the Hubbard model \citep{flannigan2022propagationerrorsquantitativequantum}. A fault path tracer method has also been introduced \citep{faultpathtracer} to understand the propagation of errors caused by the circuit structure. Another similar idea is gate-flow error, applied in transversal CNOT gate \citep{kim2024transversalcnotgatemulticycle}.

Despite all the above efforts, there is no commonly accepted paradigm for analyzing the statistical model of quantum error. In many literatures, the problem of error propagation is just briefly mentioned without careful analysis\citep{Kubica_2018}. Also, we still lack a clear and precise statiscal measure for error propagation.

In this paper, we construct a new abstract framework for the error propagation problem, inspired by all previous work. 

In my framework, the circuit is first converted to a space-time error propagation graph, and then a binary graph which characterizes the error propagation behavior.  We use the term \textbf{shift of the error distribution} as a measure for the extent of error propagation, and apply the measure to some special cases. Further, we study the complexity of calculating the exact error distribution before and after error propagation, and prove that the problem is \textit{NP-complete} in the worse case by reducing $\textbf{MAX-XORSAT}$ to the problem itself.

\section{Assumption to simplify the problem in our model}
We assume that our circuit is only made up of gates $CX$ and that our error model is simply a single-qubit bitflip model. We discretize the calculation into $T$ time windows. The error model is determined by a constant probability $p$, described as the independent probability that a bit-flip error will occur within one window:

\begin{equation}
    \ket{\psi}  \rightarrow \begin{cases}
                & \hat{X} \ket{\psi}  \qquad \text{With probability $p$}\\
                & \ket{\psi} \qquad \text{With probability $1-p$}
    \end{cases}
    \label{eq:bitflipmodel}
\end{equation}

We will further show that the method developed for the circuit and error model under the assumption is actually adaptive to the general case. The dynamic of all kinds of error propagation is similar to the propagation of this simple case. And because of the linearity of quantum mechanics, the actual noise model can be described by a linear combination of error models isomorphism with the simplest bitflip model.

A general quantum model acting on density matrix $\rho$ can be described by a quantum noise channel $\mathcal{E}$ with a set of Kraus operators $\{\mathcal{K}_i \}$ that form a complete positive map:
\begin{equation}
    \mathcal{E}(\hat{\rho})=\sum_{i=1}^n \hat{\mathcal{K}}_i \hat{\rho} \hat{\mathcal{K}}_i^\dagger, \qquad \sum_{i=1}^n  \hat{\mathcal{K}}_i \hat{\mathcal{K}}_i^\dagger=\hat{I}
\end{equation}

\section{Propagation of quantum errors}

The simplest example of error propagation is that $CX$ gate can propagate the bitflip error on the control qubit.

\begin{figure}[h!]
\centering
\fbox{
   \begin{minipage}{0.4\textwidth}
     \centering
\resizebox{\columnwidth}{!}
{
\begin{quantikz}
 \ket{0}        & \gate[style={fill=red!80}]{X}\slice{$T=1$} & \ctrl{1}  & \slice{$T=2$}    &   \\
 \ket{0}        &  & \targ{}   &  &  
\end{quantikz}
$\Rightarrow$
\begin{quantikz}
 \ket{0}        &  \slice{$T=1$}  & \ctrl{1}  & \slice{$T=2$} &\gate[style={fill=red!80}]{X}  &    \\
 \ket{0}        &  & \targ{}  & & \gate[style={fill=red!80}]{X} &  
\end{quantikz}
}
\end{minipage}
}
\fbox{
   \begin{minipage}{0.4\textwidth}
     \centering
\resizebox{\columnwidth}{!}
{
\begin{quantikz}
 \ket{0}        & \gate[style={fill=blue!80}]{Z}\slice{$T=1$} & \ctrl{1}  & \slice{$T=2$}    &   \\
 \ket{0}        &  & \targ{}   &  &  
\end{quantikz}
$\Rightarrow$
\begin{quantikz}
 \ket{0}        &  \slice{$T=1$}  & \ctrl{1}  & \slice{$T=2$} &\gate[style={fill=blue!80}]{Z}  &    \\
 \ket{0}        &  & \targ{}  & & &  
\end{quantikz}
}
\end{minipage}
}
\fbox{
   \begin{minipage}{0.4\textwidth}
     \centering
\resizebox{\columnwidth}{!}
{
\begin{quantikz}
 \ket{0}        & & \ctrl{1}  & \slice{$T=2$}    &   \\
 \ket{0}        & \gate[style={fill=blue!80}]{Z}\slice{$T=1$}   & \targ{}   &  &  
\end{quantikz}
$\Rightarrow$
\begin{quantikz}
 \ket{0}        &  \slice{$T=1$}  & \ctrl{1}  & \slice{$T=2$} &\gate[style={fill=blue!80}]{Z}  &    \\
 \ket{0}        &  & \targ{}  & & \gate[style={fill=blue!80}]{Z} &  
\end{quantikz}
}
\end{minipage}
}
\fbox{
   \begin{minipage}{0.4\textwidth}
     \centering
\resizebox{\columnwidth}{!}
{
\begin{quantikz}
 \ket{0}        & \slice{$T=1$} & \ctrl{1}  & \slice{$T=2$}    &   \\
 \ket{0}        & \gate[style={fill=red!80}]{X}  & \targ{}   &  &  
\end{quantikz}
$\Rightarrow$
\begin{quantikz}
 \ket{0}        &  \slice{$T=1$}  & \ctrl{1}  & \slice{$T=2$} &   &    \\
 \ket{0}        &  & \targ{}  & & \gate[style={fill=red!80}]{X} &  
\end{quantikz}
}
\end{minipage}
}
\caption{Propagation of a bitflip $X$ error by $CX$ gate. At time step $T=1$, one bitflip error occurs at the first qubit. The error propagate through $CX$ gate and becomes $2$ bitflip errors. }
\label{fig:ErrorPropagationRule}
\end{figure}
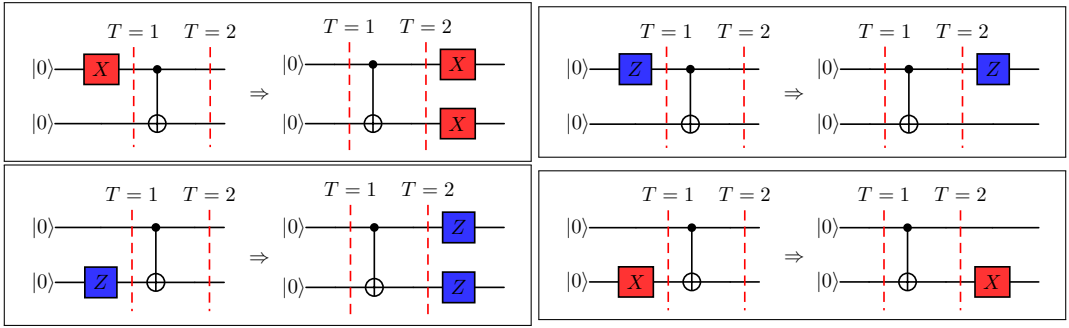

In the two qubit case with $T=2$, there are $2^4=16$ possible arrangement of errors. We can calculate the probability of each case. We use $P(k)$ to denote the probability of $k$ bitflip errors. $P(0),P(1),P(2)$ are shown in \autoref{eq:pCnot2q}.

\begin{equation}
    \begin{split}
           P(0)&=(1-p)^4+p^2(1-p)^2+2 \times p^3(1-p) \\
           P(1)&=3p(1-p)^3+3p^2(1-p)^2+p^3(1-p)+p^4\\
           P(2)&=p(1-p)^3+2p^2(1-p)^2+p^3(1-p)
    \end{split}
    \label{eq:pCnot2q}
\end{equation}

On the other hand, we can also calculate the error distribution without the propagation of $CNOT$ gate. Since the total time step is $T=2$, an error happens at $T=2$ only if one error happens within one of the time windows.

\begin{equation}
        \begin{split}
             P(0)&=(2p^2-2p+1)^2\\
             P(1)&=2\times(2p-2p^2)(2p^2-2p+1)\\
             P(2)&=(2p-2p^2)^2
        \end{split}
        \label{eq:p2q}
\end{equation}

In the general case, the error model after propagation can be quite complicated.  Now let's do a simple calculation comparing the error number distribution before and after the propagation.

\subsection{General model for pauli error propagation by clifford circuit}

In QEC code, we build circuit mostly by Clifford gates. In this section, we study the general model of pauli error propagation by Clifford gates.

For a pauli error $\hat{\sigma}$ and a Clifford gate $\hat{G}$, by Gottesman-Knill theorem, the conjugate mapping of $\hat{G}$ transform the pauli error $\hat{\sigma}$ to a new pauli error $\hat{\sigma}'$: 
\begin{equation}
   \hat{G}^{-1} \hat{\sigma}\hat{G}= \hat{\sigma}'
   \label{eq:gottesman}
\end{equation}
After multiply $\hat{G}$ on the left on both sides of the equation, we get:
\begin{equation}
    \hat{\sigma}\hat{G}=\hat{G}\hat{\sigma}'
    \label{eq:errorprop}
\end{equation}
\autoref{eq:errorprop} implies that the pauli error propagation is exactly the same as conjugate mapping by Clifford gates.

The conjugative operation of hadamard gate can propagate $\hat{X}$, $\hat{Y}$ and $\hat{Z}$ as:
\begin{equation}
    \hat{H}\hat{X}\hat{H}=\hat{Z} , \qquad \hat{H}\hat{Y}\hat{H}=-\hat{Y},\qquad \hat{H}\hat{Z}\hat{H}=\hat{X}
    \label{eq:Basischange}
\end{equation}
The $\hat{X}$, $\hat{Y}$,$\hat{Z}$ gates under the conjugate transformation of phase gate $\hat{P}$ is given as
\begin{equation}
    \hat{P}\hat{X}\hat{P}^\dagger=\hat{Y} , \qquad \hat{P}\hat{Y}\hat{P}^\dagger=-\hat{X} ,\qquad \hat{P}\hat{Z}\hat{P}^\dagger=\hat{Z}
    \label{eq:phaseconj}
\end{equation}

$CNOT$ gate propagate all two qubit pauli error as: 

\begin{equation}
\begin{split}
  \hat{CNOT}_{i,j}(X_i \otimes I_j)  \hat{CNOT}_{i,j}&= \hat{X}_i \otimes \hat{X}_j\\
  \hat{CNOT}_{i,j}(\hat{I}_i \otimes \hat{X}_j)  \hat{CNOT}_{i,j}&= \hat{I}_i \otimes \hat{X}_j\\
  \hat{CNOT}_{i,j}(\hat{X}_i \otimes \hat{X}_j)  \hat{CNOT}_{i,j}&=\hat{X}_i \otimes \hat{I}_j\\
  \hat{CNOT}_{i,j}(\hat{Z}_i \otimes \hat{I}_j)  \hat{CNOT}_{i,j}&= \hat{Z}_i \otimes \hat{I}_j\\
  \hat{CNOT}_{i,j}(\hat{I}_i \otimes \hat{Z}_j)  \hat{CNOT}_{i,j}&= \hat{Z}_i \otimes \hat{Z}_j\\
  \hat{CNOT}_{i,j}(\hat{Z}_i \otimes \hat{Z}_j)  \hat{CNOT}_{i,j}&=\hat{I}_i \otimes \hat{Z}_j\\
  \hat{CNOT}_{i,j}(\hat{Y}_i \otimes \hat{I}_j)  \hat{CNOT}_{i,j}&= \hat{Y}_i \otimes \hat{X}_j \\
  \hat{CNOT}_{i,j}(\hat{I}_i \otimes \hat{Y}_j)  \hat{CNOT}_{i,j}&=\hat{Z}_i \otimes \hat{Y}_j\\
  \hat{CNOT}_{i,j}(\hat{Y}_i \otimes \hat{Y}_j)  \hat{CNOT}_{i,j}&=-\hat{X}_i \otimes \hat{Z}_j\\ 
  \hat{CNOT}_{i,j}(\hat{X}_i \otimes \hat{Y}_j)  \hat{CNOT}_{i,j}&=\hat{Y}_i \otimes \hat{Z}_j\\
  \hat{CNOT}_{i,j}(\hat{Y}_i \otimes \hat{X}_j)  \hat{CNOT}_{i,j}&=\hat{Y}_i \otimes \hat{I}_j\\
  \hat{CNOT}_{i,j}(\hat{X}_i \otimes \hat{Z}_j)  \hat{CNOT}_{i,j}&=-\hat{Y}_i \otimes \hat{Y}_j   \\
  \hat{CNOT}_{i,j}(\hat{Z}_i \otimes \hat{X}_j)  \hat{CNOT}_{i,j}&=\hat{Z}_i \otimes \hat{X}_j\\
  \hat{CNOT}_{i,j}(\hat{Y}_i \otimes \hat{Z}_j)  \hat{CNOT}_{i,j}&=\hat{X}_i \otimes \hat{Y}_j\\
  \hat{CNOT}_{i,j}(\hat{Z}_i \otimes \hat{Y}_j)  \hat{CNOT}_{i,j}&=\hat{I}_i \otimes \hat{Y}_j \\
\end{split} 
\label{eq:cnotconj}
\end{equation} 

This simulation of pauli error propagation can be done by tableau simulation first proposed by Scott Aaronson.
Here we only care about the propagation of $Z$ and $X$ error.

\section{Statement of the computational problem}

The problem of quantum error distribution under the previous assumption goes as follows:
\textbf{Given a quantum circuit of $n$ qubit composed of only $m$ $CX$ gates with $T$ time windows and the error model with only single bitflip errors of bitflip probability $p$, what is the distribution of the probability of single qubit error after propagation?}

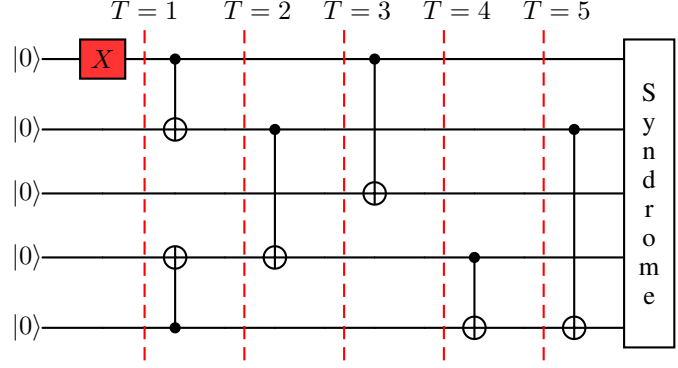
\begin{figure}[h!]
\centering
\begin{quantikz}
 \ket{0}        & \gate[style={fill=red!80}]{X}\slice{$T=1$} & \ctrl{1}& \slice{\quad $T=2$}    &  & \slice{\quad $T=3$}  & \ctrl{2} & \slice{\quad $T=4$} &  &\slice{\quad $T=5$}&& \gate[5,disable auto
height]{\verticaltext{Syndrome}} \\
 \ket{0}        &  & \targ{} &  & \ctrl{2} &  & & & & & \ctrl{3}&\\
  \ket{0}        &  &  &   &  &  & \targ{} & & & & &\\
   \ket{0}        &  & \targ{}&    & \targ{}  &  & &  & \ctrl{1} & & &\\
   \ket{0}        &  & \ctrl{-1}&    &  &   & & & \targ{} & & \targ{}&
\end{quantikz}
\caption{General case of error propagation. We want to understand the effective error model after the propagation, because QEC code only guarantee to decode and correct the errors when the syndromes are measured.}
\label{fig:ErrorPropagationGeneral}
\end{figure}

The answer to the problem is important to error threshold analysis. Because in the fault tolerant quantum computing, we design the QEC code which should be able to correct all error patterns where the total number of single qubit errors where we run the decoding should be smaller than the code distance.

The simulation of error propagation once we fixed the circuit and the errors is trivial. 

However, the calculation of probability is computationally hard, in the general case. Because the probability space has size $2^{nT}$, and we have to simulate the propagation for each independent case.

\section{Error distribution analysis for some special cases}

In this section, we analyze some special cases with simple structure for error distribution. 

Now we use the notation $Q_k[t]$ for the qubit index $k$ in the time window $t$. The error distribution can be visualized by such as diagram that is composed of all vertices $Q_k[t]$ for $1 \leq  k \leq n$ and $1 \leq t \leq T$. There is a directed edge $(Q_i[t_1],Q_j[t_2])$ if an only if a random bitflip error at at qubit $Q_i$ in time window $t_1$ can propagate to qubit $Q_j$ through some directed path. An example of the directed graph is shown in \autoref{fig:propagateDiagram}.

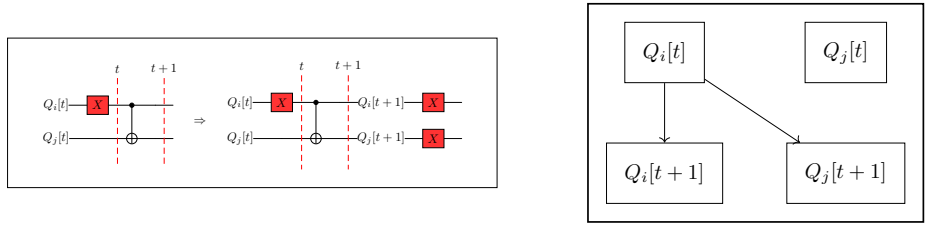
\begin{figure}[h!]
\centering

\begin{minipage}{0.4\textwidth}
     \centering
\resizebox{\columnwidth}{!}
{
\begin{tikzpicture}
 \node[draw, thick, inner sep=0.5cm] (firstBox) {
\begin{quantikz}
 Q_i[t]        & \gate[style={fill=red!80}]{X}\slice{$t$} & \ctrl{1}  & \slice{$t+1$}    &   \\
  Q_j[t]        &  & \targ{}   &  &  
\end{quantikz}
$\Rightarrow$
\begin{quantikz}
 Q_i[t]        & \gate[style={fill=red!80}]{X} \slice{$t$}  & \ctrl{1}  & \slice{$t+1$}  &Q_i[t+1]&\gate[style={fill=red!80}]{X}  &     \\
Q_j[t]        &  & \targ{}  &  & Q_j[t+1]&\gate[style={fill=red!80}]{X} &  
\end{quantikz}
};
\end{tikzpicture}
}
\end{minipage}
\begin{minipage}{0.4\textwidth}
     \centering
\resizebox{0.7\columnwidth}{!}
{
    \begin{tikzpicture}
        \node[draw, thick, inner sep=0.3cm] (thirdBox) {
            \begin{tikzpicture}
                \tikzstyle{every node}=[draw, minimum size=1cm, inner sep=0.3cm]

                \node (Q1) at (0,2) {$Q_i[t]$};
                \node (Q2) at (3,2) {$Q_j[t]$};
                \node (Q3) at (0,0) {$Q_i[t+1]$};
                \node (Q4) at (3,0) {$Q_j[t+1]$};

                \draw[->] (Q1) -- (Q3);
                \draw[->] (Q1) -- (Q4);
            \end{tikzpicture}
        };
    \end{tikzpicture}
}
\end{minipage}
\caption{Illustration of how to convert the propagation of bitflip error to a new directed graph.}
\label{fig:propagateDiagram}
\end{figure}

The input of the problem is a quantum circuit $\textbf{C}$ with $n$ qubits composed of $CNOT$ gates, the bitflip error probability $p$ in one gate cycle. The output is the estimate distribution of the number of bitflip error after the circuit is performed.

\subsection{Error distribution when $T=1$}

The distribution of error when $T=1$ satisfies a binomial distribution. Once we determine the number of error $k$, the number of possible error pattern is the combinatorial value $\binom{n}{k}=\frac{n!}{k!(n-k)!}$, with probability $p^k(1-p)^{n-k}$ for each case, as summarized in \autoref{tab:errorDistribution}. This is exactly a binomial distribution. When $n \rightarrow \infty$, by large number theorem, the distribution is close to a Gaussian distribution.

\begin{table}[h!]
    \centering
    \begin{tabular}{|c|c|c|}
        \hline
        Number of error & Number of pattern & Probability\\
          \hline
       $0$  & $1$ & $(1-p)^n$\\
         \hline        
       $1$  & $n$ & $np(1-p)^{n-1}$\\
          \hline
       $2$  & $\binom{n}{2}$ & $\binom{n}{2}p^2(1-p)^{n-2}$\\
         \hline
       $\cdots$  & $\cdots$ & $\cdots$\\
         \hline
       $k$  & $\binom{n}{k}$  & $\binom{n}{k}p^k(1-p)^{n-k}$ \\
         \hline
      $\cdots$  & $\cdots$ & $\cdots$\\
      \hline
       $n$  & $1$ & $p^n$\\
       \hline
    \end{tabular}
    \caption{The distribution of the probability of error number form a binomial distribution.}
    \label{tab:errorDistribution}
\end{table}

The expectation of the number of error is:
\begin{equation}
    E(k)=pn
\end{equation}

\FloatBarrier

\subsection{Error distribution without any $CX$ gates}

Another naive case is analyzing the error distribution when the circuit is empty without any structure. Unlike in the previous case where $T=1$, now the bit-flip errors can accumulate over time and sometimes cancel each other out by the rule $\hat{X}\hat{X}=\hat{I}$. In fact, at time $T$, a bit-flip error still exists only when the qubit is flipped for an odd number of times, as given in \autoref{eq:PoddPeven}.

\begin{figure}[h!]
\centering
\begin{minipage}{0.48\textwidth}
     \centering
\resizebox{0.8\columnwidth}{!}
{
\begin{quantikz}
 Q_1[1]        & \gate[style={fill=red!80}]{X}\slice{$T=1$} & & \slice{\quad $T=2$}    &  & \slice{\quad $T=3$}  & & \slice{\quad $T=4$} &  &\slice{\quad $T=5$}& &\\
  Q_2[1]        &   &  &  &\gate[style={fill=red!80}]{X} &  &\gate[style={fill=red!80}]{X} & & & & &\\
  Q_3[1]         & \gate[style={fill=red!80}]{X} &  &   &  &  & & & & & &\\
   Q_4[1]        &  &\gate[style={fill=red!80}]{X}  &   &   &  &\gate[style={fill=red!80}]{X} &  & & & &\\
   Q_5[1]        &  &\gate[style={fill=red!80}]{X} &    &  &   & & & \gate[style={fill=red!80}]{X}& & &
\end{quantikz}
}
\end{minipage}
\begin{minipage}{0.5\textwidth}
     \centering
    \includegraphics[width=\linewidth]{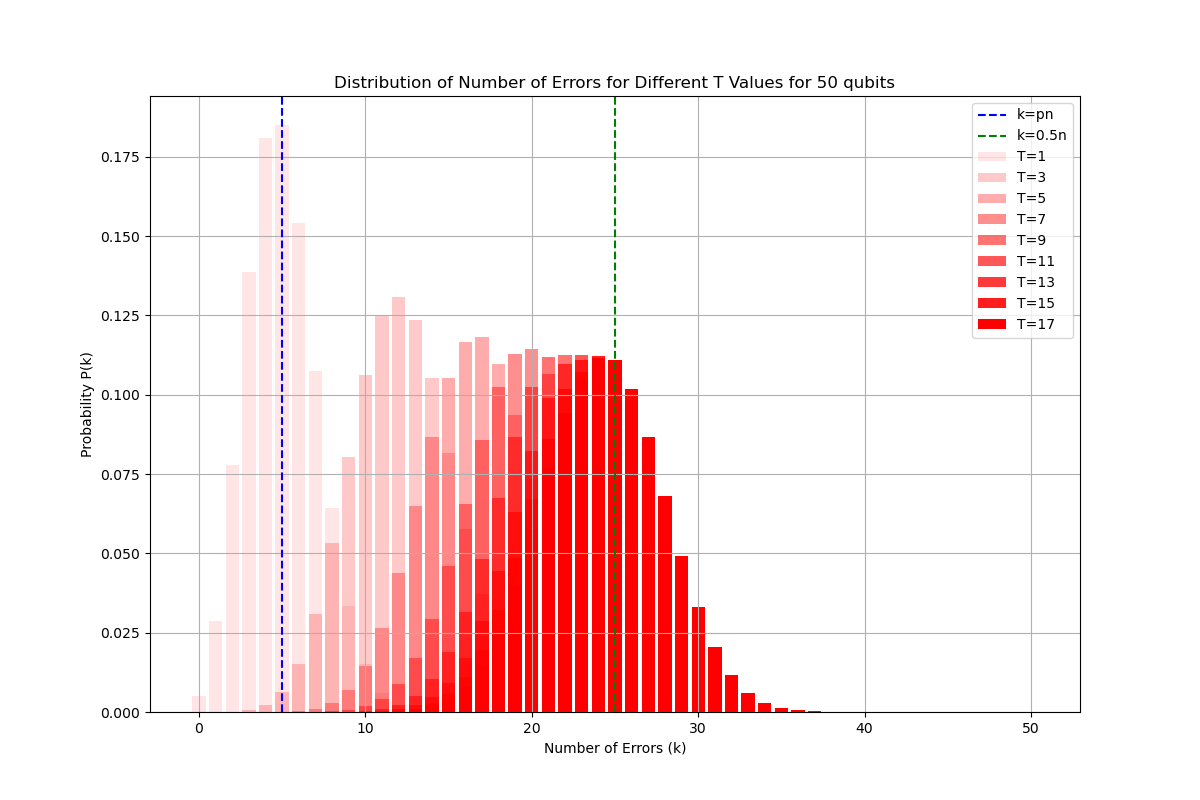}
\end{minipage}
    \caption{The shift of error distribution of $50$ qubits with increasing $T$. When $T$ is small, the expectation value is roughly $E(k)=pn$. However, when $T$ increase, the number of error is dominated by the random parity in \autoref{eq:PoddPeven}, but not the error probability $p$ anymore, the distribution converge to a distribution with expectation value $E(k)=\frac{n}{2}$.}
    \label{fig:50QubitNoCNOT}
\end{figure}

\begin{equation}
        P(\text{even})=\sum_{k=0}^{\lfloor T/2 \rfloor} \binom{T}{2k}p^{2k}(1-p)^{T-2k},\quad
        P(\text{odd})=\sum_{k=0}^{\lfloor T/2 \rfloor} \binom{T}{2k+1}p^{2k+1}(1-p)^{T-2k-1}\\        
\label{eq:PoddPeven}
\end{equation}

The distribution of error number, is then
\begin{equation}
    P(k)=\binom{n}{k} P(\text{even})^{n-k}P(\text{odd})^k
\end{equation}

A simulation of the distribution is shown in \autoref{fig:50QubitNoCNOT}.

\FloatBarrier

\subsection{Error distribution for transversal CNOT gate set}

In a circuit of $n=2r$ qubits with only transversal CNOT gate set satisfies the propagation rule.

\begin{equation}
      Q_k[k]  \xlongrightarrow{\text{Propagate to}} Q_{k+r}[k+1],  \quad \forall  1 \leq k \leq r
      \label{eq:TransversalCNOTrules}
\end{equation}

The propagation graph of \autoref{eq:TransversalCNOTrules} is shown in \autoref{fig:trasverseCNOT}.  The error distributions of the first $r=\frac{n}{2}$ qubits are not affected by the circuit, which still obey \autoref{eq:PoddPeven}. Every qubit $Q_{k+r}$, on the other hand, is only affected by $Q_{k}$. As a result, we can separate our analysis of the error distribution into  different pair of qubits $(Q_k,Q_{k+r})$. The locations of the random errors for each pair $(Q_k,Q_{k+r})$ are divided into three areas:
\begin{enumerate}
    \item The errors occur on qubit $Q_{k+r}$ for $1\leq t\leq T$. We call this area Area 1.
    \item The errors occur on qubit $Q_k$ for $k+1 \leq t \leq n$.We call this area Area 2.
    \item The errors occur on qubit $Q_k$ for $1 \leq t \leq k$.We call this area Area 3.
\end{enumerate}
In the first two areas, errors occur independently on each qubit and do not affect the probability of the other. In the third area, the error can propagate from $Q_k$ to $Q_{k+r}$ and affect the result of $Q_{k+r}$. The $8$ possible result of the error parity of the three areas determine the final error numbers.

\begin{table}[h!]
    \centering
    \begin{tabular}{|c|c|c|c|c|}
         \hline
   Index  & Area 1    &   Area 2 & Area 3 & \#Errors \\
         \hline
   0&  Even    & Even  &  Even & 0\\
         \hline
   1&  Even     &  Even &  Odd & 2\\
         \hline
   2&   Even    & Odd & Even & 1\\
         \hline
   3& Even     & Odd & Odd & 1\\
         \hline
   4& Odd     & Even & Even & 1\\
         \hline
   5&  Odd    & Even & Odd & 1\\
         \hline
   6&  Odd    & Odd & Even & 2\\
         \hline
   7&  Odd    & Odd & Odd & 0\\
         \hline
    \end{tabular}
    \caption{The parity of errors in three different areas and the resultant error number for a qubit pair $(Q_k,Q_{k+r})$ in a transversal circuit.}
    \label{tab:transversalParity}
\end{table}

For Area 1:

\begin{equation}
    P_1(\text{Even})=\sum_{i=0}^{\lfloor T/2 \rfloor} \binom{T}{2i}p^{2i}(1-p)^{T-2i},\quad
    P_1(\text{Odd})=\sum_{i=0}^{\lfloor T/2 \rfloor} \binom{T}{2i+1}p^{2i+1}(1-p)^{T-2i-1}
\end{equation}

For Area 2:

\begin{equation}
    P_2(\text{Even})=\sum_{i=0}^{\lfloor (n-k)/2 \rfloor} \binom{n-k}{2i}p^{2i}(1-p)^{n-k-2i},\quad
    P_2(\text{Odd})=\sum_{i=0}^{\lfloor (n-k)/2 \rfloor} \binom{n-k}{2i+1}p^{2i+1}(1-p)^{n-k-2i-1}
\end{equation}

For Area 3:

\begin{equation}
    P_3(\text{Even})=\sum_{i=0}^{\lfloor k/2 \rfloor} \binom{k}{2i}p^{2i}(1-p)^{k-2i},\quad
    P_3(\text{Odd})=\sum_{i=0}^{\lfloor k/2 \rfloor} \binom{k}{2i+1}p^{2i+1}(1-p)^{k-2i-1}
\end{equation}

The final distribution on qubit $Q_{k}$ and $Q_{k+r}$ is given by:

\begin{equation}
\begin{split}
P(0)&=P_1(\text{Even})P_2(\text{Even})P_3(\text{Even})+P_1(\text{Odd})P_2(\text{Odd})P_3(\text{Odd})\\
P(1)&=P_1(\text{Even})P_2(\text{Odd})P_3(\text{Even})+P_1(\text{Even})P_2(\text{Odd})P_3(\text{Odd})\\
&+P_1(\text{Odd})P_2(\text{Even})P_3(\text{Even})+P_1(\text{Odd})P_2(\text{Even})P_3(\text{Odd})\\
P(2)&=P_1(\text{Even})P_2(\text{Even})P_3(\text{Odd})+P_1(\text{Odd})P_2(\text{Odd})P_3(\text{Even})
\end{split}
\label{eq:transverbasicProb}
\end{equation}

For the whole circuit , we use $a_1,a_2,\cdots a_r$ as the variables representing the number of bitflip errors occur on qubit pairs $(Q_1,Q_{1+r}),(Q_2,Q_{2+r}),(Q_3,Q_{3+r}),\cdots, (Q_r,Q_{2r})$. For any possible pattern of the final error syndrome with error number $K$, the variables satisfy  
\begin{equation}
\begin{cases}
        &a_1 + a_2 + \cdots a_r=k \\ 
        & a_i \in \{0,1,2 \}, \quad  \forall 1 \leq i \leq r
\end{cases}
\label{eq:Probability}
\end{equation}
We use notation $\overline{P}(r,k)$ as the probability of the condition in \autoref{eq:Probability} to be true, which has the folloing induction formula:

\begin{equation}
\begin{split}
    \overline{P}(1,k)&= \begin{cases}
        &  P(k) \qquad   0 \leq k \leq 2\\
        &  0    \qquad     3  \leq k \leq 2r
    \end{cases}\\
    \overline{P}(r,k)&=\overline{P}(r-1,k)P(0)+\overline{P}(r-1,k-1)P(1)+\overline{P}(r-1,k-2)P(2)
\end{split}
\label{eq:transversalInduction}
\end{equation}

By \autoref{eq:transversalInduction}, we can easily get an algorithm for calculating dynamic programming $\overline{P}(r,k)$ with time complexity $O(rk)$ and space complexity $O(r)$. The pseudocode for the algorithm goes as follows:

\begin{algorithm}
\flushleft
  \caption{Algorithm to calculate the error number distribution for transversal CNOT circuit}
  \label{alg:transversalCNOT}
  \begin{algorithmic}[1] 
    \State Initialize array $P[k]$ for $0 \leq  k \leq 2$  \Comment{Store the basic probability with calculated equation.}
    \State Initialize array $X[k]$ for $0 \leq  k \leq 2r$  \Comment{Store the error distribution of previous step}
    \State Initialize array $Y[k]$ for $0 \leq  k \leq 2r$  \Comment{Store the error distribution of current step}    
    \For{$k \in 0,1,2$}
        \State $P[k] \gets \text{Result of \autoref{eq:transverbasicProb}}$    \Comment{Initialize $X[k]$ according to \autoref{eq:transversalInduction}}
    \EndFor
    \For{$k \in 0,1,2$}
        \State $X[k] \gets P[k]$    \Comment{Initialize $X[k]$ according to \autoref{eq:transversalInduction}}
    \EndFor
    \For{$k \in 3,4,5,\cdots,2r$}
        \State $X[k] \gets 0$  \Comment{Initialize $X[k]$ according to \autoref{eq:transversalInduction}}
    \EndFor

    \For{$i \in 2,3,4,r$}
        \For{$j \in 0,1,2,\cdots,2r$}
            \State $Y[j]=X[j]P[0]+X[j-1]P[1]+X[j-2]P[2]$ \Comment{Update $Y[k]$ according to \autoref{eq:transversalInduction}}
        \EndFor
        \For{$j \in 0,1,2,\cdots,2r$}
            \State $X[j]=Y[j]$  \Comment{Reset $X[k]$ as the previous $Y[k]$.}
        \EndFor        
    \EndFor
  \end{algorithmic}
\end{algorithm}

A simulation result of the error distribution for $1000$ qubits with $p=0.001$ is provided in \autoref{fig:distributionShift}. This clear evidence of increased number of errors demonstrates the negative effect of error propagations in general circuit.

\begin{figure}[h!]
\centering
\begin{minipage}{0.4\textwidth}
     \centering
\resizebox{0.8\columnwidth}{!}
{
\begin{quantikz}
 \ket{0}        & \gate[style={fill=red!80}]{X}\slice{$T=1$} & \ctrl{3}& \slice{\quad $T=2$}    &  & \slice{\quad $T=3$}  &   & \slice{\quad $T=4$} &  &&&  \\
 \ket{0}        &  & &  & \ctrl{3} &  & & & & & \\
 \ket{0}       &   &         &  &   &  & \ctrl{3} & & & & \\
 \ket{0}        &  & \targ{} &  &          & & &  &  & & \\
 \ket{0}        &  &  &      &\targ{}   &  & & &  & &   \\
 \ket{0}        &  &  &      &          &  & \targ{}& &    & &    
\end{quantikz}
}
\end{minipage}
\begin{minipage}{0.4\textwidth}
     \centering
    \includegraphics[width=\linewidth]{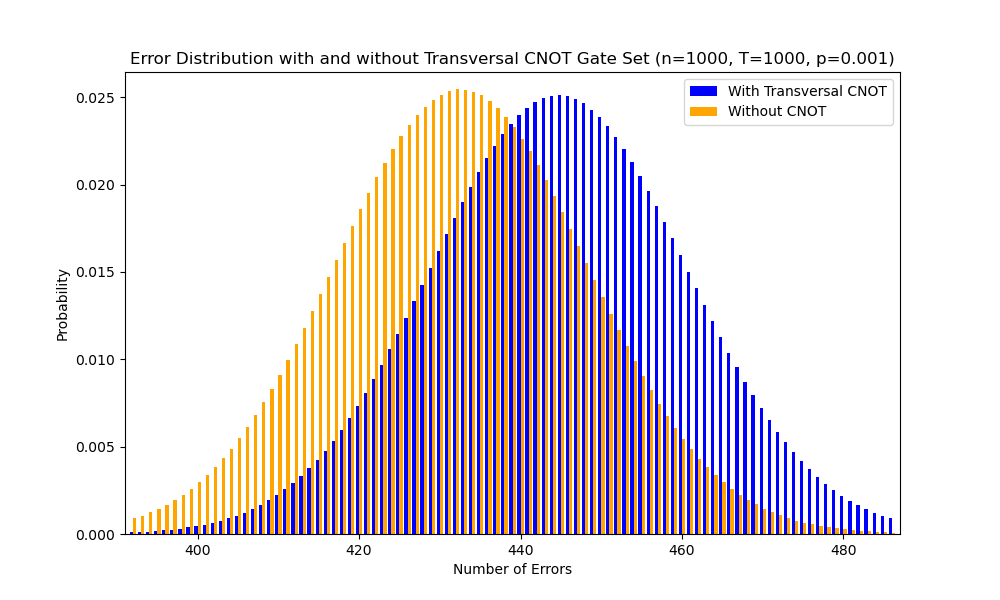}
\end{minipage}
    \caption{The simulation result of comparing the distribution of the number of error with and without transversal $CNOT$ gate. It is obvious that the distribution is shifted by the transversal $CNOT$ gate.}
\label{fig:trasverseCNOT}
\end{figure}

\FloatBarrier

\section{Exact solution based on propagation graph}

We use the symbol $\mathbf{C}$ for the circuit we are considering. Any $CNOT$ gate in the circuit is denoted as $CNOT(i,j)[t]$, where $i$ is the control qubit index and $j$ is the target qubit index and $t$ is the time index, denoting when it is acted on the circuit.

\begin{definition}(Error propagation space-time graph(\textbf{EPSTG}))
    \label{df:propgraph}
    Given an $n$ qubit quantum circuit $\mathbf{C}$ composed of $m$ $CNOT$ gates with total time windows $T$, the bit-flip probability $p$ in each time window,  we define the propagation graph time space graph $\mathbf{G}(\mathbf{C},n,m,T)[t_1,t_2,p], 1 \leq t_1 < t_2 \leq T$, which is a directed graph defined as: \begin{enumerate}
        \item A set of vertices  $\mathcal{V}(\mathbf{C},n,m,T)[t_1,t_2]=\{Q_k[t] | 1 \leq k \leq n, t_1 \leq t < t_2 \}$, each assigned to a qubit with the same index at time $t$. There are $n(t_2-t_1)$ vertices in total.
        \item A directed edge set $\mathcal{E}(\mathbf{C},n,m,T)[t_1,t_2]=\{(Q_i[t],Q_j[t+1]) \quad | \quad \exists t_1 \leq t < t_2 , s.t. \text{Either } i=j \quad \text{or} \quad CNOT(i,j)[t] \in \mathbf{C} \}$. The direction of $(Q_i[t],Q_j[t+1])$ is from $Q_i$ to $Q_j$. 
    \end{enumerate}
\end{definition}

\begin{figure}[!htb]
\fbox{
   \begin{minipage}{0.48\textwidth}
     \centering
\resizebox{0.7\columnwidth}{!}
{
\begin{quantikz}
  Q_1        & \slice{$T=1$} & \ctrl{3}& \slice{\quad $T=2$}    &  & \slice{\quad $T=3$}  &   & \slice{\quad $T=4$} &  &&  \\
 Q_2         &  & &  & \ctrl{3} &  & & & & & \\
 Q_3       &   &         &  &   &  & \ctrl{3} & & & & \\
  Q_4        &  & \targ{} &  &          & & &  &  & & \\
  Q_5        &  &  &      &\targ{}   &  & & &  & &   \\
  Q_6       &  &  &      &          &  & \targ{}& &    & &    
\end{quantikz}
}
     \caption{An example of six qubit transversal circuit $\textbf{C}$. Now the set of vertices is $\mathcal{V}(\mathbf{C},n,m,T)[t_1,t_2]=\{Q_k[t] | 1 \leq k \leq 6, 1 \leq t < 4\}$}\label{fig:Circuit}
   \end{minipage}}\hfill
\fbox{
   \begin{minipage}{0.48\textwidth}
     \centering
\resizebox{\columnwidth}{!}
{
\begin{tikzpicture}
  \tikzstyle{every node}=[draw, minimum size=1cm, inner sep=0.3cm]

  \node (Q1) at (0,0) {$Q_1[1]$};
  \node (Q2) at (0,-1.5) {$Q_1[2]$};
  \node (Q3) at (0,-3) {$Q_1[3]$};
  \node (Q4) at (0,-4.5) {$Q_1[4]$};

  \node (Q6) at (2,0) {$Q_2[1]$};
  \node (Q7) at (2,-1.5) {$Q_2[2]$};
  \node (Q8) at (2,-3) {$Q_2[3]$};
  \node (Q9) at (2,-4.5) {$Q_2[4]$};

  \node (Q11) at (4,0) {$Q_3[1]$};
  \node (Q12) at (4,-1.5) {$Q_3[2]$};
  \node (Q13) at (4,-3) {$Q_3[3]$};
  \node (Q14) at (4,-4.5) {$Q_3[4]$};

  \node (Q16) at (6,0) {$Q_4[1]$};
  \node (Q17) at (6,-1.5) {$Q_4[2]$};
  \node (Q18) at (6,-3) {$Q_4[3]$};
  \node (Q19) at (6,-4.5) {$Q_4[4]$};

  \node (Q21) at (8,0) {$Q_5[1]$};
  \node (Q22) at (8,-1.5) {$Q_5[2]$};
  \node (Q23) at (8,-3) {$Q_5[3]$};
  \node (Q24) at (8,-4.5) {$Q_5[4]$};

  \node (Q26) at (10,0) {$Q_6[1]$};
  \node (Q27) at (10,-1.5) {$Q_6[2]$};
  \node (Q28) at (10,-3) {$Q_6[3]$};
  \node (Q29) at (10,-4.5) {$Q_6[4]$};

  \draw[->] (Q1) -- (Q2);
  \draw[->] (Q2) -- (Q3);
  \draw[->] (Q3) -- (Q4);
  
  \draw[->] (Q1) -- (Q17);

  \draw[->] (Q6) -- (Q7);
  \draw[->] (Q7) -- (Q8);
  \draw[->] (Q8) -- (Q9);

  \draw[->] (Q7) -- (Q23);

  \draw[->] (Q11) -- (Q12);
  \draw[->] (Q12) -- (Q13);
  \draw[->] (Q13) -- (Q14);

  \draw[->] (Q13) -- (Q29);

  \draw[->] (Q16) -- (Q17);
  \draw[->] (Q17) -- (Q18);
  \draw[->] (Q18) -- (Q19);

  \draw[->] (Q21) -- (Q22);
  \draw[->] (Q22) -- (Q23);
  \draw[->] (Q23) -- (Q24);

  \draw[->] (Q26) -- (Q27);
  \draw[->] (Q27) -- (Q28);
  \draw[->] (Q28) -- (Q29);
\end{tikzpicture}
}
\caption{An example of \textbf{EPSTG} graph of six qubit transversal circuit.}\label{fig:EPSTG}
   \end{minipage}
}
\end{figure}
An example of quantum circuit $\textbf{C}$ and the corresponding $\textbf{EPSTG}$ is demonstrated in \autoref{fig:Circuit} and \autoref{fig:EPSTG}.

By \autoref{df:propgraph}, it is obvious that $\mathbf{G}(\mathbf{C},n,m,T)[t_1,t_2,p] \subseteq \mathbf{G}(\mathbf{C},n,m,T)[t_1',t_2',p]$ for any $1\leq t_1' \leq t_1 < t_2 \leq t_2' \leq n$.

The edges set in propagation graph is the abstraction of the behavior bitflip error propagation by $CNOT$ gate. An error at qubit $Q_i$ will  stay in the circuit until the next time window, embodied by the directed edge $(Q_i[t],Q_i[t+1])$. When it exists at the control qubit before a CNOT gate $CNOT(i,j)$, it can be propagated from $Q_i[t]$ to $Q_j[t+1]$ as well. Note that we are not considering the behavior of errors cancelling each others at this stage, this is formalized in the next lemma.

\begin{lemma}(Propagation rules)
    \label{lem:CNOTprop}
    In a propagation graph $\mathbf{G}(\mathbf{C},n,m,T)[t,t+1,p]$, any error occurring at $Q_i[t]$ can propagate to $Q_j[t+1]$ if and only if $(Q_i[t],Q_j[t+1]) \in \mathcal{E}(\mathbf{C},n,m,T)[t_1,t_2]$.
    \label{lem:propCNOT}
\end{lemma}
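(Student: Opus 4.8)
The plan is to establish the biconditional by a direct case analysis grounded in the single-step Pauli propagation rules tabulated in \autoref{eq:cnotconj}. The key observation is that because $\mathbf{C}$ contains only $CNOT$ gates and we track only bitflip ($\hat{X}$) errors, the evolution of a single $\hat{X}$ error across the window $[t,t+1]$ is governed by exactly two of those rules: $\hat{CNOT}_{i,j}(\hat{X}_i \otimes \hat{I}_j)\hat{CNOT}_{i,j} = \hat{X}_i \otimes \hat{X}_j$ (an $\hat{X}$ on a control spreads to both control and target) and $\hat{CNOT}_{i,j}(\hat{I}_i \otimes \hat{X}_j)\hat{CNOT}_{i,j} = \hat{I}_i \otimes \hat{X}_j$ (an $\hat{X}$ on a target is unchanged). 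Combining these with the trivial fact that an error on a qubit touched by no gate in the window simply persists, the set of qubits reachable from $Q_i[t]$ in one step is precisely $\{i\} \cup \{j : CNOT(i,j)[t] \in \mathbf{C}\}$. First I would record this reachable set as the central computation, invoking \autoref{eq:errorprop} to justify that conjugation by the gate is exactly error propagation.

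For the backward direction ($\Leftarrow$), I would assume $(Q_i[t],Q_j[t+1]) \in \mathcal{E}(\mathbf{C},n,m,T)[t_1,t_2]$ and unfold \autoref{df:propgraph}: either $i=j$ or $CNOT(i,j)[t]\in\mathbf{C}$. In the first case the error persists on its own wire, so it reaches $Q_i[t+1]=Q_j[t+1]$; in the second case $i$ is the control of a $CNOT$ in this window, so the control rule above deposits an $\hat{X}$ on the target $j$, giving propagation to $Q_j[t+1]$. Hence every edge corresponds to a genuine one-step propagation.

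For the forward direction ($\Rightarrow$), I would assume an $\hat{X}$ error at $Q_i[t]$ can reach $Q_j[t+1]$, i.e. the conjugated error has support on qubit $j$. By the reachable-set computation, $j$ must lie in $\{i\}\cup\{j' : CNOT(i,j')[t]\in\mathbf{C}\}$, so either $j=i$ or $CNOT(i,j)[t]\in\mathbf{C}$; either way the edge $(Q_i[t],Q_j[t+1])$ is present in $\mathcal{E}$ by \autoref{df:propgraph}.

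The main obstacle is making the forward direction airtight, i.e. arguing that there is no other channel by which an $\hat{X}$ error can move in a single window. This reduces to two checks against \autoref{eq:cnotconj}: that an $\hat{X}$ on a target never propagates back to the control, and that no $\hat{X}$ is ever created on a qubit that is neither $i$ nor a target of a $CNOT$ controlled by $i$. Both follow by inspection of the table, but they rely on the standing assumption that within a single time window each qubit participates in at most one $CNOT$ gate, so that the per-qubit rules compose without interference; I would state this assumption explicitly before the case analysis. Since the window has width one, no cancellation ($\hat{X}\hat{X}=\hat{I}$) can occur, so reachability and actual propagation coincide here, and the multi-step cancellation behavior is deferred to the subsequent development.
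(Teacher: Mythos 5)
Your proposal is correct, and it is actually more complete than the paper's own proof. The paper's argument is one-directional: it unfolds \autoref{df:propgraph} into the two cases $i=j$ and $CNOT(i,j)[t]\in\mathbf{C}$ and appeals to the pictorial rule in \autoref{fig:ErrorPropagationRule} to conclude that each edge supports a one-step propagation --- i.e., only the ``if'' half of the biconditional is argued, and the ``only if'' half is never addressed. You supply exactly the missing half via the reachable-set computation grounded in \autoref{eq:cnotconj}: an $\hat{X}$ on a control spreads to control and target, an $\hat{X}$ on a target stays put, and no $\hat{X}$ support is created anywhere else, so one-step propagation from $Q_i[t]$ can land only on $\{i\}\cup\{j : CNOT(i,j)[t]\in\mathbf{C}\}$, which coincides with the edge set by definition. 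Working from the algebraic conjugation table rather than the figure also lets you surface two hypotheses the paper leaves implicit: that each qubit participates in at most one $CNOT$ per time window (so the per-qubit rules compose without interference), and that no $\hat{X}\hat{X}=\hat{I}$ cancellation arises within a width-one window --- the latter consistent with the paper's remark just before the lemma that cancellation is deferred. What the paper's brevity buys is speed; what your version buys is that the biconditional is genuinely proved, which matters downstream, since the reverse implication in \autoref{thm:proppath} (``when an error propagates, such a path must exist'') and hence \autoref{lem:RSGContain} silently rely on precisely the ``only if'' direction you made airtight.
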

\begin{proof}
By the definition of directed edge set in \autoref{df:propgraph},$(Q_i[t],Q_j[t+1]) \in \mathcal{E}(\mathbf{C},n,m,T)[t_1,t_2]$ on if $i=j$ or there is a $CNOT(i,j)$ gate connecting $Q_i$ and $Q_j$ at time $t$. In the first case, when $i=j$, it is trivial that the error can be passed to the same qubit itself. In the second case, from \autoref{fig:ErrorPropagationRule} we know that the error can be propagated from the control qubit to the target qubit. 
\end{proof}

\begin{figure}[!htb]
\fbox{
   \begin{minipage}{0.48\textwidth}
     \centering
\resizebox{0.9\columnwidth}{!}
{
\begin{quantikz}
  Q_1  & \gate[style={fill=red!80}]{X}     & \slice{$T=1$} & \ctrl{3}& \slice{\quad $T=2$}    &  & \slice{\quad $T=3$}  &   & \slice{\quad $T=4$} &  &&  \\
 Q_2  &       &  & &  & \ctrl{3} &  & & & & & \\
 Q_3  &     &   &         &  &   &  & \ctrl{3} & & & & \\
  Q_4 &       &  & \targ{}&  \gate[style={fill=red!80}]{X} &          & \gate[style={fill=red!80}]{X}& & \gate[style={fill=red!80}]{X} &  & & \\
  Q_5 &       &  &  &      &\targ{}   &  & & &  & &   \\
  Q_6 &      &  &  &      &          &  & \targ{}& &    & &    
\end{quantikz}
}
     \caption{In this example, a random bitflip error at $Q_1[1]$ will propagate to $Q_4[2]$,$Q_4[3]$ and $Q_4[4]$.}\label{fig:CircuitProp}
   \end{minipage}}\hfill
\fbox{
   \begin{minipage}{0.48\textwidth}
     \centering
\resizebox{\columnwidth}{!}
{
\begin{tikzpicture}
  \tikzstyle{every node}=[draw, minimum size=1cm, inner sep=0.3cm]

  \node[fill=red] (Q1) at (0,0) {$Q_1[1]$};
  \node (Q2) at (0,-1.5) {$Q_1[2]$};
  \node (Q3) at (0,-3) {$Q_1[3]$};
  \node (Q4) at (0,-4.5) {$Q_1[4]$};

  \node (Q6) at (2,0) {$Q_2[1]$};
  \node (Q7) at (2,-1.5) {$Q_2[2]$};
  \node (Q8) at (2,-3) {$Q_2[3]$};
  \node (Q9) at (2,-4.5) {$Q_2[4]$};

  \node (Q11) at (4,0) {$Q_3[1]$};
  \node (Q12) at (4,-1.5) {$Q_3[2]$};
  \node (Q13) at (4,-3) {$Q_3[3]$};
  \node (Q14) at (4,-4.5) {$Q_3[4]$};

  \node (Q16) at (6,0) {$Q_4[1]$};
  \node[fill=red] (Q17) at (6,-1.5) {$Q_4[2]$};
  \node[fill=red] (Q18) at (6,-3) {$Q_4[3]$};
  \node[fill=red] (Q19) at (6,-4.5) {$Q_4[4]$};

  \node (Q21) at (8,0) {$Q_5[1]$};
  \node (Q22) at (8,-1.5) {$Q_5[2]$};
  \node (Q23) at (8,-3) {$Q_5[3]$};
  \node (Q24) at (8,-4.5) {$Q_5[4]$};

  \node (Q26) at (10,0) {$Q_6[1]$};
  \node (Q27) at (10,-1.5) {$Q_6[2]$};
  \node (Q28) at (10,-3) {$Q_6[3]$};
  \node (Q29) at (10,-4.5) {$Q_6[4]$};

  \draw[->] (Q1) -- (Q2);
  \draw[->] (Q2) -- (Q3);
  \draw[->] (Q3) -- (Q4);
  
  \draw[->,red] (Q1) -- (Q17);

  \draw[->] (Q6) -- (Q7);
  \draw[->] (Q7) -- (Q8);
  \draw[->] (Q8) -- (Q9);

  \draw[->] (Q7) -- (Q23);

  \draw[->] (Q11) -- (Q12);
  \draw[->] (Q12) -- (Q13);
  \draw[->] (Q13) -- (Q14);

  \draw[->] (Q13) -- (Q29);

  \draw[->] (Q16) -- (Q17);
  \draw[->,red] (Q17) -- (Q18);
  \draw[->,red] (Q18) -- (Q19);

  \draw[->] (Q21) -- (Q22);
  \draw[->] (Q22) -- (Q23);
  \draw[->] (Q23) -- (Q24);

  \draw[->] (Q26) -- (Q27);
  \draw[->] (Q27) -- (Q28);
  \draw[->] (Q28) -- (Q29);
\end{tikzpicture}
}
\caption{The error propagation of $\textbf{C}$ in \autoref{fig:CircuitProp} has its equivalent path in its \textbf{EPSTG} graph.}\label{fig:Prop}
   \end{minipage}
}
\end{figure}

\begin{theorem}(Propagation through connected path)
    \label{thm:proppath}
    In a propagation graph $\mathbf{G}(\mathbf{C},n,m,T)[t_1,t_2,p]$, any error occurring at $Q_i[t_1]$ can propagate to $Q_j[t_2]$ if and only if there is a connection path between $Q_i[t_1]$ and $Q_j[t_2]$ in $\mathbf{G}(\mathbf{C},n,m,T)[t_1,t_2,p]$.
    \label{lem:proppath}
\end{theorem}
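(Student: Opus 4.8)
The plan is to prove both directions at once by induction on the temporal distance $d = t_2 - t_1$, exploiting the fact that the \textbf{EPSTG} is a layered directed graph: by \autoref{df:propgraph} every edge runs from a vertex in layer $t$ to a vertex in layer $t+1$. Consequently any connection path from $Q_i[t_1]$ to $Q_j[t_2]$ is forced to have length exactly $d$ and to visit precisely one vertex in each intermediate layer, giving a sequence $Q_i[t_1]=Q_{k_0}[t_1], Q_{k_1}[t_1+1],\dots,Q_{k_d}[t_2]=Q_j[t_2]$ in which every consecutive pair is an edge. The base case $d=1$ is then nothing but \autoref{lem:CNOTprop}, which already supplies the single-step equivalence between one-step propagation and the existence of the edge $(Q_i[t_1],Q_j[t_1+1])$.

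For the inductive step I assume the claim for distance $d-1$ on every pair of layers. In the $(\Leftarrow)$ direction, given a path of length $d$ I isolate its first edge $(Q_i[t_1],Q_{k_1}[t_1+1])$; by \autoref{lem:CNOTprop} an error seeded at $Q_i[t_1]$ propagates to $Q_{k_1}[t_1+1]$, and the remaining sub-path of length $d-1$ witnesses, via the induction hypothesis applied on $\mathbf{G}(\mathbf{C},n,m,T)[t_1+1,t_2,p]$, that the error at $Q_{k_1}[t_1+1]$ propagates to $Q_j[t_2]$. Composing the two stages yields propagation along the whole path, where the subgraph containment $\mathbf{G}(\mathbf{C},n,m,T)[t_1+1,t_2,p]\subseteq \mathbf{G}(\mathbf{C},n,m,T)[t_1,t_2,p]$ legitimizes the appeal to the hypothesis on the later layers.

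For the $(\Rightarrow)$ direction I use that cancellations are explicitly excluded at this stage (as noted just before \autoref{lem:CNOTprop}), so propagation is monotone and traceable. If an error seeded at $Q_i[t_1]$ reaches $Q_j[t_2]$, then at the first time step it must, by the ``only if'' half of \autoref{lem:CNOTprop}, have been propagated to some qubit $Q_{k_1}[t_1+1]$ joined to $Q_i[t_1]$ by an edge; from $Q_{k_1}[t_1+1]$ the error must still reach $Q_j[t_2]$ over the remaining $d-1$ steps, so the induction hypothesis supplies a path from $Q_{k_1}[t_1+1]$ to $Q_j[t_2]$, which prepended with the first edge gives the desired connection path.

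The main obstacle I expect is this forward direction: I must argue that an error can appear at $Q_j[t_2]$ only as the image of a genuine propagation chain rather than arising with no incoming edge, which rests entirely on \autoref{lem:CNOTprop} being a true biconditional at each layer together with the standing assumption that cancellation is not yet modeled, so that graph reachability and physical propagation coincide exactly. To make this airtight I would first fix a precise meaning for the predicate ``can propagate'' — the existence of a single seeded $\hat{X}$ at $Q_i[t_1]$ whose deterministic Clifford evolution places an $\hat{X}$ on $Q_j$ at time $t_2$ — and then carry the induction through with that definition held fixed.
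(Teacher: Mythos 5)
Your proposal is correct and takes essentially the same route as the paper's proof: both directions rest on chaining the single-step equivalence of \autoref{lem:CNOTprop} across successive time layers, which the paper does informally edge-by-edge along the path and you formalize as an induction on the temporal distance $t_2-t_1$. Your handling of the forward direction is actually more careful than the paper's one-sentence dismissal --- in particular, fixing a precise meaning for ``can propagate'' and invoking the exclusion of cancellations to identify reachability with propagation --- but this is added rigor within the same argument, not a different approach.
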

\begin{proof}
Assume the path of length $k$ connecting $Q_i[t_1]$ and $Q_j[t_2]$ is $e_1=(Q_i[t_1],R_1[t_1+1]),e_2=(R_1[t_1+1],R_1[t_1+2]),\cdots,e_{t_2-t_1}=(R_{t_2-t_1}[t_2-1],Q_j[t_2])$. By applying \autoref{lem:CNOTprop} for each edge in the path, we know that the error will propagate from $Q_i[t_1]$ to $Q_j[t_2]$.  On the other hand, when an error propagates from $Q_i[t_1]$ to $Q_j[t_2]$, such path between neighboring time windows must exist.
\end{proof}

For our problem, we are interested in the error distribution at $t=T$. More specifically, we care about how many errors are left after propagation in the subset of vertices $\{Q_k[T]| 1 \leq k\leq n\}$. From now on we call these vertices syndrome vertices. This implies that syndrome vertices at the end of the propagation should be dealt with in a separate way.

\begin{definition}(Syndrome vertices and Source vertices)
    \label{df:syndromevertices}
    We define the syndrome vertices of $\mathbf{G}(\mathbf{C},n,m,T)[t_1,t_2,p], 1 \leq t_1 < t_2 \leq T$ as $\mathbf{Syn(\mathbf{G})}=\{Q_k[T]| 1 \leq k\leq n\}$. All other vertices are called source vertices, represented as $\mathbf{Source(\mathbf{G})}$ , because they might be the source of error of syndrome vertices. 
\end{definition}

\begin{definition}(Reverse spanning graph(\textbf{RSG}) of a syndrome vertex)
    \label{df:RSG}
    For every syndrome vertex $Q_k[T]$ in $\mathbf{G}(\mathbf{C},n,m,T)[t_1,t_2,p], 1 \leq t_1 < t_2 \leq T$, we define its reverse spanning graph(\textbf{RSG}($Q_k[T]$)) as 
    \begin{enumerate}
        \item The set of vertices is the subset of all the vertices in the space-time propagation graph $\textbf{G}$ that have a path connected to $Q_k[T]$, and the vertex $Q_k[T]$ itself.
        \item The directed edge set is $(Q_i[t],Q_k[T])$ for all $Q_i[t]$ in $\textbf{G}$ that have a path connected to $Q_k[T]$. 
        \end{enumerate}
    We use the term reverse in this definition because the subgraph is spanned by $Q_k[T]$ in the reverse direction of edges.
\end{definition}

\begin{figure}[h!]
\fbox{
\begin{minipage}{0.3\textwidth}
     \centering
\resizebox{0.9\columnwidth}{!}
{
\begin{quantikz}
  Q_1        & \slice{$T=1$} & \ctrl{3}& \slice{\quad $T=2$}    &  & \slice{\quad $T=3$}  &   & \slice{\quad $T=4$} &  &&  \\
 Q_2         &  & &  & \ctrl{3} &  & & & & & \\
 Q_3       &   &         &  &   &  & \ctrl{3} & & & & \\
  Q_4        &  & \targ{} &  &          & & &  &  & & \\
  Q_5        &  &  &      &\targ{}   &  & & &  & &   \\
  Q_6       &  &  &      &          &  & \targ{}& &    & &   
\end{quantikz}
}
 \caption{The six qubit transversal circuit.}\label{fig:CircuitOriginal}
\end{minipage}
}
\hfill
\fbox{
\begin{minipage}{0.3\textwidth}
     \centering
\resizebox{0.9\columnwidth}{!}
{
    \begin{tikzpicture}
    \tikzstyle{every node}=[draw, minimum size=1cm, inner sep=0.1cm]
    \node (A1) at (0,-1) {$Q_1[1]$};
    \node (A2) at (0,-3) {$Q_4[1]$};
    \node (A3) at (1.5,-3) {$Q_4[2]$};
    \node (A4) at (3,-3) {$Q_4[3]$};
    \node (A5) at (4.5,-3) {$Q_4[4]$};
    \node (A6) at (6,-3) {$Q_4[5]$};
    \draw[->]  (A1) -- (A3);
    \draw[->]  (A2) -- (A3);
    \draw[->]  (A3) -- (A4);
    \draw[->]  (A4) -- (A5);
    \draw[->]  (A5) -- (A6);
    \end{tikzpicture}
}
\caption{We are considering $\textbf{RSG}(Q_4[5])$. We take the subgraph of $\textbf{EPSTG}$ with all vertices that have at least one path with $\textbf{RSG}(Q_4[5])$.}
\label{fig:Q45}
\end{minipage}
}
\hfill
\fbox{
\begin{minipage}{0.3\textwidth}
     \centering
\resizebox{0.5\columnwidth}{!}
{
\begin{tikzpicture}
\tikzstyle{every node}=[draw, minimum size=1cm, inner sep=0.1cm]
\node (A1) at (0,-0.5) {$Q_1[1]$};
\node (A2) at (0,-2) {$Q_4[1]$};
\node (A3) at (0,-3.5) {$Q_4[2]$};
\node (A4) at (0,-5) {$Q_4[3]$};
\node (A5) at (0,-6.5) {$Q_4[4]$};
\node (A6) at (3,-3.5) {$Q_4[5]$};
\draw[->]  (A1) -- (A6);
\draw[->]  (A2) -- (A6);
\draw[->]  (A3) -- (A6);
\draw[->]  (A4) -- (A6);
\draw[->]  (A5) -- (A6);
\end{tikzpicture}
}
\caption{$\textbf{RSG}(Q_4[5])$}
\label{fig:RSGOfCircuit}
\end{minipage}
}
\caption{The example of a $\textbf{RSG}$ of a transversal circuit with $6$ qubit.}
\label{fig:RSGQ45}
\end{figure}

For the $\textbf{RSG}$ graph of a transversal circuit, an example of $6$ qubits is provided in \autoref{fig:RSGQ45}. \autoref{fig:CircuitOriginal} depict the $6$-qubit transversal circuit, \autoref{fig:Q45} gives the subgraph in the corresponding $\textbf{EPSTG}$ where all vertices have some path connected with $Q_4[5]$. According to \autoref{df:RSG}, \autoref{fig:RSGQ45} is $\textbf{RSG}(Q_4[5])$. 

Next, we further connect $\textbf{RSG}$ with the physical property of error propagation, that we concern about, in the following lemma.

\begin{lemma}
    \label{lem:RSGContain}
    \textbf{RSG}($Q_k[T]$) contains all vertices in the space-time graph of error propagation from which a bit flip error can propagate to the syndrome vertex $Q_k[T]$.
\end{lemma}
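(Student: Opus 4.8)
The plan is to obtain this lemma as an immediate consequence of \autoref{thm:proppath} together with the defining property of the vertex set in \autoref{df:RSG}. The key observation is that both the membership condition for \textbf{RSG}$(Q_k[T])$ and the physical statement about which vertices can seed an error reaching $Q_k[T]$ are governed by the \emph{same} combinatorial object: the existence of a directed path terminating at $Q_k[T]$ in the \textbf{EPSTG} $\mathbf{G}$. So the proof amounts to matching two characterizations of one set.

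First I would unpack \autoref{df:RSG}: by definition the vertex set of \textbf{RSG}$(Q_k[T])$ consists of $Q_k[T]$ together with every vertex $Q_i[t]$ of $\mathbf{G}$ that is joined to $Q_k[T]$ by a directed path. I would stress that the word ``reverse'' refers only to the orientation in which the edges of the \textbf{RSG} are redrawn pointing into $Q_k[T]$; the connectivity criterion defining membership is still phrased in terms of ordinary directed paths in $\mathbf{G}$. Next I would invoke \autoref{thm:proppath}, which states that a bit-flip error placed at an arbitrary vertex $Q_i[t]$ propagates to $Q_k[T]$ if and only if there is a connection path from $Q_i[t]$ to $Q_k[T]$ in $\mathbf{G}$. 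Letting $S$ denote the set of all vertices from which a bit-flip error can propagate to $Q_k[T]$, \autoref{thm:proppath} identifies $S$ with the set of vertices having a directed path to $Q_k[T]$, which is exactly the vertex set of \textbf{RSG}$(Q_k[T])$ (the vertex $Q_k[T]$ itself being included trivially, as it reaches itself). Therefore every element of $S$ lies in \textbf{RSG}$(Q_k[T])$, which is the claimed containment.

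I do not anticipate a genuine obstacle, since the statement is essentially a bookkeeping corollary of the earlier theorem. The one point that deserves explicit care is keeping the edge orientations consistent: \autoref{thm:proppath} is stated for forward-directed paths in $\mathbf{G}$, whereas the \textbf{RSG} gathers these source vertices and re-draws each as a single edge pointing into the syndrome vertex. I would make clear that this re-orientation does not alter the vertex-membership test, so that the ``path connected to $Q_k[T]$'' appearing in \autoref{df:RSG} coincides precisely with the ``connection path between $Q_i[t]$ and $Q_k[T]$'' supplied by \autoref{thm:proppath}. With that identification in place, the lemma follows directly.
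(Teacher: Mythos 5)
Your proposal is correct and follows essentially the same route as the paper's own proof: both identify the set of error sources with the vertex set of \textbf{RSG}($Q_k[T]$) by combining \autoref{thm:proppath} with \autoref{df:RSG}. Your additional remark that the reverse re-orientation of edges does not affect the vertex-membership test is a harmless clarification the paper leaves implicit.
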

\begin{proof}
By \autoref{thm:proppath}, the set of all vertices that can propagate the error to $Q_k[T]$ is exactly the same set of all vertices that have a path with $Q_k[T]$ in $\mathbf{G}(\mathbf{C},n,m,T)[t_1,t_2,p], 1 \leq t_1 < t_2 \leq T$. By \autoref{df:RSG}, this is exactly our definition of \textbf{RSG}($Q_k[T]$). 
\end{proof}

To decide whether there is a bitflip error on one of the syndrome qubit, we have the following theorem.

\begin{theorem}(The parity of \textbf{RSG} determine the error)
    \label{thm:RSGthm}
    For any syndrome vertex $Q_k[T]$ in $\mathbf{G}(\mathbf{C},n,m,T)[t_1,t_2,p], 1 \leq t_1 < t_2 \leq T$, there is an error on $Q_k[T]$ in after error propagation by  $\mathbf{C}$  if an only if there are odd number of errors in \textbf{RSG}($Q_k[T]$) propagated to $Q_k[T]$, or equivalently, the degree of vertex $Q_k[T]$ in \textbf{RSG}($Q_k[T]$) is odd. 
\end{theorem}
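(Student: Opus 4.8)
The plan is to exploit the single algebraic fact that a bit-flip error is the Pauli operator $\hat{X}$ and that $\hat{X}^2=\hat{I}$, so that errors arriving at a common qubit combine by parity rather than by count. Concretely, I would track the net operator deposited on the syndrome qubit $Q_k[T]$ by the full error pattern. Because all the $\hat{X}$ operators act on the same single qubit $Q_k$ and hence mutually commute, the order in which arriving errors are accumulated is irrelevant, and the net effect is simply $\hat{X}_k^{\,m}$ where $m$ is the total number of $\hat{X}$-errors that reach $Q_k[T]$.

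First I would fix an arbitrary error pattern on the source vertices and determine, for each source vertex carrying an error, how that error flows to $Q_k[T]$. By \autoref{lem:CNOTprop} each edge of the EPSTG transmits (and, at a CNOT control, duplicates) an $\hat{X}$ error one time step forward, and by \autoref{thm:proppath} an error at a source vertex reaches $Q_k[T]$ exactly when a directed path joins them. By \autoref{lem:RSGContain}, every vertex from which an error can possibly reach $Q_k[T]$ already lies in $\textbf{RSG}(Q_k[T])$, so no arriving error is missed by restricting attention to the RSG. Summing the contributions, the operator finally seen at $Q_k[T]$ is $\hat{X}_k^{\,m}$ with $m$ the number of error-carrying paths terminating at $Q_k[T]$.

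Next I would invoke $\hat{X}^2=\hat{I}$: one has $\hat{X}_k^{\,m}=\hat{I}$ when $m$ is even and $\hat{X}_k^{\,m}=\hat{X}_k$ when $m$ is odd. Hence a surviving error is present on $Q_k[T]$ if and only if $m$ is odd, which is exactly the parity condition in the statement. To obtain the degree reformulation, I would note that in $\textbf{RSG}(Q_k[T])$ every contributing source is joined to $Q_k[T]$ by an edge, so the number of incoming error edges at $Q_k[T]$ — its in-degree restricted to the error-bearing sources — equals $m$; the parity of this restricted degree therefore decides the error, giving both directions of the equivalence simultaneously.

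The hard part will be the bookkeeping of reconvergent paths. A single source error can split at a CNOT control into two errors that later recombine at one vertex and annihilate, so the honest invariant is the number of distinct error-carrying paths into $Q_k[T]$ taken modulo $2$, not merely the number of error-bearing source vertices. I would make this precise by modeling the whole propagation as a linear map over $\mathbb{F}_2$, in which each source error contributes the mod-$2$ count of paths to $Q_k[T]$ and the syndrome is the XOR of these contributions; the RSG degree must then be read with this path-multiplicity convention for the ``equivalently'' clause to hold verbatim. Verifying that the $\mathbb{F}_2$ path-counting agrees with the physical $\hat{X}$-propagation of \autoref{lem:CNOTprop}, and that commutativity of the single-qubit $\hat{X}_k$ factors legitimizes collapsing everything to a parity, is the only place where care is genuinely required.
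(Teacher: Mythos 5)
Your core argument coincides with the paper's: both proofs reduce the theorem to the identity $\hat{X}^2=\hat{I}$ and read off the residual error on $Q_k[T]$ as $\hat{X}^{m}$ for a parity-counted $m$. The paper's proof is exactly that one line: it sets $m=\textbf{NE}(\textbf{RSG}(Q_k[T]))$, the number of error-bearing \emph{vertices} in the reverse spanning graph, and stops. What you add --- and what the paper's proof silently skips --- is the reconvergence issue, and you are right that this is the only genuinely delicate point. The physically correct invariant is, as you say, the number of error-carrying directed \emph{paths} into $Q_k[T]$ modulo $2$, and this differs from the vertex count whenever a single source has evenly many paths to the syndrome vertex. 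A concrete instance: two qubits, an $\hat{X}$ on $Q_1[1]$, a $CNOT(1,2)$ in window $1$, and a $CNOT(2,1)$ in window $2$. Conjugation gives $\hat{X}_1 \rightarrow \hat{X}_1\hat{X}_2 \rightarrow \hat{X}_1\cdot(\hat{X}_1\hat{X}_2)=\hat{X}_2$, so $Q_1[3]$ carries no error; yet $Q_1[1]$ has a path to $Q_1[3]$ (two of them), hence contributes one error vertex to $\textbf{RSG}(Q_1[3])$, and the paper's count $\textbf{NE}=1$ wrongly predicts an error there. Your $\mathbb{F}_2$ path-multiplicity bookkeeping repairs this, at the price you correctly identify: the ``degree of $Q_k[T]$ in the RSG'' must be reinterpreted with edges weighted by path parity, i.e.\ \autoref{df:RSG} should create the edge $(Q_i[t],Q_k[T])$ only when the number of directed paths from $Q_i[t]$ to $Q_k[T]$ is odd. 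Note that the gap you found is inherited from \autoref{thm:proppath}, whose ``if and only if'' already fails in the example above --- path existence does not imply the error survives --- so your fix is needed upstream of this theorem as well.

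For fairness to the paper's downstream results: in the circuits it actually analyzes in closed form (the empty circuit and the parallel transversal $CNOT$ circuit), every source vertex has at most one path to each syndrome vertex, so the vertex-parity and path-parity conventions coincide and those computations are unaffected. But as a proof of the theorem in the stated generality, yours is the correct one and the paper's is incomplete; your remark that commutativity of the accumulated $\hat{X}_k$ factors is what legitimizes collapsing everything to a single parity is also the right justification, since all arriving operators act on the same qubit at time $T$.
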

\begin{proof}
By \autoref{lem:RSGContain}, the bit flip error in any vertex in \textbf{RSG}($Q_k[T]$) can propagate to $Q_k[T]$. We use the notation $\textbf{NE}(\textbf{RSG}(Q_k[T])$ as the number of random errors in \textbf{RSG}($Q_k[T]$). The error at $Q_k[T]$ after propagation is

\begin{equation}
        \hat{X}^{\textbf{NE}(\textbf{RSG}(Q_k[T])}=
\begin{cases}
        & \hat{X} \qquad \text{When $\textbf{NE}(\textbf{RSG}(Q_k[T])$ is odd.}\\
        & \hat{I} \qquad \text{When $\textbf{NE}(\textbf{RSG}(Q_k[T])$ is even}    
\end{cases}
\end{equation}
\end{proof}

\begin{figure}[h!]
\fbox{
\begin{minipage}{0.5\textwidth}
     \centering
\resizebox{\columnwidth}{!}
{
\begin{quantikz}
  Q_1        & \gate[style={fill=red!80}]{X}\slice{$T=1$} & \ctrl{3}& \slice{\quad $T=2$}    &   & &\slice{\quad $T=3$} & &   & \slice{\quad $T=4$} & & &&  \\
 Q_2         &  & &  & \ctrl{3} & & & & & & & & & \\
 Q_3       &   &         &  & &   & & \ctrl{3} &  & & & & & \\
  Q_4        &  & \targ{} & \gate[style={fill=red!80}]{X} &          & \gate[style={fill=blue!80}]{X}&\gate[style={fill=red!80}]{X} &&\gate[style={fill=blue!80}]{X} & \gate[style={fill=red!80}]{X} & &  \gate[style={fill=blue!80}]{X}\gategroup[1,steps=2,style={inner
sep=6pt}]{Cancel}& \gate[style={fill=red!80}]{X}& \\
  Q_5        &  &  &      &\targ{}   &  & && & & & & &   \\
  Q_6       &  &  &      &          & &  &\targ{}& & & &    & &   
\end{quantikz}
}
 \caption{The six qubit transversal circuit with two bit flip error at $Q_1[1]$ and $Q_4[3]$. Two error cancel each other after propagated to $Q_4[5]$.}\label{fig:CircuitOriginal2E}
\end{minipage}
}
\hfill
\fbox{
\begin{minipage}{0.45\textwidth}
     \centering
\resizebox{0.5\columnwidth}{!}
{
\begin{tikzpicture}
\tikzstyle{every node}=[draw, minimum size=1cm, inner sep=0.1cm]
\node[fill=red] (A1) at (0,-0.5) {$Q_1[1]$};
\node (A2) at (0,-2) {$Q_4[1]$};
\node (A3) at (0,-3.5) {$Q_4[2]$};
\node[fill=blue] (A4) at (0,-5) {$Q_4[3]$};
\node (A5) at (0,-6.5) {$Q_4[4]$};
\node (A6) at (3,-3.5) {$Q_4[5]$};
\draw[->,red]  (A1) -- (A6);
\draw[->]  (A2) -- (A6);
\draw[->]  (A3) -- (A6);
\draw[->,blue]  (A4) -- (A6);
\draw[->]  (A5) -- (A6);
\end{tikzpicture}
}
\caption{$\textbf{RSG}(Q_4[5])$}
\label{fig:RSGCancel}
\end{minipage}
}
\caption{In $\textbf{RGS}(Q_4[5])$, two errors at $Q_1[1]$ and $Q_4[3]$ cancel because they give a even parity to $Q_4[5]$.}
\label{fig:RSGParity}
\end{figure}

An example of two bit-flip errors canceling each other is shown in \autoref{fig:RSGParity}.

\begin{definition}(Reverse spanning graph(\textbf{RSG}) of an $\textbf{EPSTG}$)
    \label{df:RSGGraph}
    The reverse spanning graph of an $\textbf{EPSTG}$ $\mathbf{G}(\mathbf{C},n,m,T)[t_1,t_2], 1 \leq t_1 < t_2 \leq T \}$ is defined as the union of the reverse spanning graph of all syndrome vertices
    \begin{equation}
       \textbf{RSG}(\mathbf{G})= \cup_{k=1}^n \textbf{RSG}(Q_k[T])
    \end{equation}
\end{definition}

\begin{figure}[h!]
\fbox{
\begin{minipage}{0.3\textwidth}
     \centering
\resizebox{0.9\columnwidth}{!}
{
\begin{quantikz}
  Q_1    & \slice{$T=1$} & \ctrl{2}& \slice{\quad $T=2$}    &  &     &          \\
 Q_2     &               &         &  & \ctrl{2} &  & \\
 Q_3     &               & \targ{}        &  &   &  &  \\
  Q_4    &               &         &  &  \targ{}        & & 
\end{quantikz}
}
 \caption{The 4-qubit transversal circuit.}\label{fig:CircuitOrigina4q}
\end{minipage}
}
\hfill
\fbox{
\begin{minipage}{0.3\textwidth}
     \centering
\resizebox{0.9\columnwidth}{!}
{
\begin{tikzpicture}
  \tikzstyle{every node}=[draw, minimum size=1cm, inner sep=0.3cm]

  \node (Q1) at (0,0) {$Q_1[1]$};
  \node (Q2) at (0,-1.5) {$Q_1[2]$};
  \node (Q3) at (0,-3) {$Q_1[3]$};

  \node (Q4) at (2,0) {$Q_2[1]$};
  \node (Q5) at (2,-1.5) {$Q_2[2]$};
  \node (Q6) at (2,-3) {$Q_2[3]$};

  \node (Q7) at (4,0) {$Q_3[1]$};
  \node (Q8) at (4,-1.5) {$Q_3[2]$};
  \node (Q9) at (4,-3) {$Q_3[3]$};

  \node (Q10) at (6,0) {$Q_4[1]$};
  \node (Q11) at (6,-1.5) {$Q_4[2]$};
  \node (Q12) at (6,-3) {$Q_4[3]$};

  \draw[->] (Q1) -- (Q2);
  \draw[->] (Q2) -- (Q3);

  \draw[->] (Q1) -- (Q8);

  \draw[->] (Q4) -- (Q5);
  \draw[->] (Q5) -- (Q6);

  \draw[->] (Q5) -- (Q12);

  \draw[->] (Q7) -- (Q8);
  \draw[->] (Q8) -- (Q9);

  \draw[->] (Q10) -- (Q11);
  \draw[->] (Q11) -- (Q12);

\end{tikzpicture}
}
\caption{The $\textbf{EPSTG}$ of the $4$ qubit transversal circuit, denote as $\textbf{G}$}
\label{fig:4qEG}
\end{minipage}
}
\hfill
\fbox{
\begin{minipage}{0.3\textwidth}
     \centering
\resizebox{0.5\columnwidth}{!}
{
\begin{tikzpicture}
\tikzstyle{every node}=[draw, minimum size=1cm, inner sep=0.1cm]
\node (A1) at (0,-0.5) {$Q_1[1]$};
\node (A2) at (0,-2) {$Q_1[2]$};
\node (A3) at (0,-3.5) {$Q_2[1]$};
\node (A4) at (0,-5) {$Q_2[2]$};
\node (A5) at (0,-6.5) {$Q_3[1]$};
\node (A6) at (0,-8) {$Q_3[2]$};
\node (A7) at (0,-9.5) {$Q_4[1]$};
\node (A8) at (0,-11) {$Q_4[2]$};

\node (A9) at (3,-3.5) {$Q_1[3]$};
\node (A10) at (3,-5) {$Q_2[3]$};
\node (A11) at (3,-6.5) {$Q_3[3]$};
\node (A12) at (3,-8) {$Q_4[3]$};

\draw[->,red]  (A1) -- (A9);
\draw[->,red]  (A2) -- (A9);
\draw[->,blue]  (A1) -- (A11);

\draw[->,green]  (A3) -- (A10);
\draw[->,green]  (A4) -- (A10);

\draw[->,pink]  (A3) -- (A12);
\draw[->,pink]  (A4) -- (A12);

\draw[->,blue]  (A5) -- (A11);
\draw[->,blue]  (A6) -- (A11);

\draw[->,pink]  (A7) -- (A12);
\draw[->,pink]  (A8) -- (A12);
\end{tikzpicture}
}
\caption{We get $\textbf{RSG}(\textbf{G})$ of 4-qubit transversal circuit by combining $\textbf{RSG}(Q_1[3]),\textbf{RSG}(Q_2[3]),\textbf{RSG}(Q_3[3])$ and $\textbf{RSG}(Q_4[3])$ together. }
\label{fig:RSGOfCircuit}
\end{minipage}
}
\caption{The example of a $\textbf{RSG}$ of a transversal circuit with $4$ qubit.}
\label{fig:RSGTotal}
\end{figure}

An example of $\textbf{RSG}(\textbf{G})$ is presented in \autoref{fig:RSGTotal}. After converting the original circuit with $T=3$ in \autoref{fig:CircuitOrigina4q} to its $\textbf{EPSTG}$ in \autoref{fig:4qEG}, we construct the reverse spanning graph of all four syndrome vertices we care about: $\textbf{RSG}(Q_1[3]),\textbf{RSG}(Q_2[3]),\textbf{RSG}(Q_3[4])$ and $\textbf{RSG}(Q_4[3])$, and combine them together to get $\textbf{RSG}(\textbf{G})=\textbf{RSG}(Q_1[3]) \cup \textbf{RSG}(Q_2[3]) \cup \textbf{RSG}(Q_3[3]) \cup \textbf{RSG}(Q_4[3])$. Clearly, \autoref{fig:RSGOfCircuit} is a direct representation of the accumulated bit-flip errors propagate to different syndrome vertices.

\begin{theorem}(An efficient algorithm to construct \textbf{RSG})
    \label{thm:RSGAlg}
    There is an efficient algorithm to construct \textbf{RSG}(\textbf{G}) for any \textbf{EPSTG} with total time window $T$ and $n$ qubits with $m$ $CNOT$ gates run in $poly(T,m,n)$.
\end{theorem}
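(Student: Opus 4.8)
The plan is to give a constructive algorithm that builds $\textbf{RSG}(\textbf{G})$ directly from the circuit $\mathbf{C}$ and to bound its running time. By \autoref{df:RSGGraph}, it suffices to construct $\textbf{RSG}(Q_k[T])$ for each of the $n$ syndrome vertices and take their union. By \autoref{df:RSG}, the vertex set of $\textbf{RSG}(Q_k[T])$ is exactly the set of source vertices from which there is a directed path to $Q_k[T]$ in the $\textbf{EPSTG}$, and the edges are the ``collapsed'' reverse edges pointing directly into $Q_k[T]$. So the core computational task is, for each syndrome vertex, to identify its set of ancestors in $\textbf{G}$.

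First I would bound the size of the $\textbf{EPSTG}$ itself. By \autoref{df:propgraph} there are $n(T-1)$ vertices and, since each vertex $Q_i[t]$ has one ``stay'' edge to $Q_i[t+1]$ plus at most one extra edge for each $CNOT$ with control $i$ at time $t$, the number of edges is at most $n(T-1)+m$, which is $poly(n,m,T)$. Constructing $\textbf{G}$ from $\mathbf{C}$ is a single pass over the $m$ gates and the $nT$ vertex slots. Next, for each syndrome vertex $Q_k[T]$ I would run a reverse graph traversal (a breadth-first or depth-first search along edges taken backwards) starting from $Q_k[T]$; the set of vertices visited is precisely the ancestor set guaranteed by \autoref{thm:proppath} and \autoref{lem:RSGContain}. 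Each such search touches each vertex and edge of $\textbf{G}$ at most once, so it costs $O(n(T-1)+m)$, and repeating it for all $n$ syndrome vertices costs $O(n(n(T-1)+m))$, still $poly(n,m,T)$. Finally, collapsing the discovered ancestors into direct reverse edges into $Q_k[T]$ and taking the union over $k$ adds only linear overhead.

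The main obstacle is not the asymptotic bound but arguing \emph{correctness} of the collapsed-edge representation: I must show the reverse traversal returns exactly the ancestor set of \autoref{df:RSG} (no spurious vertices and none missing), which reduces to reachability and is settled by \autoref{thm:proppath}. A secondary subtlety is that the union in \autoref{df:RSGGraph} may cause the same source vertex to appear in several $\textbf{RSG}(Q_k[T])$; I would note that this is intended (a single physical error can feed multiple syndromes) and that the union is taken over the shared source-vertex set, so no conflict arises and the total vertex count stays bounded by $n(T-1)$. With correctness pinned to the already-proved \autoref{thm:proppath} and the size bounds above, the polynomial running time follows immediately, completing the proof.
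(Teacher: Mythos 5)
Your proposal is correct and takes essentially the same route as the paper: both run a reverse depth-first (or breadth-first) search from each of the $n$ syndrome vertices of the \textbf{EPSTG}, add the collapsed edge $(Q_j[t],Q_k[T])$ for every vertex discovered, and arrive at the same $O(n(Tn+m)) = O(Tn^2+mn)$ bound. Your extra care about edge counts and about pinning correctness to \autoref{thm:proppath} merely makes explicit what the paper's proof leaves implicit.
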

\begin{proof}
  The idea of the algorithm goes as follows:
  \begin{enumerate}
      \item Construct a new graph $\textbf{G'}$ with all vertices in $\textbf{G}$ and no edges. 
      \item For each syndrome vertex $Q_i(T)$ in $\textbf{Syn}(\textbf{G})$, run a reverse depth first(\textbf{DFS}) search.
      \item For any vertex $Q_j(t)$ found in $\textbf{DFS}$, add edge $(Q_j(t),Q_i(T))$ to the constructed graph $\textbf{G'}$.
  \end{enumerate}
  $\textbf{G'}$ is exactly the \textbf{RSG}(\textbf{G}). Since each \textbf{DFS} has worse-case time complexity $O(Tn+m)$, the total algorithm has time complexity $O(Tn^2+mn)$.
\end{proof}

\begin{lemma}
    \label{lem:RSGBipart}
    $\textbf{RSG}(\mathbf{G})$ is a bipartite graph.
\end{lemma}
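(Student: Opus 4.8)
The plan is to exhibit an explicit bipartition of the vertex set of $\textbf{RSG}(\mathbf{G})$ and to verify that every edge crosses between the two parts; recall that a graph is bipartite precisely when its vertices split into two classes so that no edge has both endpoints in the same class. I would take the two classes to be the syndrome vertices $\textbf{Syn}(\textbf{G})$ on one side and the source vertices $\textbf{Source}(\textbf{G})$ on the other, each intersected with the vertices that actually occur in $\textbf{RSG}(\mathbf{G})$. By \autoref{df:syndromevertices} these two classes are disjoint and together cover every vertex of $\mathbf{G}$, hence every vertex of $\textbf{RSG}(\mathbf{G})$, so they do form a partition.

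The heart of the argument is to check that no edge lies inside either class. By \autoref{df:RSGGraph} every edge of $\textbf{RSG}(\mathbf{G})$ belongs to some $\textbf{RSG}(Q_k[T])$, and by \autoref{df:RSG} each such edge has the form $(Q_i[t],Q_k[T])$, where $Q_i[t]$ possesses a directed path to the syndrome vertex $Q_k[T]$. First I would note that the target endpoint $Q_k[T]$ always lies in the syndrome class. Then I would argue the source endpoint $Q_i[t]$ always lies in the source class: since the edges of the underlying $\textbf{EPSTG}$ advance the time index by exactly one (\autoref{df:propgraph}), any directed path terminating at $Q_k[T]$ must originate at a strictly earlier time $t<T$; hence $Q_i[t]$ differs from every $Q_j[T]$ and is therefore not a syndrome vertex. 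Consequently every edge runs from the source class into the syndrome class, so no edge is internal to either class, and the bipartition witnesses bipartiteness.

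Two small points need care. The definition of $\textbf{RSG}(Q_k[T])$ adjoins the centre $Q_k[T]$ separately and draws edges only from the vertices having a path to it, so $Q_k[T]$ is never joined to itself; this rules out a self-loop, which would otherwise be a length-one odd cycle and would break bipartiteness. Moreover, because a vertex at the final time $T$ has no outgoing edge in $\mathbf{G}$, a syndrome vertex can never appear as the first coordinate of an $\textbf{RSG}$ edge, so the two classes cannot be joined in the reverse orientation either. I expect this time-ordering observation to be the only real obstacle: once it is established that all $\textbf{RSG}$ edges are directed from earlier-time source vertices into final-time syndrome vertices, the partition $(\textbf{Source}(\textbf{G}),\textbf{Syn}(\textbf{G}))$ immediately exhibits $\textbf{RSG}(\mathbf{G})$ as bipartite, and the lemma follows.
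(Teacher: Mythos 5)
Your proposal is correct and follows essentially the same route as the paper's own proof: both place $\textbf{Syn}(\textbf{G})$ on one side and $\textbf{Source}(\textbf{G})$ on the other and observe that, by \autoref{df:RSG} and \autoref{df:RSGGraph}, every edge of $\textbf{RSG}(\mathbf{G})$ runs from a source vertex to a syndrome vertex. Your additional time-ordering argument (no vertex at time $T$ can start an edge, hence no self-loops or intra-class edges) merely makes explicit a point the paper leaves implicit, so the two proofs are substantively identical.
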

\begin{proof}
We put all the vertices of the syndrome $\textbf{Syn}(\textbf{G})$ defined in \autoref{df:syndromevertices} on the right side and all the other vertices, or $\textbf{Source}(\textbf{G})$, on the left side. By \autoref{df:RSGGraph}, there are only edges from the left to the right. The edge between one source vertex $Q_i(t_1)$ and the other syndrome vertex $Q_j(T)$ means that the parity of $Q_i(t_1)$ can affect the parity of $Q_j(T)$ after error propagation. 
\end{proof}

\begin{theorem}(Error Subgraph of $\textbf{RSG}(\mathbf{G})$)
    \label{thm:spanningsubgraphparity}
    For each subgraph $\textbf{G'}$ in $\textbf{RSG}(\mathbf{G})$, we call  $\textbf{G'}$ an error subgraph of $\textbf{RSG}(\mathbf{G})$ if every vertex in $\textbf{G'} \cap \textbf{Syn}(\mathbf{G})$ has odd degree.    
    Each error subgraph $\textbf{G'}$ uniquely corresponds to a pattern for the random error that occurred before the syndrome and the result of the parity of syndrome. The probability of each pattern to occur under a random single-qubit bitflip model of constant probability $p$ is \begin{equation}
        P(\textbf{G'} \cap \textbf{Source}(\mathbf{G}))= p^{|\textbf{G'} \cap \textbf{Source}(\mathbf{G})|}(1-p)^{(n-1)T-|\textbf{G'} \cap \textbf{Source}(\mathbf{G})|}. 
        \label{eq:probability}
    \end{equation}

\end{theorem}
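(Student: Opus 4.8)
The plan is to exhibit an explicit bijection between error subgraphs of $\textbf{RSG}(\mathbf{G})$ and error patterns (assignments of bit-flip errors to the source vertices), and then read the probability off from the independence of the single-qubit errors. The correspondence is the content of the first sentence of the statement, and \autoref{thm:RSGthm} is exactly the tool that lets the syndrome half of each subgraph be recovered deterministically from its source half.

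First I would fix an error pattern, i.e. a subset $A \subseteq \textbf{Source}(\mathbf{G})$ of source vertices carrying a bit-flip error, and build the subgraph $\textbf{G'}$ whose source part is $A$, whose edges are all edges of $\textbf{RSG}(\mathbf{G})$ incident to $A$, and whose syndrome part consists of exactly those syndrome vertices that thereby acquire odd degree. By \autoref{lem:RSGBipart} every such edge joins a source vertex to a syndrome vertex, so the degree of a syndrome vertex $Q_k[T]$ inside $\textbf{G'}$ equals the number of its $\textbf{RSG}$-neighbors lying in $A$, which by \autoref{lem:RSGContain} is precisely the number of independent errors that propagate to $Q_k[T]$. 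By \autoref{thm:RSGthm} this count is odd exactly when $Q_k[T]$ carries a residual error, so $\textbf{G'}$ is an error subgraph in the sense of the statement and $\textbf{G'} \cap \textbf{Syn}(\mathbf{G})$ records exactly the syndrome parity produced by $A$.

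Next I would verify that this map is a bijection. Injectivity is immediate, since $A = \textbf{G'} \cap \textbf{Source}(\mathbf{G})$ can be read back off $\textbf{G'}$. For surjectivity and uniqueness I would argue that any error subgraph is forced by its source part: its edges must be exactly the $\textbf{RSG}$-edges incident to $\textbf{G'} \cap \textbf{Source}(\mathbf{G})$ (each such edge encodes a genuine propagation path, by \autoref{df:RSGGraph} and \autoref{lem:RSGContain}), and the odd-degree requirement then pins down which syndrome vertices occur. Hence every error subgraph coincides with the one built from its own source vertices, yielding the claimed one-to-one correspondence between error subgraphs and error patterns, each carrying its determined syndrome result. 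The probability then factorizes: under the single-qubit bit-flip model the events ``$Q_i[t]$ flips'' are independent over distinct source vertices, each with probability $p$ and complement $1-p$, while the syndrome vertices contribute no independent randomness because their parity is a function of $A$. The event corresponding to $\textbf{G'}$ is precisely ``every source vertex of $\textbf{G'} \cap \textbf{Source}(\mathbf{G})$ flips and every other source vertex does not,'' so its probability is $p^{|\textbf{G'} \cap \textbf{Source}(\mathbf{G})|}$ times $(1-p)$ raised to the number of error-free source vertices, i.e. $|\textbf{Source}(\mathbf{G})| - |\textbf{G'} \cap \textbf{Source}(\mathbf{G})|$, matching \autoref{eq:probability}.

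The step I expect to demand the most care is the bijection, and in particular the claim that an error subgraph is determined by its source part: the definition only imposes the odd-degree condition on the syndrome vertices it happens to contain, so I would need to make precise that the admissible edge set is exactly the set of $\textbf{RSG}$-edges incident to the chosen source vertices, and then invoke \autoref{thm:RSGthm} to show the syndrome part is forced rather than freely chosen. A secondary point worth checking is purely arithmetic: the exponent of $(1-p)$ is the number of error-free source vertices $|\textbf{Source}(\mathbf{G})| - |\textbf{G'} \cap \textbf{Source}(\mathbf{G})|$, so one should confirm that $|\textbf{Source}(\mathbf{G})|$ indeed equals the constant written in \autoref{eq:probability} given the vertex count fixed in \autoref{df:syndromevertices}.
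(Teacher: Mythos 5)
Your proposal is correct and follows essentially the same route as the paper's proof: the bipartite split of $\textbf{RSG}(\mathbf{G})$ from \autoref{lem:RSGBipart}, the parity criterion of \autoref{thm:RSGthm} to read the syndrome part off the source part, and independence of the single-qubit Bernoulli flips to obtain the weight, with your handling of the bijection merely more explicit than the paper's, which asserts the correspondence without pinning down the admissible edge set. Your closing arithmetic caution is also warranted: the paper's own proof counts the error-free source vertices as $(T-1)n-|\textbf{G'} \cap \textbf{Source}(\mathbf{G})|$, so the exponent $(n-1)T$ appearing in \autoref{eq:probability} is a typo that your formula, written in terms of $|\textbf{Source}(\mathbf{G})|-|\textbf{G'} \cap \textbf{Source}(\mathbf{G})|$, silently corrects.
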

\begin{proof}
    By \autoref{lem:RSGBipart}, $\textbf{RSG}(\mathbf{G})$ is a bipartite graph, the left side is the $\textbf{Source}(\textbf{G})$ which determine the random errors and the right side is the $\textbf{Syn}(\textbf{G})$ which determine the syndrome at time $T$ after propagation. 
    By \autoref{thm:RSGthm}, the parity of the degree of a vertex $Q_k(T)$ in $\textbf{RSG}(\mathbf{G})$ determine whether there is an error in $\textbf{RSG}(\mathbf{G})$ after propagation. Since every vertex in $\textbf{G'} \cap \textbf{Syn}(\mathbf{G})$ has odd degree, it is exactly the set of syndrome qubit at time $T$ which still have errors after propagation. 

    For each pattern, there are $\# \mathcal{V}=|\textbf{G'} \cap \textbf{Source}(\mathbf{G})|$ random errors in the error propagation space-time graph, with probability $p^{\# \mathcal{V}}$. The probability for the rest $(T-1)n-\# \mathcal{V}$ to have no error is $(1-p)^{(T-1)n-\# \mathcal{V}}$.
    So the overall probability of each random error pattern correspond to the subgraph is $p^{\# \mathcal{V}}(1-p)^{(T-1)n-\# \mathcal{V}}$.
\end{proof}

\begin{figure}[h!]
\fbox{
\begin{minipage}{0.3\textwidth}
     \centering
\resizebox{0.9\columnwidth}{!}
{
\begin{quantikz}
  Q_1  & \gate[style={fill=red!80}]{X} & \slice{$T=1$} & \ctrl{2}& \slice{\quad $T=2$}    &  &   \gate[style={fill=red!80}]{X}  &          \\
 Q_2   &\gate[style={fill=red!80}]{X}  &               &         &  & \ctrl{2} & \gate[style={fill=red!80}]{X} & \\
 Q_3   &  &               & \targ{}        & \gate[style={fill=red!80}]{X} &   &  &  \\
  Q_4  &  &               &         & \gate[style={fill=red!80}]{X} &  \targ{}        & & 
\end{quantikz}
}
 \caption{The 4-qubit transversal circuit with four bit-flip error at $Q_1[1]$,$Q_2[2]$,$Q_3[2]$ and $Q_4[2]$.}\label{fig:4error4qTransversal}
\end{minipage}
}
\hfill
\fbox{
\begin{minipage}{0.3\textwidth}
     \centering
\resizebox{0.5\columnwidth}{!}
{
    \begin{tikzpicture}
    \tikzstyle{every node}=[draw, minimum size=1cm, inner sep=0.1cm]
    \node[fill=red] (A1) at (0,-0.5) {$Q_1[1]$};
    \node (A2) at (0,-2) {$Q_1[2]$};
    \node[fill=red] (A3) at (0,-3.5) {$Q_2[1]$};
    \node (A4) at (0,-5) {$Q_2[2]$};
    \node (A5) at (0,-6.5) {$Q_3[1]$};
    \node[fill=red] (A6) at (0,-8) {$Q_3[2]$};
    \node (A7) at (0,-9.5) {$Q_4[1]$};
    \node[fill=red] (A8) at (0,-11) {$Q_4[2]$};
    
    \node[fill=red] (A9) at (3,-3.5) {$Q_1[3]$};
    \node[fill=red] (A10) at (3,-5) {$Q_2[3]$};
    \node (A11) at (3,-6.5) {$Q_3[3]$};
    \node (A12) at (3,-8) {$Q_4[3]$};
    
    \draw[->,red]  (A1) -- (A9);
    \draw[->,red,dash dot]  (A2) -- (A9);
    \draw[->,blue]  (A1) -- (A11);
    
    \draw[->,green]  (A3) -- (A10);
    \draw[->,green,dash dot]  (A4) -- (A10);
    
    \draw[->,pink]  (A3) -- (A12);
    \draw[->,pink,dash dot]  (A4) -- (A12);
    
    \draw[->,blue,dash dot]  (A5) -- (A11);
    \draw[->,blue]  (A6) -- (A11);
    
    \draw[->,pink,dash dot]  (A7) -- (A12);
    \draw[->,pink]  (A8) -- (A12);
    \end{tikzpicture}
}
 \caption{$\textbf{RSG}$ of \autoref{fig:4error4qTransversal}. The vertices with bitflip error is filled in red. And I use dashed line to indicate there is no error propagation between the error vertex with the syndrome vertex.}\label{fig:probRSG}
\end{minipage}
}
\hfill
\fbox{
\begin{minipage}{0.3\textwidth}
     \centering
\resizebox{0.5\columnwidth}{!}
{
    \begin{tikzpicture}
    \tikzstyle{every node}=[draw, minimum size=1cm, inner sep=0.1cm]
    \node[fill=red] (A1) at (0,-0.5) {$Q_1[1]$};
    \node[fill=red] (A3) at (0,-2) {$Q_2[1]$};
    \node[fill=red] (A6) at (0,-3.5) {$Q_3[2]$};
    \node[fill=red] (A8) at (0,-5) {$Q_4[2]$};
    \node (A2) at (0,-7.5) {$Q_1[2]$};
    \node (A4) at (0,-9) {$Q_2[2]$};
    \node (A5) at (0,-10.5) {$Q_3[1]$};
    \node (A7) at (0,-12) {$Q_4[1]$};

    \node[fill=red] (A9) at (3,-3.5) {$Q_1[3]$};
    \node[fill=red] (A10) at (3,-5) {$Q_2[3]$};
    \node (A11) at (3,-7.5) {$Q_3[3]$};
    \node (A12) at (3,-9) {$Q_4[3]$};
    
    \draw[->,red]  (A1) -- (A9);
    \draw[->,red,dash dot]  (A2) -- (A9);
    \draw[->,blue]  (A1) -- (A11);
    
    \draw[->,green]  (A3) -- (A10);
    \draw[->,green,dash dot]  (A4) -- (A10);
    
    \draw[->,pink]  (A3) -- (A12);
    \draw[->,pink,dash dot]  (A4) -- (A12);
    
    \draw[->,blue,dash dot]  (A5) -- (A11);
    \draw[->,blue]  (A6) -- (A11);
    
    \draw[->,pink,dash dot]  (A7) -- (A12);
    \draw[->,pink]  (A8) -- (A12);
    \end{tikzpicture}
}
\caption{Rearrage \autoref{fig:probRearrange}. The upper half is the error subgraph of $\textbf{RSG}(\textbf{G})=\textbf{G}'$, as defined in \autoref{thm:spanningsubgraphparity}.}
\label{fig:probRearrange}
\end{minipage}
}
\caption{An illustration for \autoref{thm:spanningsubgraphparity}.}
\label{fig:RSGProgability}
\end{figure}

Now comes the main theorem of this section. We can reduce our problem of calculating the distribution of number of errors to the problem of counting the number of a certain type of subgraph,

\begin{theorem}(Probability distribution is weighted counting of error subgraph)
    \label{thm:weightedCouting}
    Given a quantum circuit of $n$ qubit $\mathcal{C}$ composed of only $CNOT$ gates under a random bitflip error with error probability $p$ in each discretized time window, the syndrome error distribution at time window $T$ is given by counting the weighted error subgraph of $\textbf{RSG}(\textbf{G})$, where $\textbf{G}$ is error propagation space-time graph of the circuit.
    \begin{equation}
        P(k)=\sum_{\substack{\textbf{G'} \subset \textbf{RSG}(\textbf{G}) \\ \textbf{G'} \text{is error subgraph}  \\|\textbf{G'} \cap \textbf{Syn}(G)|=k}} P(\textbf{G'} \cap \textbf{Source}(\mathbf{G}))
    \end{equation}
    We make the domain of probability space $\mathbb{Z}$ as when $k > n=|\textbf{Syn}(\textbf{G}) \cap \textbf{G'}|$, since there are at most $n$ errors, we just make $P(k)=0$. 
\end{theorem}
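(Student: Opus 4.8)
The plan is to treat this as a direct application of the law of total probability, partitioning the sample space of random bitflip configurations according to the resulting number of surviving syndrome errors, and then re-indexing that partition by the error subgraphs supplied by \autoref{thm:spanningsubgraphparity}. First I would fix the sample space to be the set of all $2^{|\textbf{Source}(\mathbf{G})|}$ independent error assignments to the source vertices $\textbf{Source}(\textbf{G})$. Under the bitflip model of \autoref{eq:bitflipmodel}, an assignment placing errors exactly on a subset $S \subseteq \textbf{Source}(\textbf{G})$ occurs with probability $p^{|S|}(1-p)^{|\textbf{Source}(\mathbf{G})| - |S|}$, which is precisely the weight $P(\textbf{G'} \cap \textbf{Source}(\mathbf{G}))$ appearing in \autoref{eq:probability} once $S$ is identified with $\textbf{G'} \cap \textbf{Source}(\mathbf{G})$.

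Next I would exploit the fact that, once the random errors $S$ are fixed, propagation is entirely deterministic. By \autoref{thm:RSGthm}, a syndrome vertex $Q_k[T]$ carries an error after propagation if and only if the number of errors of $S$ lying in $\textbf{RSG}(Q_k[T])$ is odd, equivalently if and only if the degree of $Q_k[T]$ in the subgraph it induces is odd. Hence the total number of surviving syndrome errors produced by $S$ equals the number of odd-degree syndrome vertices, which is exactly $|\textbf{G'} \cap \textbf{Syn}(\mathbf{G})|$ for the associated error subgraph $\textbf{G'}$. This is the step that converts the probabilistic question into a purely combinatorial one.

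The crux of the argument is then to establish that the assignment $S \mapsto \textbf{G'}$ is a bijection between error configurations and error subgraphs of $\textbf{RSG}(\mathbf{G})$, under which the surviving-error count on the left matches $|\textbf{G'} \cap \textbf{Syn}(\mathbf{G})|$ on the right. Injectivity is immediate, since $S$ is recovered as $\textbf{G'} \cap \textbf{Source}(\mathbf{G})$; for well-definedness and surjectivity I would invoke \autoref{thm:spanningsubgraphparity}, which asserts exactly that each error subgraph corresponds to one and only one random error pattern together with its induced syndrome parity. I expect this bijectivity to be the main obstacle, as it is where the informal notion of ``subgraph'' must be pinned down using the bipartite structure from \autoref{lem:RSGBipart}: one must check that the activated (odd-degree) syndrome vertices and the incident edges of $\textbf{G'}$ are forced by $S$, so that no two configurations collapse to the same subgraph and no error subgraph is omitted.

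Finally I would assemble the identity by conditioning on $k$. Partitioning the sample space as $\Omega = \bigsqcup_k \{\, S : \text{$S$ yields exactly $k$ syndrome errors} \,\}$ and summing configuration probabilities within each block gives $P(k) = \sum_{S} p^{|S|}(1-p)^{|\textbf{Source}(\mathbf{G})|-|S|}$ ranging over configurations with $k$ surviving errors; transporting this sum across the bijection of the previous paragraph rewrites it as the stated sum over error subgraphs $\textbf{G'}$ with $|\textbf{G'} \cap \textbf{Syn}(G)| = k$. The degenerate case is handled by noting that $|\textbf{Syn}(\mathbf{G})| = n$ bounds the number of surviving errors, so for $k > n$ the corresponding block is empty and $P(k) = 0$, which matches the extension of the domain to $\mathbb{Z}$.
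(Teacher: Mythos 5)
Your proposal is correct and takes essentially the same route as the paper, whose entire proof is the one-liner ``Proof is trivial by \autoref{thm:spanningsubgraphparity}'': you invoke exactly that correspondence between source-error configurations and error subgraphs (together with \autoref{thm:RSGthm} for the parity criterion) and sum the weights of \autoref{eq:probability} within each block of the partition by $k$. Your write-up is simply a faithful elaboration, making explicit the bijection and the edge-forcing point that the paper leaves implicit.
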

\begin{proof}
    Proof is trivial by \autoref{thm:spanningsubgraphparity}.
\end{proof}

Now we have reduced the problem to a weighted subgraph counting problem on a bipartite graph. When $p=\frac{1}{2}$, all error subgraph $\textbf{G'}$ share the same weight $(\frac{1}{2})^{(n-1)T}$, the problem is reduced to a subgraph counting problem. 

In general, subgraph counting is hard \citep{subgraphCounting}. But we can simplify the problem when the $\textbf{RSG}$ can be divided into many smaller connected directed graph, defined as follows:

\begin{definition}(Connected directed graph)
    \label{def:connectedDirected}
    A connected directed graph $G$ is defined as a connected graph such that there exists at least one vertex $Q$ such that there is a path connected each other vertices in $G$ to $Q$. 
\end{definition}

On the contrary, we define the concept of disjoined graph as:

\begin{definition}(Disjoint directed graph)
    \label{def:disjoint}
    We call two connected directed graph $G_1$,$G_2$ disjoint if there exists no path connected from any pair of vertices chosen one from $G_1$ and the other from $G_2$.
\end{definition}

\begin{figure}[h!]
\fbox{
\begin{minipage}{0.3\textwidth}
     \centering
\resizebox{0.9\columnwidth}{!}
{
\begin{quantikz}
  Q_1    & \slice{$T=1$} & \ctrl{2}& \slice{\quad $T=2$}    &  &     &          \\
 Q_2     &               &         & \ctrl{2} &  &  & \\
 Q_3     &               & \targ{}        &  &   &  &  \\
  Q_4    &               &         &  \targ{} &        & & 
\end{quantikz}
}
 \caption{The 4-qubit  parallel transversal circuit.}\label{fig:CircuitOrigina4q}
\end{minipage}
}
\hfill
\fbox{
\begin{minipage}{0.3\textwidth}
     \centering
\resizebox{0.5\columnwidth}{!}
{
\begin{tikzpicture}
\tikzstyle{every node}=[draw, minimum size=1cm, inner sep=0.1cm]
\node (A1) at (0,-0.5) {$Q_1[1]$};
\node (A2) at (0,-2) {$Q_1[2]$};
\node (A3) at (0,-3.5) {$Q_2[1]$};
\node (A4) at (0,-5) {$Q_2[2]$};
\node (A5) at (0,-6.5) {$Q_3[1]$};
\node (A6) at (0,-8) {$Q_3[2]$};
\node (A7) at (0,-9.5) {$Q_4[1]$};
\node (A8) at (0,-11) {$Q_4[2]$};

\node (A9) at (3,-3.5) {$Q_1[3]$};
\node (A10) at (3,-5) {$Q_2[3]$};
\node (A11) at (3,-6.5) {$Q_3[3]$};
\node (A12) at (3,-8) {$Q_4[3]$};

\draw[->,red]  (A1) -- (A9);
\draw[->,red]  (A2) -- (A9);
\draw[->,blue]  (A1) -- (A11);

\draw[->,green]  (A3) -- (A10);
\draw[->,green]  (A4) -- (A10);

\draw[->,pink]  (A3) -- (A12);

\draw[->,blue]  (A5) -- (A11);
\draw[->,blue]  (A6) -- (A11);

\draw[->,pink]  (A7) -- (A12);
\draw[->,pink]  (A8) -- (A12);
\end{tikzpicture}
}
\caption{The \textbf{RSG} of the four-qubit parallel transversal circuit.}
\label{fig:4qRSGOriginalTrans}
\end{minipage}
}
\hfill
\fbox{
\begin{minipage}{0.3\textwidth}
     \centering
\resizebox{0.5\columnwidth}{!}
{
\begin{tikzpicture}
\tikzstyle{every node}=[draw, minimum size=1cm, inner sep=0.1cm]
\node (A1) at (0,-0.5) {$Q_1[1]$};
\node (A2) at (0,-2) {$Q_1[2]$};
\node (A5) at (0,-3.5) {$Q_3[1]$};
\node (A6) at (0,-5) {$Q_3[2]$};
\node (A3) at (0,-6.5) {$Q_2[1]$};
\node (A4) at (0,-8) {$Q_2[2]$};
\node (A7) at (0,-9.5) {$Q_4[1]$};
\node (A8) at (0,-11) {$Q_4[2]$};

\node (A9) at (3,-2) {$Q_1[3]$};
\node (A11) at (3,-3.5) {$Q_3[3]$};

\node (A10) at (3,-8) {$Q_2[3]$};
\node (A12) at (3,-9.5) {$Q_4[3]$};

\draw[->,red]  (A1) -- (A9);
\draw[->,red]  (A2) -- (A9);
\draw[->,blue]  (A1) -- (A11);

\draw[->,green]  (A3) -- (A10);
\draw[->,green]  (A4) -- (A10);

\draw[->,pink]  (A3) -- (A12);

\draw[->,blue]  (A5) -- (A11);
\draw[->,blue]  (A6) -- (A11);

\draw[->,pink]  (A7) -- (A12);
\draw[->,pink]  (A8) -- (A12);
\end{tikzpicture}
}
\caption{After rearrange the vertices in \autoref{fig:4qRSGOriginalTrans}, the $\textbf{RSG}$ for 4-qubit parallel transversal circuit is divided into two disjoint subgraph.}
\label{fig:RSGTrans}
\end{minipage}
}
\caption{The 4-qubit parallel transversal circuit with its $\textbf{RSG}$. }
\label{fig:Trans}
\end{figure}

\begin{figure}[h!]
\fbox{
\begin{minipage}{0.3\textwidth}
     \centering
\resizebox{0.9\columnwidth}{!}
{
\begin{quantikz}
  Q_1    & \slice{$T=1$} &  & \slice{\quad $T=2$}    &  &     &          \\
 Q_2     &               &         &  &  &  & \\
 Q_3     &               &        &  &   &  &  \\
  Q_4    &               &         &  &         & & 
\end{quantikz}
}
 \caption{The 4-qubit gate with no $CNOT$ gate.}\label{fig:CircuitOrigina4q}
\end{minipage}
}
\hfill
\fbox{
\begin{minipage}{0.3\textwidth}
     \centering
\resizebox{0.5\columnwidth}{!}
{
\begin{tikzpicture}
\tikzstyle{every node}=[draw, minimum size=1cm, inner sep=0.1cm]
\node (A1) at (0,-0.5) {$Q_1[1]$};
\node (A2) at (0,-2) {$Q_1[2]$};
\node (A3) at (0,-3.5) {$Q_2[1]$};
\node (A4) at (0,-5) {$Q_2[2]$};
\node (A5) at (0,-6.5) {$Q_3[1]$};
\node (A6) at (0,-8) {$Q_3[2]$};
\node (A7) at (0,-9.5) {$Q_4[1]$};
\node (A8) at (0,-11) {$Q_4[2]$};

\node (A9) at (3,-3.5) {$Q_1[3]$};
\node (A10) at (3,-5) {$Q_2[3]$};
\node (A11) at (3,-6.5) {$Q_3[3]$};
\node (A12) at (3,-8) {$Q_4[3]$};

\draw[->,red]  (A1) -- (A9);
\draw[->,red]  (A2) -- (A9);

\draw[->,green]  (A3) -- (A10);
\draw[->,green]  (A4) -- (A10);

\draw[->,blue]  (A5) -- (A11);
\draw[->,blue]  (A6) -- (A11);

\draw[->,pink]  (A7) -- (A12);
\draw[->,pink]  (A8) -- (A12);
\end{tikzpicture}
}
\caption{The $\textbf{RSG}$ of 4-qubit circuit without CNOT gate.}
\label{fig:4qRSGOriginal}
\end{minipage}
}
\hfill
\fbox{
\begin{minipage}{0.3\textwidth}
     \centering
\resizebox{0.5\columnwidth}{!}
{
\begin{tikzpicture}
\tikzstyle{every node}=[draw, minimum size=1cm, inner sep=0.1cm]
\node (A1) at (0,-0.5) {$Q_1[1]$};
\node (A2) at (0,-2) {$Q_1[2]$};
\node (A3) at (0,-3.5) {$Q_2[1]$};
\node (A4) at (0,-5) {$Q_2[2]$};
\node (A5) at (0,-6.5) {$Q_3[1]$};
\node (A6) at (0,-8) {$Q_3[2]$};
\node (A7) at (0,-9.5) {$Q_4[1]$};
\node (A8) at (0,-11) {$Q_4[2]$};

\node (A9) at (3,-1.5) {$Q_1[3]$};
\node (A10) at (3,-4.5) {$Q_2[3]$};
\node (A11) at (3,-7.5) {$Q_3[3]$};
\node (A12) at (3,-10) {$Q_4[3]$};

\draw[->,red]  (A1) -- (A9);
\draw[->,red]  (A2) -- (A9);

\draw[->,green]  (A3) -- (A10);
\draw[->,green]  (A4) -- (A10);

\draw[->,blue]  (A5) -- (A11);
\draw[->,blue]  (A6) -- (A11);

\draw[->,pink]  (A7) -- (A12);
\draw[->,pink]  (A8) -- (A12);
\end{tikzpicture}
}
\caption{Divide $\textbf{RSG}$ in \autoref{fig:4qRSGOriginal} into $4$ disjoint subgraph. }
\label{fig:RSGDisjoint}
\end{minipage}
}
\caption{The circuit, \textbf{RSG} graph of a $4$ qubit circuit without CNOT gates.}
\label{fig:Disjoint}
\end{figure}

It should be point out that we can assume that we only consider the situation that error propagation space-time graph(\textbf{EPSTG}) is a connected directed graph defined in \autoref{def:connectedDirected}.

\begin{theorem}(Combined error distribution of two disjoint subgraphs)
    Now assume that we have two disjoint subgraphs $G_1$, $G_2$ in the same \textbf{RSG}. The error number distribution of $G_1$ is $\{ P_1(k)|1 \leq k\leq n \}$, and $\{ P_2(k)|1 \leq k\leq n \}$ for $G_2$. The joint error distribution of $G_1$ and $G_2$ is
    \begin{equation}
        P(k)=\sum_{r=0}^k P_1(r)P_2(k-r)
        \label{eq:jointdistribution}
    \end{equation}
    \autoref{eq:jointdistribution} can be computed with time complexity $O(k)$ given $P_1$,$P_2$. The time complexity to get the whole distribution is $O(n^2)$.
\end{theorem}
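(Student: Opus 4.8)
The plan is to recognize \autoref{eq:jointdistribution} as the discrete convolution of two independent integer-valued distributions, and to derive the required independence from the disjointness hypothesis of \autoref{def:disjoint}. First I would introduce two random variables $K_1$ and $K_2$, where $K_i$ counts the number of syndrome vertices of $G_i$ that still carry an error after propagation. By \autoref{thm:RSGthm}, each $K_i$ equals the number of odd-degree syndrome vertices inside $G_i$, a quantity determined solely by which source vertices of $G_i$ received a random bit-flip. Since the syndrome vertices of $G_1$ and $G_2$ together make up all of $\textbf{Syn}(\mathbf{G})$ restricted to $G_1 \cup G_2$, the total error count is exactly $K = K_1 + K_2$, so that $P(k) = \Pr[K_1 + K_2 = k]$.

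The key step is to establish that $K_1$ and $K_2$ are independent. Because $G_1$ and $G_2$ are disjoint in the sense of \autoref{def:disjoint}, they share no vertex and admit no connecting path; in particular their source-vertex sets are disjoint. Invoking \autoref{thm:proppath}, a bit-flip originating in a source vertex of $G_1$ can propagate only to vertices reachable by a directed path, all of which lie inside $G_1$; it can therefore never reach a syndrome vertex of $G_2$, and symmetrically for $G_2$. Consequently $K_1$ is a deterministic function of the error indicators on $\textbf{Source}(\mathbf{G}) \cap G_1$ and $K_2$ of those on $\textbf{Source}(\mathbf{G}) \cap G_2$. Because the error model of \autoref{eq:bitflipmodel} assigns an independent Bernoulli error of probability $p$ to each qubit in each time window, these two disjoint families of indicators are mutually independent, and hence so are $K_1$ and $K_2$.

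Given independence, the law of the sum is the convolution of the two laws: conditioning on the value $r$ assumed by $K_1$ and pairing it with the complementary value $k-r$ for $K_2$ yields $P(k) = \sum_{r=0}^{k} P_1(r) P_2(k-r)$, which is precisely \autoref{eq:jointdistribution}. Here I would explicitly invoke the boundary convention of \autoref{thm:weightedCouting}, namely $P_i(j) = 0$ whenever $j$ falls outside the admissible range $0 \leq j \leq n$, so that the finite sum is well defined for every $k$ and no out-of-range terms contribute.

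Finally, for the complexity claim, evaluating a single $P(k)$ requires the $k+1$ multiplications and additions of the convolution sum, i.e. $O(k)$ arithmetic operations, so assembling the entire distribution costs $\sum_{k=0}^{n} O(k) = O(n^2)$. The main obstacle is the independence argument: one must argue carefully that disjointness of the directed subgraphs is exactly the condition ruling out cross-propagation, so that the two source sets are genuinely disjoint and no error can leak between, or be double-counted across, the two syndrome sets. Once that is secured, the remainder reduces to the standard fact that the distribution of a sum of independent integer random variables is the convolution of their distributions.
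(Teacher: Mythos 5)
Your proposal is correct and follows essentially the same route as the paper's proof: both invoke \autoref{thm:proppath} to rule out cross-propagation between the disjoint subgraphs, conclude that the two error counts are independent, and obtain \autoref{eq:jointdistribution} as the convolution of the two laws. You merely spell out more carefully what the paper leaves implicit --- the independence argument via disjoint families of Bernoulli indicators, the boundary convention $P_i(j)=0$ outside the admissible range, and the $O(k)$ per-entry, $O(n^2)$ total complexity count --- which strengthens rather than changes the argument.
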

\begin{proof}
    By \autoref{thm:proppath}, errors can only propagate through paths. So there is no error propagation between $G_1$ and  $G_2$. $P_1(k)$ and $P_2(k)$ are thus two independent probability distributions. Let $X$,$Y$ be two random variables for the instances in two distribution.  
    \begin{equation}
         P(|X|=r, |Y|=(k-r))=P_1(r)P_2(k-r)
    \end{equation}
    \begin{equation}
        P(k)=\sum_{r=0}^k P(|X|=r, |Y|=(k-r))= \sum_{r=0}^k P_1(r)P_2(k-r).
    \end{equation}
    This complete the proof.
\end{proof}

It easy to get the following corollary:

\begin{corollary}(Combined error distribution of many disjoint subgraphs)
\label{cor:subgraphs}
Assume that we have $m$ disjoint subgraphs $G_1$, $G_2$,$G_3$,$\cdots$ $G_m$ in the same \textbf{RSG}. Each $G_r$ has the probability distribution $P_r$ for $1 \leq r \leq n$. The final error joint distribution is given by   
\begin{equation}
    P(k)= \sum_{\substack{a_1,a_2,\cdots,a_m\\ a_1+a_2+\cdots+a_m=k}}P_1(a_1)P_2(a_2)\cdots P_m(a_m)
\end{equation}
\end{corollary}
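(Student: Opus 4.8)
The plan is to prove the formula by induction on the number $m$ of disjoint subgraphs, bootstrapping directly from the two-subgraph theorem established immediately above. The corollary is fundamentally a statement that the distribution of a sum of mutually independent integer-valued random variables is the $m$-fold convolution of their individual distributions, so the whole argument amounts to iterating the single convolution already proved.

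First I would record the underlying independence structure. Because $G_1, G_2, \ldots, G_m$ are pairwise disjoint in the sense of \autoref{def:disjoint}, \autoref{thm:proppath} guarantees that no error propagates from any vertex of one subgraph to a syndrome vertex of another; hence the error counts $|X_1|, |X_2|, \ldots, |X_m|$ on the respective subgraphs are mutually independent random variables, and the total syndrome error count is exactly their sum $|X_1| + \cdots + |X_m|$. The base case $m = 2$ is then precisely the preceding theorem.

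For the inductive step, I would assume the claim holds for $m-1$ subgraphs and group $H = G_1 \cup \cdots \cup G_{m-1}$ into a single combined subgraph. By the inductive hypothesis its distribution is $Q(j) = \sum_{a_1 + \cdots + a_{m-1} = j} P_1(a_1) \cdots P_{m-1}(a_{m-1})$, and since $H$ is disjoint from $G_m$, applying the two-subgraph theorem gives $P(k) = \sum_{r=0}^{k} Q(r) P_m(k-r)$. Substituting the expression for $Q(r)$ and merging the two nested summations under the single constraint $a_1 + \cdots + a_m = k$ yields the stated formula.

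The only genuine bookkeeping is the final reindexing, which is routine: one verifies that the composite index set $\{(a_1, \ldots, a_{m-1}, a_m) : a_1 + \cdots + a_{m-1} = r,\ a_m = k - r,\ 0 \le r \le k\}$ coincides with $\{(a_1, \ldots, a_m) : a_1 + \cdots + a_m = k\}$. There is thus no real obstacle; the corollary follows immediately from iterating the convolution established in the two-subgraph case, with the independence supplied by \autoref{thm:proppath} doing all the substantive work.
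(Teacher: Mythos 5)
Your proposal is correct and matches the route the paper intends: the paper offers no written proof (it states the corollary as ``easy to get''), and your induction on $m$, using the two-subgraph convolution theorem as both base case and inductive step, is exactly the iterated-convolution argument that statement presupposes. The one point worth noting explicitly is that your grouped graph $H = G_1 \cup \cdots \cup G_{m-1}$ need not be connected, whereas \autoref{def:disjoint} is phrased for connected directed graphs; this is harmless because the proof of the two-subgraph theorem uses only the absence of paths between the two parts (via \autoref{thm:proppath}), which your $H$ and $G_m$ satisfy, but you should state that observation when invoking it.
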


\autoref{cor:subgraphs} suggests that we just need to analyze the cases when we have a connected graph, because we can use dynamic programming to calculate the probability efficiently, with the following dynamic algorithm:

\begin{enumerate}
    \item Define $P(k,r)$ as the probability of getting $k$ error within the first $r$ subgraphs. We want to calculate $P(k,n)$ for all $1 \leq k\leq n$.
    \item Calculate $P(0,1)=P_1(0)$, $P(1,1)=P_1(1)$, $P(k,1)=0$ for $2 \leq  k \leq n$.
    \item For $2 \leq k \leq n$, we can calculate $P(k,r)$ by $P(k,r)=\sum_{q=0}^r P(k-1,r-q)P_r(q)$, which takes $O(\sum_{r=0}^n)=O(n^2)$ steps.
\end{enumerate}

The time complexity for the above dynamic algorithm is $O(n^3)$. 
Since our goal is to analyze the impact of circuit on the error distribution, a very intuitive measure is the expectation value of error distribution as well as its variation, defined as 

\begin{equation}
    E(\textbf{RSG})=\sum_{k=1}^n kP(k), \qquad Var(\textbf{RSG})=\sum_{k=1}^n (k-E(\textbf{RSG}))^2P(k)=\sum_{k=1}^n k^2P(k)-E(\textbf{RSG})^2
    \label{eq:EandVar}
\end{equation}

\begin{lemma}(Zero probability of oversize disjoint subgraphs)
    \label{lem:zeroprobability}
    If $\textbf{RSG}$ for some error propagation graph $\textbf{G}$ is composed of $m$ disjoint subgraphs $\{G_k | 1 \leq k \leq m \}$, then if $a_1+a_2+\cdots+a_m >n$, we must have:
    \begin{equation}
        P_1(a_1)P_2(a_2)\cdots P_m(a_m)=0
        \label{eq:zeroprobability}
    \end{equation}
\end{lemma}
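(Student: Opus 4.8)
The plan is to argue by a pigeonhole count on the syndrome vertices contained in each disjoint piece. For each subgraph $G_k$, let $n_k = |G_k \cap \textbf{Syn}(\textbf{G})|$ be the number of syndrome vertices it contains. Since the pieces are pairwise disjoint in the sense of \autoref{def:disjoint} and their union is $\textbf{RSG}(\textbf{G})$, which by \autoref{df:RSGGraph} contains exactly the $n$ syndrome vertices $\{Q_i[T] : 1 \le i \le n\}$, these pieces \emph{partition} the syndrome set. Hence the counts add up exactly: $\sum_{k=1}^m n_k = n$.

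The second ingredient is that the per-piece distribution $P_k$ is supported on $\{0,1,\dots,n_k\}$. Here $a_k$ counts how many syndrome vertices of $G_k$ still carry an error after propagation, which by \autoref{thm:RSGthm} is the number of the $n_k$ syndrome vertices of $G_k$ whose degree is odd in the realized error subgraph. As there are only $n_k$ such vertices, one cannot realize more than $n_k$ errors inside $G_k$; therefore $P_k(a_k) = 0$ whenever $a_k > n_k$.

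Combining these two facts closes the argument. Suppose $a_1 + \cdots + a_m > n = \sum_{k=1}^m n_k$. If we had $a_k \le n_k$ for every $k$, then summing would give $\sum_k a_k \le \sum_k n_k = n$, contradicting the hypothesis. Thus some index $j$ must satisfy $a_j > n_j$, and for that index $P_j(a_j) = 0$ by the support bound above. Since the product $P_1(a_1)\cdots P_m(a_m)$ then contains a zero factor, the whole product vanishes, which is precisely \autoref{eq:zeroprobability}.

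There is no genuinely hard step here; the only point requiring care is the bookkeeping that the disjoint subgraphs partition the syndrome vertices rather than merely covering them, so that the $n_k$ sum to exactly $n$ and the pigeonhole inequality is tight. This is what \autoref{def:disjoint} guarantees together with $\textbf{RSG}(\textbf{G})$ being the union over all syndrome vertices. Once the partition and the support bound are in place, the pigeonhole conclusion is immediate.
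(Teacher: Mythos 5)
Your proof is correct and follows essentially the same route as the paper's: observe that the disjoint pieces partition the $n$ syndrome vertices so the counts $n_k$ sum to $n$, apply the pigeonhole principle to force some $a_j > n_j$, and invoke the support bound (the paper cites \autoref{thm:weightedCouting}, you derive it from \autoref{thm:RSGthm}) to kill that factor. Your write-up is in fact slightly more careful than the paper's, which leaves the support bound and the partition claim mostly implicit.
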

\begin{proof}
Denote that $|G_k \cap \textbf{Syn}(\textbf{G})|=b_k$ for $1 \leq k \leq n$, because they disjoint, $\sum_{k=1}^n b_k=n$. According to the pigeonhole principle, there is at least one $a_k > b_k$.
As proved in \autoref{thm:weightedCouting}, when $a_k > b_k=|G_k \cap \textbf{Syn}(\textbf{G})|$, $P(k)=0$, so the product in \autoref{eq:zeroprobability} must be $0$.
\end{proof}

\begin{theorem}(Expectation of many disjoint subgraphs)
    \label{thm:ExpecDisj}
    If $\textbf{RSG}$ is composed of $m$ disjoint subgraph $\{G_k | 1 \leq k \leq m \}$, then:
    \begin{equation}
        E(\textbf{RSG})=\sum_{k=1}^m  E(G_k)
        \label{eq:DisjointSubgraph}
    \end{equation}
\end{theorem}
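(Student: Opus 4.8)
The plan is to read \autoref{eq:DisjointSubgraph} as a direct consequence of the linearity of expectation, applied to the decomposition of the total error count into per-subgraph contributions. First I would introduce, for each disjoint subgraph $G_k$, the integer-valued random variable $X_k$ that counts how many syndrome vertices lying in $G_k$ still carry an error after propagation. By construction $X_k$ is distributed according to $P_k$, so $E(G_k)=\sum_{j} j\,P_k(j)$ in the sense of \autoref{eq:EandVar}. Because every syndrome vertex belongs to exactly one of the disjoint subgraphs, the total number of errors on all syndrome vertices is $X=\sum_{k=1}^m X_k$, and by definition $E(\textbf{RSG})=\sum_k k\,P(k)=E(X)$.

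Next I would invoke \autoref{cor:subgraphs}, which shows that disjointness makes the joint distribution of $(X_1,\dots,X_m)$ factorize (equivalently, $P$ is the convolution of the $P_k$). This is more than I actually need: the key observation is that linearity of expectation holds without any independence assumption, so that $E(X)=\sum_{k=1}^m E(X_k)=\sum_{k=1}^m E(G_k)$, which is exactly the claimed identity. The substantive content—that disjointness forces the error counts to add rather than interact—has already been established earlier via \autoref{thm:proppath} (errors only travel along paths, so no error crosses between disjoint subgraphs), so this step is essentially a rewriting.

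The only point requiring a little care is the truncation convention: $P(k)$ is formally defined on all of $\mathbb{Z}$ with $P(k)=0$ whenever $k>n$. I would appeal to \autoref{lem:zeroprobability} to confirm that no probability mass lands above $n$, so that $\sum_k k\,P(k)$ and each $\sum_j j\,P_k(j)$ are finite sums over their correct index ranges and the reordering implicit in linearity is legitimate. I do not expect a genuine obstacle here; the real work is done by \autoref{cor:subgraphs}, and the remaining task is the bookkeeping of matching indices so that each $E(X_k)$ is correctly identified with $E(G_k)$ as defined in \autoref{eq:EandVar}.
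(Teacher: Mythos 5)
Your proposal is correct, but it takes a genuinely different route from the paper. The paper never introduces random variables: it works entirely with the convolution formula of \autoref{cor:subgraphs}, expanding $E(\textbf{RSG})=\sum_k k\,P(k)$, rewriting $k=a_1+\cdots+a_m$ inside the constrained sum, invoking \autoref{lem:zeroprobability} to lift the constraint $a_2+\cdots+a_m\leq n-a_1$ from each term, and then collapsing the spectator sums via $\prod_{k}\sum_{r}P_k(r)=1$ --- in effect re-deriving linearity of expectation by hand for this particular convolution. You instead decompose the total count as $X=\sum_{k=1}^m X_k$ (legitimate because the disjoint subgraphs partition the syndrome vertices, with no cross-propagation by \autoref{thm:proppath}) and apply linearity of expectation, correctly observing that independence --- and hence \autoref{cor:subgraphs} itself --- is not actually needed for the expectation identity. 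Your argument is shorter, isolates the one fact that matters (additivity of the counts), and would survive even in settings where the per-subgraph counts fail to be independent; the paper's computation, while heavier, stays entirely inside its stated distributional formalism and exercises exactly the machinery (\autoref{lem:zeroprobability} plus the factorization trick) that it must reuse in \autoref{thm:VarDisj}, where independence genuinely is needed for the covariance terms to vanish. Your handling of the truncation convention via \autoref{lem:zeroprobability} matches the paper's implicit use of it and closes the only bookkeeping gap in the linearity argument.
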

\begin{proof}
By \autoref{cor:subgraphs}, assume $G_k$ has independent distribution $P_k$ for each $1\leq k \leq m$, with $E(G_k)=\sum_{a_k=0}^n P_k(a_k)$, the expectation value is given by:
\begin{equation}
\begin{split}
        E(\textbf{RSG})&=\sum_{k=1}^n kP(k)\\
                       &=\sum_{k=1}^n k \sum_{\substack{a_1,a_2,\cdots,a_m\\ a_1+a_2+\cdots+a_m=k}}P_1(a_1)P_2(a_2)\cdots P_m(a_m)\\
                       &=\sum_{\substack{a_1,a_2,\cdots,a_m\\ a_1+a_2+\cdots+a_m \leq n}}(a_1+a_2+\cdots+a_m)P_1(a_1)P_2(a_2)\cdots P_m(a_m)\\
                       &=\sum_{a_1=0}^na_1P_1(a_1) \big(\sum_{\substack{a_2,\cdots,a_m\\ a_2+\cdots+a_m \leq n-a_1}}P_2(a_2)\cdots P_m(a_m)\big)+\\
                       &\sum_{a_2=0}^na_2P_2(a_2) \big(\sum_{\substack{a_1,a_3,\cdots,a_m\\ a_1+a_3+\cdots+a_m \leq n-a_1}}P_1(a_1)P_3(a_3)\cdots P_m(a_m)\big)+\cdots\\
                       &=\sum_{a_1=0}^na_1P_1(a_1) \big( \prod_{k=2}^m \sum_{r=0}^n P_k(r)\big)+\sum_{a_2=0}^na_2P_2(a_2) \big( \prod_{\substack{k=1\\k\neq 2}}^m \sum_{r=0}^n P_k(r)\big)+\cdots\\
                       &=\sum_{a_1=0}^na_1P_1(a_1)+\sum_{a_2=0}^na_2P_2(a_2)+\cdots +\sum_{a_m=0}^na_mP_m(a_m)\\
                       &=\sum_{k=1}^m E(G_k)
\end{split}
\label{eq:Expectation}
\end{equation}
In the derivation of \autoref{eq:Expectation}, we first expand $k$ as $k=a_1+a_2+\cdots+a_m$, and then extract $a_kP_k(aP_k)$ in front of each summation. Since $G_1$,$G_2$,$\cdots$,$G_m$ are a set of disjoint subgraphs of $\textbf{RSG}$, by \autoref{lem:zeroprobability}, we can remove the conditions for the rest of $a_i$, such as $a_2+\cdots+a_m \leq n -a_1$, since the product with $a_2+\cdots+a_m > n -a_1$ must be $0$. Next, we use the transformation $\sum_{a_2,\cdots,a_m} \prod_{k=2}^m P_k(a_k)=\prod_{k=2}^m \sum_{a_k} P_k(a_k)=1$ to simplify the equation.
\end{proof}

\begin{lemma}(Even and odd sum of binomial distribution)
    \label{lem:EvenOddSum}
    We have the following two formula:
    \begin{equation}
        \begin{split}
         \sum_{\substack{k=0 \\ k\equiv 0 \mod 2}}^T  \binom{T}{k}a^kb^{T-k}&=\frac{1}{2}(b-a)^T+\frac{1}{2}(b+a)^T  \\
         \sum_{\substack{k=1 \\ k\equiv 1 \mod 2}}^T  \binom{T}{k}a^kb^{T-k}&=\frac{1}{2}(a+b)^T-\frac{1}{2}(b-a)^T 
        \end{split}
    \label{eq:EvenOddSum}
    \end{equation}
    Specifically, when $a=p$, $b=(1-p)$:
    \begin{equation}
        \begin{split}
         \sum_{\substack{k=0 \\ k\equiv 0 \mod 2}}^T  \binom{T}{k}p^k(1-p)^{T-k}&=\frac{1}{2}(1-2p)^T+\frac{1}{2} \\
         \sum_{\substack{k=1 \\ k\equiv 1 \mod 2}}^T  \binom{T}{k}p^k(1-p)^{T-k}&=\frac{1}{2}-\frac{1}{2}(1-2p)^T
        \end{split}
    \label{eq:EvenOddSum}
    \end{equation}
\end{lemma}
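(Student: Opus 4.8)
The plan is to prove both identities simultaneously by applying the binomial theorem to the two quantities $(a+b)^T$ and $(b-a)^T$ and then exploiting the sign pattern $(-1)^k$ to isolate the even-index and odd-index contributions. This is the standard "generating-identity" trick, and I expect no genuine obstacle: the entire argument is two applications of the binomial theorem followed by an addition and a subtraction.

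Concretely, first I would write the two expansions
\begin{equation}
(a+b)^T=\sum_{k=0}^T \binom{T}{k}a^k b^{T-k},\qquad
(b-a)^T=\sum_{k=0}^T \binom{T}{k}(-1)^k a^k b^{T-k},
\end{equation}
where the second follows from the first by replacing $a$ with $-a$ so that each term picks up a factor $(-1)^k$. Adding these, the terms with odd $k$ cancel in pairs while the even terms double, and subtracting them does the reverse. This gives
\begin{equation}
(a+b)^T+(b-a)^T=2\sum_{\substack{k=0\\ k\equiv 0\bmod 2}}^T \binom{T}{k}a^k b^{T-k},\qquad
(a+b)^T-(b-a)^T=2\sum_{\substack{k=1\\ k\equiv 1\bmod 2}}^T \binom{T}{k}a^k b^{T-k}.
\end{equation}
Dividing each equation by $2$ yields exactly the first displayed pair of formulas in the statement.

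For the specialization I would simply substitute $a=p$ and $b=1-p$, so that $a+b=1$ and $b-a=1-2p$, hence $(a+b)^T=1$ and $(b-a)^T=(1-2p)^T$. Plugging these into the two general identities immediately gives $\tfrac12(1-2p)^T+\tfrac12$ for the even sum and $\tfrac12-\tfrac12(1-2p)^T$ for the odd sum, completing the proof. The only point requiring any care is the bookkeeping of which binomial sign convention is used (expanding $(b-a)^T$ rather than $(a-b)^T$), so that the surviving factor on the even terms is genuinely $(-1)^k=+1$ and the identities come out with the stated signs; everything else is a direct calculation.
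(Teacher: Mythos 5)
Your proof is correct: expanding $(a+b)^T$ and $(b-a)^T$ by the binomial theorem and then adding/subtracting to isolate the even- and odd-index terms gives exactly the stated identities, and the specialization $a=p$, $b=1-p$ follows immediately since $a+b=1$ and $b-a=1-2p$. The paper itself states this lemma without any proof, so there is nothing to compare against; your argument is precisely the standard one the author evidently had in mind, and it fills the gap cleanly.
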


\begin{theorem}(Variance of many disjoint subgraphs)
    \label{thm:VarDisj}
    If $\textbf{RSG}$ is composed of $m$ disjoint subgraphs $\{G_k | 1 \leq k \leq m \}$, then:
    \begin{equation}
        \begin{split}
        Var(\textbf{RSG})&=\sum_{k=1}^m  Var(G_k)+2\sum_{ 1\leq i\leq j\leq m } Cov(G_i,G_j)\\
        &=\sum_{k=1}^m  Var(G_k)
        \end{split}
        \label{eq:DisjointSubgraphVar}
    \end{equation}
    Where $Cov(G_i,G_j)$ is the covariance of two disjoint subgraphs $G_i,G_j$, defined as:
    \begin{equation}
          Cov(G_i,G_j)=\sum_{a_i,a_j}a_ia_j P_i(a_i)P_j(a_j)-E(G_i)E(G_j)=0
          \label{eq:Cov}
    \end{equation}
    
\end{theorem}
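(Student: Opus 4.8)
The plan is to mirror the structure of the proof of \autoref{thm:ExpecDisj}, replacing the linear weight $k$ by the quadratic weight $k^2$. I would write the total error count as $K = A_1 + \cdots + A_m$, where $A_r$ is the random number of surviving errors contributed by the disjoint subgraph $G_r$, and start from the identity $Var(\textbf{RSG}) = \sum_k k^2 P(k) - E(\textbf{RSG})^2$ taken from \autoref{eq:EandVar}. Using \autoref{cor:subgraphs} to express $P(k)$ as the constrained convolution of the distributions $P_r$, the second-moment term becomes a sum over all tuples with $a_1 + \cdots + a_m \leq n$ of $(a_1 + \cdots + a_m)^2 \prod_r P_r(a_r)$.

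The key maneuver, exactly as in \autoref{thm:ExpecDisj}, is to invoke \autoref{lem:zeroprobability} to drop the constraint $\sum_r a_r \leq n$: every tuple violating it contributes a zero product, so the constrained sum may be replaced by a fully unconstrained one over each $a_r$ separately. I would then expand $(a_1 + \cdots + a_m)^2 = \sum_r a_r^2 + 2\sum_{i<j} a_i a_j$ and factor each resulting term. For the diagonal piece, $\sum_{a_1,\ldots,a_m} a_i^2 \prod_r P_r(a_r)$ collapses to $\sum_{a_i} a_i^2 P_i(a_i) = E(A_i^2)$ because every remaining factor sums to $\sum_r P_r(a_r) = 1$; for the cross piece, $\sum a_i a_j \prod_r P_r(a_r)$ factors into $E(G_i)E(G_j)$ for the same reason. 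This yields $E(K^2) = \sum_r E(A_r^2) + 2\sum_{i<j} E(G_i)E(G_j)$.

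Finally I would subtract $E(\textbf{RSG})^2 = \left(\sum_r E(G_r)\right)^2 = \sum_r E(G_r)^2 + 2\sum_{i<j} E(G_i)E(G_j)$, whose cross terms cancel exactly those in $E(K^2)$, leaving $Var(\textbf{RSG}) = \sum_r \left(E(A_r^2) - E(G_r)^2\right) = \sum_r Var(G_r)$. The covariance identity \autoref{eq:Cov} then drops out as a by-product of the same factorization: it shows $\sum_{a_i,a_j} a_i a_j P_i(a_i)P_j(a_j) = E(G_i)E(G_j)$, so each $Cov(G_i,G_j)$ vanishes, which is precisely the probabilistic statement that disjoint subgraphs carry independent error counts, justified by \autoref{thm:proppath} since no path connects them.

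I expect the only genuine obstacle to be the bookkeeping around the constraint removal: I must verify that \autoref{lem:zeroprobability} still applies after the product $\prod_r P_r(a_r)$ is multiplied by $a_i a_j$ or $a_i^2$, i.e. that the \emph{weighted} oversize terms vanish. Since those weights are finite while the probability product is identically zero on the oversize region, this is immediate, but it is the one step I would state explicitly rather than fold silently into an appeal to independence.
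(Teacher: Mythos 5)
Your proposal is correct and follows essentially the same route as the paper's own proof: expand $k^2=(a_1+\cdots+a_m)^2$ under the constrained convolution from \autoref{cor:subgraphs}, drop the constraint via \autoref{lem:zeroprobability}, factor the diagonal and cross terms, and cancel the cross terms against $E(\textbf{RSG})^2$ so that each $Cov(G_i,G_j)$ vanishes. Your explicit remark that the weighted oversize terms still vanish is a point the paper folds in silently, but it changes nothing in the argument.
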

\begin{proof}
    \begin{equation}
    \begin{split}
              Var(\textbf{RSG})&=\sum_{k=1}^n k^2P(k)-E(\textbf{RSG})^2\\
              &=\sum_{k=1}^n k^2 \sum_{\substack{a_1,a_2,\cdots,a_m\\ a_1+a_2+\cdots+a_m=k}}P_1(a_1)P_2(a_2)\cdots P_m(a_m)-(\sum_{k=1}^m  E(G_k))^2\\
              &=\sum_{k=1}^n  \sum_{\substack{a_1,a_2,\cdots,a_m\\ a_1+a_2+\cdots+a_m=k}}(a_1+a_2+\cdots+a_m)^2P_1(a_1)P_2(a_2)\cdots P_m(a_m)-(\sum_{k=1}^m  E(G_k))^2\\
              &= \sum_{\substack{a_1,a_2,\cdots,a_m\\ a_1+a_2+\cdots+a_m \leq n}}(a_1^2+a_2^2+\cdots+a_m^2+2\sum_{1\leq i<j\leq m}a_ia_j)P_1(a_1)P_2(a_2)\cdots P_m(a_m)-(\sum_{k=1}^m  E(G_k))^2\\
              &=\sum_{a_1=0}^n a_1^2P_a(a_1)\sum_{a_k} \prod_{k=2,k\neq 1}^m P_k(a_k)+\sum_{a_2=0}^n a_2^2P_2(a_2)\sum_{a_k} \prod_{k=1,k\neq 2}^m P_k(a_k)+\cdots+\\
              &\sum_{a_m=0}^n a_m^2P_m(a_m)\sum_{a_k} \prod_{k=1,k\neq m}^m P_k(a_k)+2\sum_{1 \leq i\leq j\leq m}\sum_{a_1,a_2,\cdots,a_m}a_ia_j \prod_{k=1}^m P_k(a_k)-(\sum_{k=1}^m  E(G_k))^2\\
              &=\sum_{k=1}^m \sum_{a_k=0}^n a_k^2 P_k(a_k)+2\sum_{1 \leq i\leq j\leq m}\sum_{a_1,a_2,\cdots,a_m}a_ia_j \prod_{k=1}^m P_k(a_k)-\sum_{k=1}^m  E(G_k)^2-\sum_{1 \leq i\leq j\leq m}2E(G_i)E(G_j)\\
              &=\sum_{k=1}^m(\sum_{a_k=0}^n a_k^2 P_k(a_k)-E(G_k)^2)+2\sum_{1 \leq i\leq j\leq m}(\sum_{a_i,a_j}a_ia_j P_i(a_i)P_j(a_j)\sum_{a_k,k\neq i,j.}\prod_{k=1}^m P_k(a_k)-E(G_i)E(G_j))\\
              &=\sum_{k=1}^m Var(G_k)+2\sum_{1 \leq i\leq j\leq m}(\sum_{a_i,a_j}a_ia_j P_i(a_i)P_j(a_j)-E(G_i)E(G_j))\\
              &=\sum_{k=1}^m  Var(G_k)+2\sum_{ 1\leq i\leq j\leq m } Cov(G_i,G_j)\\
              &=\sum_{k=1}^m  Var(G_k)
    \end{split}
    \end{equation}
\end{proof}

A directly application with the knowledge of expectation value of error number distribution is estimating the logical error rate under error propagation.

\begin{theorem}(Bounded logical error rate by expectation value)
    \label{thm:logica}
    For a quantum circuit $\textbf{C}$ with quantum error correction code with code parameter $[[n,k,d]]$, where $d$ is the code distance, and $d$ is also the maximum number of error that the code can correct, with the knowledge of the expectation value $
     E(\textbf{RSG}(\mathbf{G}(\mathbf{C},n,0,T)[1,T,p]))$ of error number on syndrome qubit after some error propagation in $\textbf{C}$, the logical error rate is bounded by
     \begin{equation}
         P_L(\textbf{RSG}(\mathbf{G}(\mathbf{C},n,0,T)[1,T,p]),d) \leq \frac{ E(\textbf{RSG}(\mathbf{G}(\mathbf{C},n,0,T)[1,T,p]))}{d+1}
         \label{eq:ProbLogical}
     \end{equation}
\end{theorem}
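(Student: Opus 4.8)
The plan is to recognize that the right-hand side of \autoref{eq:ProbLogical} is exactly what Markov's inequality yields when applied to the random variable that counts the number of residual errors on the syndrome qubits. Let $K$ denote the number of bit-flip errors present on the syndrome vertices $\textbf{Syn}(\textbf{G})$ at time $T$ after propagation, so that $K$ is a non-negative integer-valued random variable whose distribution is precisely the $P(k)$ computed in \autoref{thm:weightedCouting}. Its expectation is $E(K)=E(\textbf{RSG})=\sum_{k=1}^n kP(k)$ as defined in \autoref{eq:EandVar}, and this is the quantity $E(\textbf{RSG}(\mathbf{G}(\mathbf{C},n,0,T)[1,T,p]))$ appearing in the statement.

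First I would make precise the event that triggers a logical error. By the assumption that the code can correct any error pattern with at most $d$ single-qubit errors, decoding succeeds whenever $K\le d$, and hence a logical error can only occur when $K\ge d+1$. This gives the event containment $\{\text{logical error}\}\subseteq\{K\ge d+1\}$ and therefore $P_L\le P(K\ge d+1)$; this is the only modeling step, and it is where the stated convention that ``$d$ is the maximum number of errors the code can correct'' is used.

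Next I would invoke Markov's inequality. Since $K\ge 0$ and $d+1>0$,
\begin{equation}
 P(K\ge d+1)\le \frac{E(K)}{d+1}.
\end{equation}
Chaining this with $P_L\le P(K\ge d+1)$ and substituting $E(K)=E(\textbf{RSG}(\mathbf{G}(\mathbf{C},n,0,T)[1,T,p]))$ reproduces \autoref{eq:ProbLogical} verbatim.

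I expect the only genuine obstacle to be conceptual rather than computational: justifying the inequality $P_L\le P(K\ge d+1)$, i.e.\ arguing that every run with at most $d$ residual syndrome errors is correctable and thus contributes nothing to the logical-error event. Once that containment is granted, the rest is the textbook Markov bound and needs no further work; in particular, no independence assumption, no structural property of $\textbf{RSG}(\mathbf{G})$, and no detail of the distribution $P(k)$ beyond non-negativity of $K$ and finiteness of $E(K)$ are required.
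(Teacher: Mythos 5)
Your proposal is correct and follows essentially the same route as the paper: identify $P_L$ with the tail probability $\sum_{k=d+1}^n P(k)$ of the syndrome-error count and apply Markov's inequality with threshold $d+1$. The only (harmless) difference is that the paper takes $P_L=\sum_{k=d+1}^n P(k)$ as the definition via the code distance, whereas you more cautiously argue only the containment $P_L\le P(K\ge d+1)$, which still yields \autoref{eq:ProbLogical}.
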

\begin{proof}
    By the definition of code distance $d$,
    \begin{equation}
        P_L(\textbf{RSG}(\mathbf{G}(\mathbf{C},n,0,T)[1,T,p]),d)=\sum_{k=d+1}^n P(k)
        \label{eq:DefProbLog}
    \end{equation}
    By Markov inequality, 
    \begin{equation}
        \sum_{k=d+1}^n P(k) \leq \frac{E(k)}{d+1}
        \label{eq:Markov}
    \end{equation}
    Combine \autoref{eq:DefProbLog} with \autoref{eq:Markov} and we have the result in \autoref{eq:ProbLogical}.
\end{proof}

\autoref{thm:Edisjoint} and \autoref{thm:VarDisj} indicate that it is easy to calculate and understand the expectation and variance of some $\textbf{RSG}$ once each disjoint component of $\textbf{RSG}$ is known.

Next, we start to analyze the expectation value and variance of the distribution with bounded propagation. A simple observation is through the degree of source vertices $\textbf{Source}(\textbf{G})$ in $\textbf{RSG}(\textbf{G})$:
\begin{enumerate}
    \item When a source vertex $V$ has degree $d$ equals $1$, it only connects with on syndrome vertex in $\textbf{Syn}(\textbf{G})$, there is not error propagation caused by $V$.
    \item When a source vertex $V$ has degree $d>1$, the bit-flip error propagates to $d$ different syndrome vertices.
\end{enumerate}

\begin{definition}(Shift of distribution)
    \label{df:Shift}
    For a propagation graph $\mathbf{G}(\mathbf{C},n,m,T)[t_1,t_2]$,  we define the shift of error number distribution caused by $\textbf{G}$ as the difference between its expectation value of number of bit-flip errors with that in $\mathbf{G}(\mathbf{C}_0,n,0,T)[t_1,t_2]$, where $\mathbf{C}_0$ is another circuit without any quantum gates.
    \begin{equation}
        \mathcal{T}(\mathbf{G}(\mathbf{C},n,m,T)[t_1,t_2,p])=E(\textbf{RSG}(\mathbf{G}(\mathbf{C},n,m,T)[1,T,p]))-E(\textbf{RSG}(\mathbf{G}(\mathbf{C}_0,n,0,T)[1,T,p]))
        \label{eq:ShiftOfDistribution}
    \end{equation}
\end{definition}

\begin{lemma}(Expectation value for $\textbf{C}_0$)
    \label{lem:expC0}
    The expectation value and variance for error distribution when there is no $CNOT$ gate is:
    \begin{equation}
       E(\textbf{RSG}(\mathbf{G}(\mathbf{C}_0,n,0,T)[1,T,p]))=\frac{n}{2}-\frac{n}{2}(1-2p)^T
        \label{eq:expC0}
    \end{equation}
    \begin{equation}
       Var(\textbf{RSG}(\mathbf{G}(\mathbf{C}_0,n,0,T)[1,T,p]))=\frac{n}{4}-\frac{n}{4}(1-2p)^{2T}
        \label{eq:varC0}
    \end{equation}    
    
\end{lemma}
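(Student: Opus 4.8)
The plan is to exploit the fact that when $\mathbf{C}_0$ contains no $CNOT$ gates, its $\textbf{EPSTG}$ has only the vertical edges $(Q_i[t],Q_i[t+1])$, so the associated $\textbf{RSG}$ splits into $n$ mutually disjoint chains, one per qubit line (exactly the situation illustrated in \autoref{fig:Disjoint}). First I would record, via \autoref{def:disjoint}, that these $n$ single-qubit subgraphs $\{G_k\}_{k=1}^n$ are pairwise disjoint, so that the expectation and variance of the whole $\textbf{RSG}$ reduce, by \autoref{thm:ExpecDisj} and \autoref{thm:VarDisj}, to the sums $\sum_k E(G_k)$ and $\sum_k Var(G_k)$. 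By the translational symmetry of $\mathbf{C}_0$ every $G_k$ is identical, so it suffices to analyze a single qubit line and multiply by $n$.

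Next I would compute the per-qubit error statistics. By \autoref{thm:RSGthm}, the syndrome vertex $Q_k[T]$ carries a bit-flip error exactly when an odd number of the independent bit-flips along its $T$ time windows occurred, an event of probability $P(\text{odd})$ as written in \autoref{eq:PoddPeven}. Applying the odd-sum identity of \autoref{lem:EvenOddSum} with $a=p$ and $b=1-p$ collapses this to the closed form
\begin{equation}
   q := P(\text{odd}) = \frac{1}{2}-\frac{1}{2}(1-2p)^T .
\end{equation}
Since each $G_k$ has a single syndrome vertex, its error count is a Bernoulli$(q)$ indicator, giving $E(G_k)=q$ and $Var(G_k)=q(1-q)$; equivalently the total error number is Binomial$(n,q)$.

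Finally I would assemble the two formulas. Summing the expectations yields $E(\textbf{RSG})=nq=\frac{n}{2}-\frac{n}{2}(1-2p)^T$, which is \autoref{eq:expC0}. For the variance I would factor $q(1-q)=\frac{1}{4}\bigl(1-(1-2p)^T\bigr)\bigl(1+(1-2p)^T\bigr)=\frac{1}{4}\bigl(1-(1-2p)^{2T}\bigr)$ by difference of squares, and multiply by $n$ to obtain \autoref{eq:varC0}. I do not anticipate a genuine difficulty here: the only point requiring care is the justification that the $n$ qubit lines are disjoint in the precise sense of \autoref{def:disjoint}, so that the covariance terms in \autoref{thm:VarDisj} vanish and the Bernoulli indicators are independent. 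Once that structural observation is in place, the rest is the algebraic simplification supplied by \autoref{lem:EvenOddSum} together with the elementary factorization of $q(1-q)$.
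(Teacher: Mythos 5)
Your proposal is correct and takes essentially the same route as the paper's own proof: both decompose $\textbf{RSG}(\mathbf{C}_0)$ into $n$ disjoint single-qubit subgraphs, reduce via \autoref{thm:ExpecDisj} and \autoref{thm:VarDisj} to per-qubit sums, identify the per-qubit error as a Bernoulli indicator with success probability $P(\text{odd})$, and obtain the closed forms from \autoref{lem:EvenOddSum}. The only cosmetic difference is that you apply \autoref{lem:EvenOddSum} up front and finish with the difference-of-squares factorization of $q(1-q)$, whereas the paper carries the binomial sums through as $P(1)P(0)$ and simplifies at the end --- mathematically the same computation.
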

\begin{proof}
An $n$-qubit $\textbf{RSG}(\mathbf{C}_0)$ can be viewed as $n$ disjoint subgraphs, each composed of exactly $T$ source vertices and $1$ syndrome vertex. An example when $n=4$ is shown in \autoref{fig:Disjoint}. According to \autoref{thm:ExpecDisj} and \autoref{thm:VarDisj}, we just need to calculate the expectation value and variance of each of these subgraphs.
According to \autoref{thm:weightedCouting}, the probability $P(k)$ is given by
\begin{equation}
    P(k)=\sum_{\substack{\textbf{G'} \subset \textbf{RSG}(\textbf{G}) \\ \textbf{G'} \text{is error subgraph}  \\|\textbf{G'} \cap \textbf{Syn}(G)|=k}} P(\textbf{G'} \cap \textbf{Source}(\mathbf{G}))
\end{equation}
By \autoref{thm:spanningsubgraphparity},  $\textbf{G'}$ is an error subgraph if and only if every vertex in $\textbf{G'} \cap \textbf{Syn}(\mathbf{G})$ has odd degree. Since there are only $1$ syndrome vertex, we can simply count the possibility of choosing an odd number of source vertices, given by:

\begin{equation}
    P(1)=P(\text{odd})=\sum_{k=0}^{\lfloor T/2 \rfloor} \binom{T}{2k+1}p^{2k+1}(1-p)^{T-2k-1}
\end{equation}
The expectation value is:
\begin{equation}
    \begin{split}
          E(k)&=0\times(1-P(1))+1\times P(1)\\
          &=P(1)\\
          &=\sum_{k=0}^{\lfloor T/2 \rfloor} \binom{T}{2k+1}p^{2k+1}(1-p)^{T-2k-1}  
    \end{split}
\end{equation}
The value of the variance is:
\begin{equation}
    \begin{split}
            Var(k)&=E(k^2)-E(k)^2\\
                  &=(1^2\times P(1))-P(1)^2\\
                  &=P(1)(1-P(1))\\
                  &=P(1)P(0)\\
                  &=\sum_{k_1=0}^{\lfloor T/2 \rfloor} \binom{T}{2k_1+1}p^{2k_1+1}(1-p)^{T-2k_1-1}  \sum_{k_2=0}^{\lfloor T/2 \rfloor} \binom{T}{2k_2}p^{2k_2}(1-p)^{T-2k_2}
    \end{split}
\end{equation}

By \autoref{lem:expC0} and \autoref{thm:VarDisj}, 
\begin{equation}
     E(\textbf{RSG}(\mathbf{G}(\mathbf{C}_0,n,0,T)[1,T,p]))=n\sum_{k=0}^{\lfloor T/2 \rfloor} \binom{T}{2k+1}p^{2k+1}(1-p)^{T-2k-1}  
\end{equation}
By \autoref{lem:EvenOddSum}, the above summation is:

\begin{equation}
   E(\textbf{RSG}(\mathbf{G}(\mathbf{C}_0,n,0,T)[1,T,p]))=\frac{n}{2}-\frac{n}{2}(1-2p)^T
\end{equation}

\begin{equation}
     Var(\textbf{RSG}(\mathbf{G}(\mathbf{C}_0,n,0,T)[1,T,p]))=n\sum_{k_1=0}^{\lfloor T/2 \rfloor} \binom{T}{2k_1+1}p^{2k_1+1}(1-p)^{T-2k_1-1}  \sum_{k_2=0}^{\lfloor T/2 \rfloor} \binom{T}{2k_2}p^{2k_2}(1-p)^{T-2k_2}
\end{equation}

By \autoref{lem:EvenOddSum}, the above summation is:

\begin{equation}
\begin{split}
       Var(\textbf{RSG}(\mathbf{G}(\mathbf{C}_0,n,0,T)[1,T,p]))&=n(\frac{1}{2}+\frac{1}{2}(1-2p)^T)(\frac{1}{2}-\frac{1}{2}(1-2p)^T)\\
       &=\frac{n}{4}-\frac{n}{4}(1-2p)^{2T}
\end{split}
\label{eq:varC0}
\end{equation}    

\end{proof}

\begin{corollary}(Convergence of Expectation value for $\textbf{C}_0$)
    \label{cor:convergeC0}
    
    \begin{equation}
       \lim_{T\rightarrow \infty} E(\textbf{RSG}(\mathbf{G}(\mathbf{C}_0,n,0,T)[1,T,p]))=\frac{n}{2}
        \label{eq:expC0Converg}
    \end{equation}
    \begin{equation}
       \lim_{T\rightarrow \infty} Var(\textbf{RSG}(\mathbf{G}(\mathbf{C}_0,n,0,T)[1,T,p]))=\frac{n}{4}
        \label{eq:varC0Converg}
    \end{equation}   
\end{corollary}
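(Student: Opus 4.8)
The plan is to obtain both limits as immediate consequences of the closed-form expressions already proved in \autoref{lem:expC0}; essentially no new estimate is needed beyond passing two geometric factors to their limit. Recall that \autoref{lem:expC0} furnishes
\begin{equation}
   E(\textbf{RSG}(\mathbf{G}(\mathbf{C}_0,n,0,T)[1,T,p]))=\frac{n}{2}-\frac{n}{2}(1-2p)^T,\qquad Var(\textbf{RSG}(\mathbf{G}(\mathbf{C}_0,n,0,T)[1,T,p]))=\frac{n}{4}-\frac{n}{4}(1-2p)^{2T}.
\end{equation}
Because $n$ is a fixed constant independent of $T$, the entire $T$-dependence sits in the two factors $(1-2p)^T$ and $(1-2p)^{2T}$, so the problem reduces to controlling those.

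The single key step is the observation that for a genuine error probability $0<p<1$ one has $|1-2p|<1$, whence both $(1-2p)^T\to 0$ and $(1-2p)^{2T}\to 0$ as $T\to\infty$. Substituting these into the two displayed identities annihilates the second summand in each, leaving $\frac{n}{2}$ and $\frac{n}{4}$ respectively, which are exactly \autoref{eq:expC0Converg} and \autoref{eq:varC0Converg}.

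It is worth recording why these are the ``right'' numbers, as a sanity check on the algebra: the $n$-qubit graph $\textbf{RSG}(\mathbf{C}_0)$ splits into $n$ independent single-line components, and on each line the surviving-error indicator is Bernoulli with success probability $P(\text{odd})$. As $T\to\infty$ the parity of a long run of independent bit-flips becomes asymptotically unbiased, i.e.\ $P(\text{odd})\to \tfrac12$. Hence the expectation tends to $n\cdot\tfrac12=\tfrac{n}{2}$, and the variance of a sum of $n$ independent $\mathrm{Bernoulli}(\tfrac12)$ indicators tends to $n\cdot\tfrac12\cdot\tfrac12=\tfrac{n}{4}$, matching the formal limits.

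There is no real obstacle in the argument itself; the only point requiring care is the parameter range, which I would flag explicitly. At $p=0$ the factor $(1-2p)^T\equiv 1$ (no noise, so the expectation stays at $0$) and at $p=1$ the factor $(-1)^T$ fails to converge, so the corollary is implicitly asserted under the standing assumption $0<p<1$ that underlies the noise model of \autoref{eq:bitflipmodel}. Once this hypothesis is stated, each limit follows in a single line from \autoref{lem:expC0}.
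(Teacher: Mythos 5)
Your proposal is correct and follows the same (and essentially only) route the paper intends: the corollary is stated without proof precisely because it falls out of the closed forms in \autoref{lem:expC0} once one notes $|1-2p|<1$ forces $(1-2p)^T\to 0$ and $(1-2p)^{2T}\to 0$. Your explicit flagging of the parameter range $0<p<1$ (excluding the degenerate $p=0$ and oscillating $p=1$ cases) is a small but genuine improvement on the paper's implicit treatment.
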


\begin{table}[h!]
    \centering
    \begin{tabular}{|c|c|c|c|c|c|c|}
        \hline
        Index & $Q_1[1]$ & $Q_1[2]$  & $Q_3[1]$ & $Q_3[2]$ & $\#$ Errors in  $Q_1[3],Q_2[3]$ & Probability \\ \hline
        1 &  0 &  0 &  0 &  0& 0 &  $(1-p)^4$\\ \hline
        2 &  0 &  0 &  0 &  1& 1 &  $p(1-p)^3$\\ \hline
        3 &  0 &  0 &  1 &  0& 1 &  $p(1-p)^3$\\ \hline
        4 &  0 &  0 &  1 &  1& 0 &  $p^2(1-p)^2$\\ \hline
        5 &  0 &  1 &  0 &  0& 1 &  $p(1-p)^3$\\ \hline
        6 &  0 &  1 &  0 &  1& 2 &  $p^2(1-p)^2$\\ \hline
        7 &  0 &  1 &  1 &  0& 2 &  $p^2(1-p)^2$\\ \hline
        8 &  0 &  1 &  1 &  1& 1 &  $p^3(1-p)$\\ \hline
        9 &  1 &  0 &  0 &  0& 2 &  $p(1-p)^3$\\ \hline
        10 & 1 &  0 &  0 &  1& 1 &  $p^2(1-p)^2$\\ \hline
        11 & 1 &  0 &  1 &  0& 1 &  $p^2(1-p)^2$\\ \hline
        12 & 1 &  0 &  1 &  1& 2 &  $p^3(1-p)$\\ \hline
        13 & 1 &  1 &  0 &  0& 1 &  $p^2(1-p)^2$\\ \hline
        14 & 1 &  1 &  0 &  1& 0 &  $p^3(1-p)$\\ \hline
        15 & 1 &  1 &  1 &  0& 0 &  $p^3(1-p)$\\ \hline
        16 & 1 &  1 &  1 &  1& 1 &  $p^4$\\ \hline

    \end{tabular}
    \caption{All $16$ cases for the number of error for parallel transversal circuit. }
    \label{tab:transversal}
\end{table}

\begin{lemma}(Expectation value for parallel transversal circuit)
    \label{lem:expTrans}
    The expectation value and variance for the error distribution for the parallel transverse circuit $CNOT$, as shown in \autoref{fig:Trans},$\textbf{C}_T$ are \begin{equation}
       E(\textbf{RSG}(\mathbf{G}(\mathbf{C}_T,n,0,T=2)[1,T=2,p]))=\frac{n}{2}(5p-8p^2+4p^3)
       \label{eq:ExpTransversal}
    \end{equation}
    The variance of the error number distribution is given by:
    \begin{equation}
               Var(\textbf{RSG}(\mathbf{G}(\mathbf{C}_T,n,0,T=2)[1,T=2,p]))=\frac{n}{2}(7p-35p^2+84p^3-104p^4+64p^5-16p^6)
    \end{equation}
    The expectation value when there are not transversal $CNOT$ is
    \begin{equation}
       E(\textbf{RSG}(\mathbf{G}(\mathbf{C}_0,n,0,T=2)[1,T=2,p]))=n(2p-2p^2)
       \label{eq:ExpNoCircuit}
    \end{equation}    
    The variance of the error number distribution with empty circuit is given by:
    \begin{equation}
       Var(\textbf{RSG}(\mathbf{G}(\mathbf{C}_0,n,0,T=2)[1,T=2,p]))=n(2p-6p^2+8p^3-4p^4)
    \end{equation}
\end{lemma}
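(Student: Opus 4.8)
The plan is to exploit the disjoint-subgraph machinery already established, so that the whole computation collapses to a single qubit pair. First I would observe that for the parallel transversal circuit $\textbf{C}_T$ of \autoref{fig:Trans}, the $\textbf{RSG}$ decomposes into $r=n/2$ mutually disjoint subgraphs, one per control--target pair $(Q_k,Q_{k+r})$: every edge leaving a source vertex of a given pair terminates on a syndrome vertex of that same pair, so by \autoref{def:disjoint} no path connects vertices of distinct pairs. Moreover these pair-subgraphs are all isomorphic, since the circuit acts identically on each pair, so they share one common per-pair distribution $\{P(0),P(1),P(2)\}$. By \autoref{thm:ExpecDisj} and \autoref{thm:VarDisj} it then suffices to compute the expectation and variance of a single pair and multiply by $r=n/2$.

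Next I would read off the per-pair distribution from \autoref{tab:transversal}, which enumerates all $2^4=16$ source-error configurations on the four source vertices of one pair together with the resulting syndrome-error count and its probability. Grouping the probabilities by error count reproduces exactly $P(1)$ and $P(2)$ of \autoref{eq:pCnot2q}. Since a pair yields at most two errors, the per-pair moments are $E=P(1)+2P(2)$ and $E(k^2)=P(1)+4P(2)$, whence $\mathrm{Var}=E(k^2)-E^2$. Substituting the polynomials for $P(1)$ and $P(2)$ and expanding gives $E=5p-8p^2+4p^3$ and $\mathrm{Var}=7p-35p^2+84p^3-104p^4+64p^5-16p^6$ for one pair; multiplying by $n/2$ yields \autoref{eq:ExpTransversal} and the stated variance.

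For the empty-circuit baseline $\textbf{C}_0$ at $T=2$, no new work is required: I would simply specialize \autoref{lem:expC0}, whose formulas $\tfrac{n}{2}-\tfrac{n}{2}(1-2p)^T$ and $\tfrac{n}{4}-\tfrac{n}{4}(1-2p)^{2T}$ hold for every $T$. Setting $T=2$ and expanding $(1-2p)^2=1-4p+4p^2$ and $(1-2p)^4=1-8p+24p^2-32p^3+16p^4$ gives $n(2p-2p^2)$ and $n(2p-6p^2+8p^3-4p^4)$, matching \autoref{eq:ExpNoCircuit} and the stated variance.

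The only genuinely nontrivial step is the structural claim of the first paragraph, namely that the parallel transversal circuit splits into independent, identical pairs, since this is what licenses both the factor $n/2$ and the reuse of a single pair distribution. Everything afterward is bookkeeping with \autoref{tab:transversal} and routine polynomial algebra, so I expect no conceptual obstacle, only the need to keep the expansions error-free, in particular the term $E^2$ appearing in the variance.
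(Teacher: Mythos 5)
Your proposal is correct and follows essentially the same route as the paper: decompose the $\textbf{RSG}$ into $n/2$ disjoint, isomorphic pair-subgraphs, read the per-pair distribution $P(0),P(1),P(2)$ off the $16$-case enumeration of \autoref{tab:transversal}, and combine via \autoref{thm:ExpecDisj} and \autoref{thm:VarDisj}; all your polynomial expansions match the paper's. The only (harmless) difference is the baseline: you specialize \autoref{lem:expC0} at $T=2$, whereas the paper re-derives the empty-circuit case by directly enumerating the two-source, one-syndrome subgraphs --- the two computations are identical in substance.
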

\begin{proof}
For each disjoint subgraph with $4$ source vertices and two syndrome vertices, We can enumerate all $16$ error pattern and calculate the probability:
\begin{equation}
\begin{split}
      P(0)&=(1-p)^4+p^2(1-p)^2+2p^3(1-p)=1-4p+7p^2-4p^3\\
      P(1)&=3p(1-p)^3+p^3(1-p)+3p^2(1-p)^2+p^4=3p-6p^2+4p^3\\
      P(2)&=2p^2(1-p)^2+p(1-p)^3+p^3(1-p)=p-p^2
\end{split}
\end{equation}
By \autoref{thm:ExpecDisj}, the expectation value for the number of error is:
\begin{equation}
\begin{split}
        E(\textbf{RSG})&=\frac{n}{2}(0\times P(0)+1\times P(1)+2 \times  P(2))\\
        &=\frac{n}{2}(5p-8p^2+4p^3)
\end{split}
\end{equation}
To calculate the variance, we need $\sum_{k=0}^2 k^2P(k)$, which is given by:
\begin{equation}
\begin{split}
      \sum_{k=0}^2 k^2P(k)&=0^2P(0)+1^2P(1)+2^2P(2)\\
                          &=7p-10p^2+4p^3
\end{split}
\end{equation}
The variance for parallel transversal circuit is thus:
\begin{equation}
\begin{split}
    Var(\textbf{RSG})&=\frac{n}{2}(\sum_{k=0}^2 k^2P(k)-(5p-8p^2+4p^3)^2)\\
    &=\frac{n}{2}(7p-35p^2+84p^3-104p^4+64p^5-16p^6)
\end{split}
\end{equation}

On the other hand, if there are no $CNOT$ gates, and $T=2$, we just need to analyze $n$ disjoint subgraph, each only with $2$ source vertices and one syndrome vertices:

\begin{equation}
\begin{split}
      P(0)&=p^2+(1-p)^2=2p^2-2p+1\\
      P(1)&=2p(1-p)=2p-2p^2\\
      E(\textbf{RSG})&=1\times P(1)=2p-2p^2
\end{split}
\end{equation}

By \autoref{thm:ExpecDisj}, the expectation value for the number of error is:
\begin{equation}
\begin{split}
        E(\textbf{RSG}(\textbf{C}_0))=n(2p-2p^2)
\end{split}
\end{equation}
$\sum_{k=0}^1 k^2P(k)$ is given as follows:
\begin{equation}
\begin{split}
      \sum_{k=0}^1 k^2P(k)&=0^2P(0)+1^2P(1)\\
                          &=2p-2p^2
\end{split}
\end{equation}
The variance for an empty circuit with $T=2$ is thus:
\begin{equation}
\begin{split}
    Var(\textbf{RSG})&=n(\sum_{k=0}^1 k^2P(k)-(2p-2p^2)^2)\\
    &=n(2p-6p^2+8p^3-4p^4)
\end{split}
\end{equation}
\end{proof}
The most trivial cases of error propagation is the accumulation of errors throughout time in the circuit, as we have already observed through a numerical experiment. Now, we also give an exact measure of the change in error distribution with the accumulation of errors by \autoref{df:Shift}.

\begin{theorem}(Shift of distribution with time in empty circuit)
    \label{thm:shifttime}
    In an empty circuit with $n$ qubit, we define the shift of distribution from time $T_1$ to $T_2$ as
    \begin{equation}
        \mathcal{T}(p,n,T_1,T_2)=E(\textbf{RSG}(\mathbf{G}(\mathbf{C},n,0,T_2)[1,T_2,p]))-E(\textbf{RSG}(\mathbf{G}(\mathbf{C},n,0,T_1)[1,T_1,p]))
        \label{eq:ShiftOfDistributionwithTime}
    \end{equation}
    Then 
    \begin{equation}
        \mathcal{T}(p,n,T_1,T_2)=\frac{n}{2}[(1-2p)^{T_1}-(1-2p)^{T_2}]
        \label{eq:TimeShift}
    \end{equation}
    $\mathcal{T}(p,n,T_1,T_2)$ is always positive when $T_2 \geq T_1$ and $p<\frac{1}{2}$. In other words, the function $E(\textbf{RSG}(\mathbf{G}(\mathbf{C}_0,n,0,T)[1,T,p]))$ increases monotonically with respect to $T$ when we set $n,p$ fixed.
\end{theorem}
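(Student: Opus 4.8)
The plan is to derive the shift formula directly from the closed-form expectation already established in \autoref{lem:expC0}, and then read off the sign and monotonicity from elementary properties of the base $1-2p$. First I would recall that \autoref{eq:expC0} gives, for the empty circuit, the single closed expression $E(\textbf{RSG}(\mathbf{G}(\mathbf{C}_0,n,0,T)[1,T,p]))=\frac{n}{2}-\frac{n}{2}(1-2p)^T$ valid for every total time window $T$. Since the definition of $\mathcal{T}(p,n,T_1,T_2)$ in \autoref{eq:ShiftOfDistributionwithTime} is nothing but the difference of this expectation evaluated at $T_2$ and at $T_1$, the whole formula reduces to substituting $T=T_2$ and $T=T_1$ into that expression and subtracting.

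The second step is the one-line algebraic simplification: the two constant terms $\frac{n}{2}$ cancel, leaving $\mathcal{T}(p,n,T_1,T_2)=\bigl(\tfrac{n}{2}-\tfrac{n}{2}(1-2p)^{T_2}\bigr)-\bigl(\tfrac{n}{2}-\tfrac{n}{2}(1-2p)^{T_1}\bigr)=\frac{n}{2}\bigl[(1-2p)^{T_1}-(1-2p)^{T_2}\bigr]$, which is exactly \autoref{eq:TimeShift}. No case analysis is required for the formula itself; it is an immediate corollary of the already-proven expectation.

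For the sign and monotonicity claims I would argue as follows. When $0<p<\frac{1}{2}$ the base satisfies $0<1-2p<1$, so the map $T\mapsto(1-2p)^T$ is strictly decreasing in its exponent; hence $T_2\geq T_1$ forces $(1-2p)^{T_2}\leq(1-2p)^{T_1}$ and therefore $\mathcal{T}(p,n,T_1,T_2)\geq 0$, with strict positivity precisely when $T_1<T_2$. Applying this to an arbitrary pair $T_1<T_2$ shows that $E(\textbf{RSG}(\mathbf{G}(\mathbf{C}_0,n,0,T)[1,T,p]))$ is nondecreasing in $T$, which is the asserted monotonicity. There is essentially no obstacle in this proof—it is a direct consequence of \autoref{lem:expC0}—so the only point warranting care is the boundary bookkeeping: the shift vanishes when $T_1=T_2$ or when $p\in\{0,\tfrac{1}{2}\}$, so the word ``positive'' in the statement should be read as ``nonnegative'' unless one assumes the strict hypotheses $T_1<T_2$ and $0<p<\tfrac{1}{2}$ simultaneously.
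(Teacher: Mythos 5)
Your proposal is correct and follows essentially the same route as the paper's proof: substitute the closed form $E(\textbf{RSG}(\mathbf{G}(\mathbf{C}_0,n,0,T)[1,T,p]))=\frac{n}{2}-\frac{n}{2}(1-2p)^T$ from \autoref{lem:expC0} into the definition, cancel the constant terms, and deduce the sign from $0<1-2p<1$. If anything, your version is slightly more careful than the paper's: you correctly retain the factor $n$ in the final expression (the paper's last displayed line drops it, a typo relative to \autoref{eq:TimeShift}), and your remark that the shift is merely nonnegative at the boundary cases $T_1=T_2$ or $p\in\{0,\tfrac{1}{2}\}$ is a legitimate sharpening of the stated claim.
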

\begin{proof}
By \autoref{eq:expC0} and the definition of \autoref{df:Shift}, we have 
\begin{equation}
\begin{split}
            \mathcal{T}(p,n,T_1,T_2)&=E(\textbf{RSG}(\mathbf{G}(\mathbf{C},n,0,T_2)[1,T_2,p]))-E(\textbf{RSG}(\mathbf{G}(\mathbf{C},n,0,T_1)[1,T_1,p]))\\
            &=n(\frac{1}{2}-\frac{1}{2}(1-2p)^{T_2})-n(\frac{1}{2}-\frac{1}{2}(1-2p)^{T_1})\\
            &=\frac{1}{2}[(1-2p)^{T_1}-(1-2p)^{T_2}]
\end{split}
\end{equation}
Which is always positive when $T_2>T_1$ and $p<\frac{1}{2}$.
\end{proof}

\begin{theorem}(Shift of distribution with error probability in empty circuit)
    \label{thm:shiftp}
For two empty circuit with different error rate $p_2$ and $p_1$, the probability distribution shift caused by the change of error rate is defined as:
    \begin{equation}
        \mathcal{T}(T,n,p_1,p_2)=E(\textbf{RSG}(\mathbf{G}(\mathbf{C},n,0,T)[1,T,p_2]))-E(\textbf{RSG}(\mathbf{G}(\mathbf{C},n,0,T)[1,T,p_1]))
        \label{eq:ShiftOfDistributionwithprob}
    \end{equation}
    By \autoref{lem:expC0}, \autoref{eq:ShiftOfDistributionwithprob} is:
    \begin{equation}
        \mathcal{T}(T,n,p_1,p_2)=\frac{1}{2}n[(1-2p_1)^{T}-(1-2p_2)^{T}]
        \label{eq:probShift}
    \end{equation}
    When $p_1<p_2<\frac{1}{2}$, $\mathcal{T}(T,n,p_1,p_2)$ is strictly positive, the shift of distribution with time increase with error rate $p$.
\end{theorem}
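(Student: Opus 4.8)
The plan is to mirror the proof of \autoref{thm:shifttime} almost verbatim, since the only change is that the quantity being varied is now the physical error rate $p$ rather than the time horizon $T$. First I would invoke \autoref{lem:expC0}, which supplies the closed form $E(\textbf{RSG}(\mathbf{G}(\mathbf{C}_0,n,0,T)[1,T,p]))=\frac{n}{2}-\frac{n}{2}(1-2p)^T$ for the expected number of surviving syndrome errors in an empty circuit. Substituting the two error rates $p_2$ and $p_1$ into this expression and subtracting, the two constant $\frac{n}{2}$ terms cancel, and what remains is $\frac{n}{2}[(1-2p_1)^T-(1-2p_2)^T]$, which is precisely \autoref{eq:probShift}.

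For the sign claim I would argue by an elementary monotonicity observation. When $0 \le p_1 < p_2 < \frac{1}{2}$ we have $1-2p_1 > 1-2p_2 > 0$, so both bases are strictly positive and the first strictly exceeds the second. Because the map $x \mapsto x^T$ is strictly increasing on the positive reals for any fixed $T \ge 1$, this gives $(1-2p_1)^T > (1-2p_2)^T$, so the bracketed difference is strictly positive; multiplying by the positive prefactor $\frac{n}{2}$ preserves strict positivity and yields $\mathcal{T}(T,n,p_1,p_2) > 0$.

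There is no genuine obstacle here: the entire content is a direct substitution from \autoref{lem:expC0} followed by a one-line monotonicity step, and I expect the write-up to be a few lines at most. The only point deserving a moment of care is confirming that the base $1-2p$ stays positive, which is exactly what the hypothesis $p<\frac{1}{2}$ guarantees and what makes the power $x \mapsto x^T$ order-preserving. I would remark in passing that the difference degenerates to zero when $p_1=p_2$ and can reverse sign if one permits $p>\frac{1}{2}$, so the stated condition $p_1<p_2<\frac{1}{2}$ is precisely what certifies a strictly positive shift.
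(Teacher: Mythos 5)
Your proposal is correct and follows exactly the route the paper intends: the paper gives no separate proof for \autoref{thm:shiftp}, treating it as immediate from the closed form in \autoref{lem:expC0} by the same substitute-and-subtract computation used for \autoref{thm:shifttime}, which is precisely what you do. Your explicit monotonicity step (that $0 \le p_1 < p_2 < \tfrac{1}{2}$ gives $1-2p_1 > 1-2p_2 > 0$, so $x \mapsto x^T$ preserves the strict inequality) is the same one-line sign argument the paper leaves implicit, so there is nothing to add.
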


\begin{figure}[h!]
\fbox{
\begin{minipage}{0.49\textwidth}
     \centering
\resizebox{0.9\columnwidth}{!}
{
\begin{quantikz}
  Q_1    & \slice{$T=1$} & \ctrl{1}& \ctrl{2}   & \ctrl{3}  &     &      &   &    &    & \targ{} &     &   \targ{}   &  \targ{}&\slice{$T=2$} &\gate[4,disable auto
height]{\verticaltext{Repeat}} &&\\
 Q_2     &               & \targ{}  &  &   &\ctrl{1}  & \ctrl{2} &  &   & \targ{}  &  &  \targ{} &  &  \ctrl{-1}  &&&&\\
 Q_3     &               &        & \targ{}   &  &  \targ{} &  &\ctrl{1} &\targ{}  &   &      &  \ctrl{-1}    &  \ctrl{-2}  & &&&&\\
  Q_4    &               &         &  &   \targ{}      & & \targ{} & \targ{}&   \ctrl{-1} & \ctrl{-2} &   \ctrl{-3}      & & & &&&&
\end{quantikz}
}
 \caption{The 4-qubit fully connected circuit with $T=3$. In $T=1$ and $T=2$, every two pairs of qubits are connected by a $CNOT$ gate.}\label{fig:CircuitFully}
\end{minipage}
}
\hfill
\fbox{
\begin{minipage}{0.5\textwidth}
     \centering
\resizebox{0.3\columnwidth}{!}
{
\begin{tikzpicture}
\tikzstyle{every node}=[draw, minimum size=1cm, inner sep=0.1cm]
\node (A1) at (0,-0.5) {$Q_1[1]$};
\node (A2) at (0,-2) {$Q_1[2]$};
\node (A3) at (0,-3.5) {$Q_2[1]$};
\node (A4) at (0,-5) {$Q_2[2]$};
\node (A5) at (0,-6.5) {$Q_3[1]$};
\node (A6) at (0,-8) {$Q_3[2]$};
\node (A7) at (0,-9.5) {$Q_4[1]$};
\node (A8) at (0,-11) {$Q_4[2]$};

\node (A9) at (3,-3.5) {$Q_1[3]$};
\node (A10) at (3,-5) {$Q_2[3]$};
\node (A11) at (3,-6.5) {$Q_3[3]$};
\node (A12) at (3,-8) {$Q_4[3]$};

\draw[->,red]  (A1) -- (A9);
\draw[->,red]  (A2) -- (A9);
\draw[->,red]  (A3) -- (A9);
\draw[->,red]  (A4) -- (A9);
\draw[->,red]  (A5) -- (A9);
\draw[->,red]  (A6) -- (A9);
\draw[->,red]  (A7) -- (A9);
\draw[->,red]  (A8) -- (A9);

\draw[->,blue]  (A1) -- (A11);
\draw[->,blue]  (A2) -- (A11);
\draw[->,blue]  (A3) -- (A11);
\draw[->,blue]  (A4) -- (A11);
\draw[->,blue]  (A5) -- (A11);
\draw[->,blue]  (A6) -- (A11);
\draw[->,blue]  (A7) -- (A11);
\draw[->,blue]  (A8) -- (A11);

\draw[->,green]  (A1) -- (A10);
\draw[->,green]  (A2) -- (A10);
\draw[->,green]  (A3) -- (A10);
\draw[->,green]  (A4) -- (A10);
\draw[->,green]  (A5) -- (A10);
\draw[->,green]  (A6) -- (A10);
\draw[->,green]  (A7) -- (A10);
\draw[->,green]  (A8) -- (A10);

\draw[->,pink]  (A1) -- (A12);
\draw[->,pink]  (A2) -- (A12);
\draw[->,pink]  (A3) -- (A12);
\draw[->,pink]  (A4) -- (A12);
\draw[->,pink]  (A5) -- (A12);
\draw[->,pink]  (A6) -- (A12);
\draw[->,pink]  (A7) -- (A12);
\draw[->,pink]  (A8) -- (A12);
\end{tikzpicture}
}
\caption{The \textbf{RSG} of the fully connected graph in \autoref{fig:CircuitFully}.}
\label{fig:4qRSGFully}
\end{minipage}
}
\caption{The circuit and the $\textbf{RSG}$ of a 4-qubit fully connected circuit.}
\label{fig:DisjointFully}
\end{figure}

An interesting but trivial case is the fully connected circuit. The $\textbf{RSG}$ of a fully connected graph, as shown in \autoref{fig:DisjointFully}, contains all edges between any source vertex and any syndrome vertex.

\begin{lemma}(Error distribution for fully connected graph)
    \label{lem:distFully}
    The error distribution in a fully connected graph is:
    \begin{equation}
       P(r)=\begin{cases}
           &0, \qquad   r\neq 0,n\\
           &\sum_{k=0}^{\lfloor Tn/2 \rfloor} \binom{Tn}{2k}p^{2k}(1-p)^{Tn-2k}=\frac{1}{2}+\frac{1}{2}(1-2p)^{Tn}, \qquad   r=0\\
           &\sum_{k=1}^{\lfloor Tn/2 \rfloor} \binom{Tn}{2k+1}p^{2k+1}(1-p)^{Tn-2k-1}=\frac{1}{2}-\frac{1}{2}(1-2p)^{Tn}, \qquad   r=n
       \end{cases}
        \label{eq:distfully}
    \end{equation}
\end{lemma}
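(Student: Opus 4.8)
The plan is to exploit the one structural feature that singles out the fully connected graph: its $\textbf{RSG}(\mathbf{G})$ is a complete bipartite graph, in which every source vertex is joined to each of the $n$ syndrome vertices $Q_1[T],\dots,Q_n[T]$ (this is precisely what \autoref{fig:4qRSGFully} displays). Consequently the collection of error sites feeding any given syndrome vertex is identical for all of them, namely the entire set of independent bit-flip sites. The argument is then essentially a single parity computation followed by the closed-form binomial identities already in hand.

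First I would invoke \autoref{thm:RSGthm}: the qubit $Q_k[T]$ carries an error after propagation if and only if its degree in $\textbf{RSG}(\mathbf{G})$ is odd, i.e.\ if and only if an odd number of errors occur among the vertices connected to it. Because the graph is complete bipartite, this count does not depend on $k$; every syndrome vertex sees the same set of sources, so the relevant parity is the single global parity $S=\bigoplus_{\text{all sites}} e$ of the independent error events. Hence exactly two outcomes are possible: if $S$ is even then every $Q_k[T]$ is error-free and the total error count is $r=0$, while if $S$ is odd then every $Q_k[T]$ carries an error and $r=n$. This yields $P(r)=0$ for all $r\notin\{0,n\}$, the first line of the claim.

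Next I would evaluate the two surviving probabilities. Since the $Tn$ flip events are independent Bernoulli$(p)$ variables, the event $\{S\text{ even}\}$ has probability $\sum_{k}\binom{Tn}{2k}p^{2k}(1-p)^{Tn-2k}$ and $\{S\text{ odd}\}$ has probability $\sum_{k}\binom{Tn}{2k+1}p^{2k+1}(1-p)^{Tn-2k-1}$, and these are exactly $P(0)$ and $P(n)$. Substituting $T\mapsto Tn$, $a=p$, $b=1-p$ into \autoref{lem:EvenOddSum} collapses the two sums to $\tfrac12+\tfrac12(1-2p)^{Tn}$ and $\tfrac12-\tfrac12(1-2p)^{Tn}$ respectively, giving the remaining two lines. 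As a sanity check one can note that this agrees with \autoref{thm:weightedCouting}: the only error subgraphs of nonzero weight are the ``all syndromes off'' and ``all syndromes on'' families, so the weighted count of error subgraphs degenerates into the even/odd split above.

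The main obstacle is not the parity algebra but pinning down that the number of independent sites feeding the common parity is exactly $Tn$, so that the exponent is $Tn$ rather than $(T-1)n$. I would settle this by combining \autoref{lem:RSGContain} with the complete-bipartite structure: one must verify that the vertex sets of $\textbf{RSG}(Q_k[T])$ coincide for every $k$ and together exhaust all flip opportunities (every qubit in every one of the $T$ windows), so that no site is ``invisible'' to any syndrome vertex and the single scalar $S$ genuinely governs the whole syndrome. Once this coincidence of reverse spanning graphs is established, the remaining steps are immediate.
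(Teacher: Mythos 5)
Your proposal is correct and follows essentially the same route as the paper's own (much terser) proof: the paper likewise observes that in the fully connected case every syndrome vertex is fed by the identical source set, so all syndromes flip together and only $r=0$ or $r=n$ can occur, with the even/odd binomial sums (as in \autoref{lem:EvenOddSum}) giving the two probabilities. Your version is in fact more careful than the paper's two-sentence argument, and you are right to flag the $Tn$ versus $(T-1)n$ source-count issue as the delicate point: the paper simply asserts that each syndrome vertex has degree $Tn$, which sits uneasily with its own convention of $(T-1)n$ source vertices used in \autoref{thm:spanningsubgraphparity} and exhibited in \autoref{fig:4qRSGFully}.
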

\begin{proof}
When the circuit is fully connected, every syndrome vertex has exactly the same behavior, either they all flip with an error or they all stay in the correct state. 
Because the degree of each syndrome vertex is $Tn$, it's trivial to get the probability form in \autoref{eq:distfully}.
\end{proof}

\begin{lemma}(Expectation value for fully connected graph)
    \label{lem:expfully}
    The expectation value and variance for error distribution for a fully connected gate is:
    \begin{equation}
       E(\textbf{RSG}(\mathbf{G}(\mathbf{C},n,m,T)[1,T]))=\frac{n}{2}-\frac{n}{2}(1-2p)^{Tn}
        \label{eq:expfully}
    \end{equation}
    \begin{equation}
       Var(\textbf{RSG}(\mathbf{G}(\mathbf{C},n,m,T)[1,T]))=n^2 (\frac{1}{4}-(1-2n)^{2Tn})
        \label{eq:varfully}
    \end{equation}    
\end{lemma}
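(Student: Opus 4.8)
The plan is to read off both moments directly from the explicit distribution already established in \autoref{lem:distFully}, exploiting the fact that the fully connected structure forces every syndrome vertex to share one common parity. That lemma tells us the error number $k$ is supported on only two points: $k=0$ with probability $P(0)=\frac12+\frac12(1-2p)^{Tn}$ and $k=n$ with probability $P(n)=\frac12-\frac12(1-2p)^{Tn}$, while $P(r)=0$ for every other $r$. Consequently the entire problem collapses to computing the mean and variance of a scaled Bernoulli variable taking values in $\{0,n\}$, so no genuinely new combinatorics is needed.

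First I would compute the expectation straight from the definition in \autoref{eq:EandVar}, retaining only the two nonzero terms:
\[
E(\textbf{RSG})=0\cdot P(0)+n\cdot P(n)=n\left(\frac12-\frac12(1-2p)^{Tn}\right)=\frac{n}{2}-\frac{n}{2}(1-2p)^{Tn},
\]
which is precisely \autoref{eq:expfully}. Next I would compute the variance. Because $k$ assumes only the values $0$ and $n$, the second moment collapses to $E(k^2)=n^2 P(n)$, so
\[
Var(\textbf{RSG})=E(k^2)-E(\textbf{RSG})^2=n^2P(n)-n^2P(n)^2=n^2\,P(n)\,P(0).
\]
The closing step is the algebraic simplification: substituting the two probabilities and invoking the difference-of-squares identity $\left(\frac12-x\right)\left(\frac12+x\right)=\frac14-x^2$ with $x=\frac12(1-2p)^{Tn}$ yields $P(n)P(0)=\frac14-\frac14(1-2p)^{2Tn}$, hence $Var(\textbf{RSG})=\frac{n^2}{4}\left(1-(1-2p)^{2Tn}\right)$.

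There is essentially no obstacle here: the hard structural work was already done in \autoref{lem:distFully}, which pins the distribution onto two atoms, and what remains is routine bookkeeping built on \autoref{lem:EvenOddSum}. The one place I would take care is reconciling the outcome with the form written in \autoref{eq:varfully}, since the correct variance reads $\frac{n^2}{4}\left(1-(1-2p)^{2Tn}\right)$; in particular the exponent base must be $(1-2p)$ rather than $n$, and the factor $\frac14$ should multiply both terms inside the parentheses. I would therefore state the variance in the factored form $n^2 P(n)P(0)$ before simplifying, so that the $\frac14$ and the base $(1-2p)$ both appear transparently.
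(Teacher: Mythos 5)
Your proposal is correct and follows essentially the same route as the paper's own proof: both read the two-atom distribution off \autoref{lem:distFully} and compute $E(\textbf{RSG})=0\cdot P(0)+n\cdot P(n)$ and $Var(\textbf{RSG})=n^2P(n)\bigl(1-P(n)\bigr)=n^2P(n)P(0)$. You are also right to flag the printed form of \autoref{eq:varfully}: the base must be $(1-2p)$, not $(1-2n)$, and the factor $\frac{1}{4}$ must multiply the $(1-2p)^{2Tn}$ term as well, so the correct variance is $\frac{n^2}{4}\bigl(1-(1-2p)^{2Tn}\bigr)$, exactly as your factored form $n^2P(n)P(0)$ yields.
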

\begin{proof}
By \autoref{lem:distFully},  only $P(0)$ and $P(n)$ are non-zero in the distribution. The expectation value is given by:
\begin{equation}
    \begin{split}
             E(\textbf{RSG}(\mathbf{G}(\mathbf{C},n,m,T)[1,T]))&=0P(0)+nP(n)\\
             &=\frac{n}{2}-n(1-2p)^{Tn}
    \end{split}
\end{equation}
The variance is:
\begin{equation}
    \begin{split}
             Var(\textbf{RSG}(\mathbf{G}(\mathbf{C},n,m,T)[1,T]))&=0^2P(0)+n^2P(n)- E(\textbf{RSG}(\mathbf{G}(\mathbf{C},n,m,T)^2\\
             &=n^2P(n)(1-P(n))\\
             &=n^2P(n)P(0)\\
             &=n^2 (\frac{1}{4}-(1-2n)^{2Tn})
    \end{split}
\end{equation}
\end{proof}

\begin{corollary}(Shift of distribution by fully connected graph) The expectation value shift for a fully connected graph is given by:
\begin{equation}
\mathcal{T}(\mathbf{G}(\mathbf{C},n,m,T)[t_1,t_2,p])=\frac{n}{2}(1-2p)^T-\frac{1}{2}n(1-2p)^{Tn}
\label{eq:ShiftByFully}
\end{equation}
When $p>\frac{1}{Tn}$, $\mathcal{T}(\mathbf{G}(\mathbf{C},n,m,T)[t_1,t_2,p])>0$
\label{cor:shiftfully}
\end{corollary}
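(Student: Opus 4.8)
The plan is to read the shift off directly from \autoref{df:Shift} together with the two expectation-value lemmas already established, and then to settle the sign by a one-variable monotonicity argument.

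First I would unfold \autoref{df:Shift}, which defines the shift as the difference between the expected number of syndrome errors for the fully connected circuit and that for the empty circuit $\mathbf{C}_0$:
\begin{equation}
\mathcal{T}=E(\textbf{RSG}(\mathbf{G}(\mathbf{C},n,m,T)[1,T,p]))-E(\textbf{RSG}(\mathbf{G}(\mathbf{C}_0,n,0,T)[1,T,p])).
\end{equation}
I would then substitute the fully connected expectation $\frac{n}{2}-\frac{n}{2}(1-2p)^{Tn}$ from \autoref{eq:expfully} and the empty-circuit expectation $\frac{n}{2}-\frac{n}{2}(1-2p)^{T}$ from \autoref{eq:expC0}. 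The constant $\frac{n}{2}$ terms cancel, and rearranging the two surviving exponential terms yields exactly \autoref{eq:ShiftByFully}. This step is purely algebraic and presents no obstacle.

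For the positivity claim I would factor the result as $\frac{n}{2}\bigl[(1-2p)^{T}-(1-2p)^{Tn}\bigr]$ and set $x=1-2p$, which lies in $(0,1)$ throughout the physically meaningful regime $0<p<\frac12$. On this interval the map $k\mapsto x^{k}$ is strictly decreasing, so since $T<Tn$ for every $n\geq2$ we obtain $x^{T}>x^{Tn}$ and hence $\mathcal{T}>0$. I would then connect this to the stated threshold by observing that the fully connected circuit routes every one of the $Tn$ source vertices into every syndrome vertex, so its error count saturates toward $\frac{n}{2}$ far more quickly than the empty circuit's; the condition $p>\frac{1}{Tn}$ picks out the regime in which the saturating term $(1-2p)^{Tn}$ has already decayed enough for this separation to be pronounced.

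The only real subtlety here is interpretive rather than computational. The closed form is immediate from the two preceding lemmas, and the monotonicity argument in fact delivers strict positivity for \emph{all} $n\geq2$ and $0<p<\frac12$, so the stated bound $p>\frac{1}{Tn}$ is best read as a sufficient condition marking where the shift becomes appreciable rather than as a sharp positivity threshold. I would therefore justify the sign via the monotonicity of $x^{k}$ and present $p>\frac{1}{Tn}$ as the regime in which $(1-2p)^{Tn}\approx 0$ and the shift is dominated by the leading term $\frac{n}{2}(1-2p)^{T}$.
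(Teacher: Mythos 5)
Your proposal is correct and takes essentially the same route as the paper: both derive \autoref{eq:ShiftByFully} by subtracting the empty-circuit expectation \autoref{eq:expC0} from the fully connected expectation \autoref{eq:expfully}, and both settle the sign via the monotonicity of $(1-2p)^{k}$ in $k$ (the paper invokes \autoref{thm:shifttime}, which you simply inline). Your side remark that this argument gives strict positivity for all $0<p<\tfrac{1}{2}$ and $n\geq 2$, so that the stated condition $p>\frac{1}{Tn}$ is not the actual positivity threshold, is also consistent with the paper's own proof, which likewise concludes positivity under $p<\tfrac{1}{2}$.
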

\begin{proof}
    We can get \autoref{eq:ShiftByFully} by combining \autoref{eq:expfully} with \autoref{eq:expC0}. Since the first term is equivalently $T \rightarrow nT$ in empty circuit with $n$ qubit, by \autoref{thm:shifttime}, when $p < \frac{1}{2}$,  \autoref{eq:ShiftByFully} is strictly positive.
\end{proof} 

\begin{corollary}(Shift of distribution by parallel transversal graph)
\label{cor:shifttransversal}
\begin{equation}
    \mathcal{T}(\mathbf{G}(\mathbf{C}_T,n,n/2,2)[0,T,p])=\frac{n}{2}p(2p-1)^2 \geq 0
    \label{eq:ShiftParallelTrans}
\end{equation}
\autoref{eq:ShiftParallelTrans} is bounded by $\frac{n}{27}$ when $p<\frac{1}{2}$ and the maximum is taken when $p=\frac{1}{6}$.
\end{corollary}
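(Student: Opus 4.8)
The plan is to obtain the shift directly from its definition by subtracting the two expectation values that \autoref{lem:expTrans} has already supplied, and then to treat the resulting cubic in $p$ as a routine one-variable optimization on $[0,\tfrac12]$.

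First I would instantiate \autoref{df:Shift} at $T=2$ for the parallel transversal circuit $\mathbf{C}_T$, so that
\begin{equation}
\mathcal{T}(\mathbf{G}(\mathbf{C}_T,n,n/2,2)[0,T,p])=E(\textbf{RSG}(\mathbf{G}(\mathbf{C}_T,n,0,2)[1,2,p]))-E(\textbf{RSG}(\mathbf{G}(\mathbf{C}_0,n,0,2)[1,2,p])).
\end{equation}
By \autoref{lem:expTrans} the first term equals $\tfrac{n}{2}(5p-8p^2+4p^3)$ and the second equals $n(2p-2p^2)$. Subtracting and pulling $\tfrac{n}{2}p$ out of the difference collapses the remaining bracket $5p-8p^2+4p^3-2(2p-2p^2)$ into a perfect square, which yields the claimed closed form $\tfrac{n}{2}p(2p-1)^2$. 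Nonnegativity on $0\le p<\tfrac12$ is then immediate, since $p\ge0$ and $(2p-1)^2\ge0$.

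For the bound I would set $f(p)=p(2p-1)^2=4p^3-4p^2+p$ and locate its maximum on $[0,\tfrac12]$. Solving $f'(p)=12p^2-8p+1=0$ gives the two critical points $p=\tfrac16$ and $p=\tfrac12$; the second-derivative test with $f''(p)=24p-8$ shows $p=\tfrac16$ is an interior maximizer while $p=\tfrac12$ is a local minimum at which $f$ vanishes. Since also $f(0)=0$, the global maximum over the interval is $f(\tfrac16)=\tfrac{2}{27}$, whence $\mathcal{T}\le \tfrac{n}{2}\cdot\tfrac{2}{27}=\tfrac{n}{27}$, attained at $p=\tfrac16$.

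There is essentially no deep obstacle here; the only care required is bookkeeping. I would double-check that the cross terms in $5p-8p^2+4p^3-2(2p-2p^2)$ cancel exactly so that the bracket reduces to $p(2p-1)^2$, and I would confirm that the relevant domain is $p\in[0,\tfrac12)$, so that the endpoint $p=\tfrac12$, where the bound is not attained, is correctly superseded by the interior critical point $p=\tfrac16$.
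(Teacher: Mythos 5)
Your proposal is correct and follows essentially the same route as the paper: both subtract the two expectation values from the transversal-circuit lemma to obtain $\tfrac{n}{2}p(2p-1)^2$, then factor the derivative $12p^2-8p+1=(2p-1)(6p-1)$ to identify the critical points $p=\tfrac16$ and $p=\tfrac12$ and evaluate $f(\tfrac16)=\tfrac{2}{27}$. Your second-derivative test is in fact a small improvement in rigor, since the paper confirms the nature of the critical points only by plotting the function and its derivative.
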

\begin{proof}
By \autoref{lem:expTrans}, the 
\begin{equation}
\begin{split}
        \mathcal{T}(\mathbf{G}(\mathbf{C}_T,n,n/2,2)[0,T,p])&=\frac{n}{2}(5p-8p^2+4p^3)-n(2p-2p^2)\\
        &=\frac{n}{2}p(2p-1)^2
        \label{eq:Pfunc}
\end{split}
\end{equation}
We can also calculate the maximum given by \autoref{eq:ShiftParallelTrans} with respect to $p$:
\begin{equation}
    \frac{d\mathcal{T}(\mathbf{G}(\mathbf{C}_T,n,n/2,2)[0,T,p])}{dp}=6n(p^2-\frac{2}{3}p+\frac{1}{12})=6n(p-\frac{1}{2})(p-\frac{1}{6})
    \label{eq:DerivativeOfP}
\end{equation}
\begin{figure}[h!]
    \centering
    \begin{minipage}{0.45\linewidth}
        \centering
        \includegraphics[width=\linewidth]{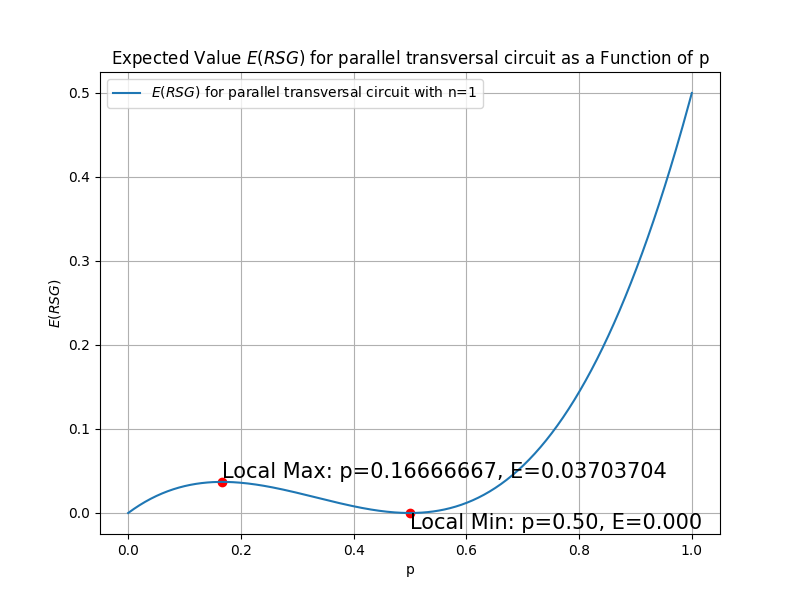}
        \caption{Plot of function \autoref{eq:Pfunc}. When $p=\frac{1}{6}$, the shift of error distribution has the local maximum value $\frac{1}{27}n$. The local minimum is when $p=\frac{1}{2}$, the shift of distribution is $0$.When $p>\frac{1}{2}$, the shift of distribution keeping increasing with $p$.}       \label{fig:distributionShiftTransversal}
    \end{minipage}
    \hfill
    \begin{minipage}{0.5\linewidth}
        \centering
        \includegraphics[width=\linewidth]{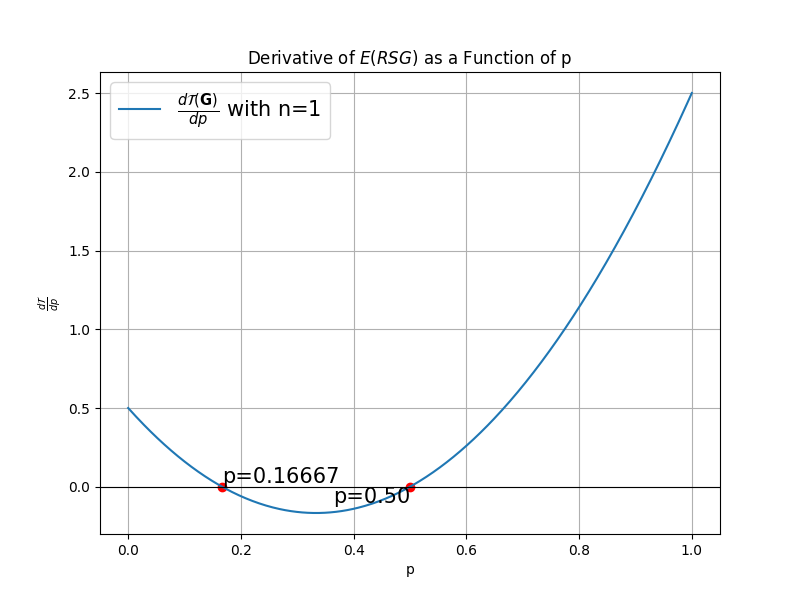}
        \caption{Plot of first order derivative function in \autoref{eq:DerivativeOfP}. When $p=\frac{1}{6}$ or $p=\frac{1}{2}$, the function has local maximum and minimum.}
        \label{fig:Derivative}
    \end{minipage}
\end{figure}
\autoref{eq:Pfunc} and \autoref{eq:DerivativeOfP} are plotted in \autoref{fig:distributionShiftTransversal} and \autoref{fig:Derivative}. We can draw the following important conclusion:
\begin{enumerate}
    \item The parallel transversal circuit always have non-negative shift of error, no matter what value $p$ takes.
    \item The shift of distribution for parallel transversal circuit is bounded by $\frac{n}{27}$ when $p$ is a small value. 
\end{enumerate}
\end{proof}





\section{Decision version of the problem is in $\mathcal{P}$}

To understand the complexity of calculating the error distribution after propagation, we start with a easier dicision version. The decision version of the counting problem is stated as follows: 

\textbf{
Given a circuit \textbf{C} ,its Error propagation space-time graph \textbf{G},  and an error syndrome $\textbf{S} \subset \textbf{Syn}(\textbf{G})$, does there exist an error pattern of the source qubit that generates the error syndrome?
}

The decision version of the problem belongs to $\mathcal{NP}$ because, given a certificate $w$ of the error syndrome, we can verify whether it is a source of the syndrome or not in polynomial time.

For a given $\textbf{RSG}(\textbf{G})$, If we assign each of the $\textbf{Source}(\textbf{G})$ vertices a boolean variable, denoted as $\textbf{Source}(\textbf{G})=\{v_1,v_2,\cdots,v_m\}$. Assume that $|\textbf{Source}(\textbf{G})|=m$, $m=(T-1)n$,$|\textbf{S}|=p$ and the set of vertices with error in  $\textbf{Syn}(\textbf{G})$, \quad $\textbf{S}=\{s_1,s_2,\cdots,s_p\}$, the set of vertices without error $\textbf{Syn}(\textbf{G})/\textbf{S}$. We can make $n$ clauses for all syndrome vertices\, in each clause, we use one different binary variables $v$ for each edge with an edge connecting vertex $v$ to the syndrome vertex. 

The whole boolean form of the satisfiable problem with respect to the syndrome $\textbf{S} \subset \textbf{Syn}(\textbf{G})$ is:

\begin{equation}
    \mathcal{F}(\textbf{RSG}(\textbf{G}),\textbf{S})=\big\{\bigwedge_{ \substack{ k=1 \\ s_k \in \textbf{S}}}^{p}( 0 \oplus \bigoplus_{\substack{v_i \in \textbf{Source}(\textbf{G}) \\s_k \in \textbf{S} \\ (v_i,s_k)\in \mathbf{E}(\textbf{G})} } v_i)\big\}\bigwedge\big\{\bigwedge_{\substack{k=p+1 \\ s_k \in \textbf{G}/\textbf{S}}}^{n}( 1 \oplus \bigoplus_{\substack{v_i \in \textbf{Source}(\textbf{G}) \\s_k \in \textbf{Syn}(\textbf{G})/\textbf{S} \\ (v_i,s_k)\in \mathbf{E}(\textbf{G})} } v_i)\big\}
    \qquad v_i \in \{0,1\}
    \label{eq:Boolean}
\end{equation}

We have the following theorem:

\begin{theorem}(Satisfiability equivalent to error pattern)
    \label{thm:satisfiabilityErrorpattern}
    Given a reverse spanning graph $\textbf{RSG}(\textbf{G})$ of a circuit $\textbf{C}$ and the subset of $\textbf{Syn}(\textbf{G})$ where errors occur, \autoref{eq:Boolean} is true if and only if the assignment of $\textbf{Source}(\textbf{G})=\{v_1,v_2,\cdots,v_m\}$ correspond to one possible error pattern in $\textbf{Source}(\textbf{G})$, such that an independent bitflip error at $v_i$ happens when $v_i=1$, while after error propagation, the error only remains in $\textbf{S}$.  
\end{theorem}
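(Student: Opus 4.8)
The plan is to prove the biconditional clause-by-clause, translating the XOR semantics of \autoref{eq:Boolean} into the parity condition that \autoref{thm:RSGthm} already attaches to each syndrome vertex. First I would fix the dictionary between Boolean assignments and physical error patterns: an assignment of $\textbf{Source}(\textbf{G})=\{v_1,\dots,v_m\}$ is read as the bitflip pattern that places an $\hat{X}$ error at source vertex $v_i$ exactly when $v_i=1$. By \autoref{lem:RSGBipart} the graph $\textbf{RSG}(\textbf{G})$ is bipartite with $\textbf{Source}(\textbf{G})$ on the left and $\textbf{Syn}(\textbf{G})$ on the right, and by \autoref{df:RSG} together with \autoref{lem:RSGContain} an edge $(v_i,s_k)$ is present precisely when an error at $v_i$ can propagate to the syndrome vertex $s_k$. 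Hence, for a fixed $s_k$, the number of active errors reaching it is the number of its $1$-valued neighbours, and its parity is exactly $\bigoplus_{(v_i,s_k)\in\mathbf{E}(\textbf{G})} v_i$.

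Next I would decode the two families of clauses using this parity. For $s_k\in\textbf{S}$ the clause $0\oplus\bigoplus v_i$ equals $\bigoplus v_i$, so it evaluates to $1$ iff an odd number of errors propagate to $s_k$; by \autoref{thm:RSGthm} this is exactly the condition that a residual error remains on $s_k$ after propagation, matching the requirement $s_k\in\textbf{S}$. For $s_k\in\textbf{Syn}(\textbf{G})\setminus\textbf{S}$ the clause $1\oplus\bigoplus v_i$ evaluates to $1$ iff $\bigoplus v_i=0$, i.e.\ iff an even number of errors reach $s_k$; again by \autoref{thm:RSGthm} this is exactly the condition that no residual error remains on $s_k$, matching $s_k\notin\textbf{S}$. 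Thus each individual clause is a faithful encoding of the desired local parity at its syndrome vertex.

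Finally I would assemble the conjunction. Since $\mathcal{F}(\textbf{RSG}(\textbf{G}),\textbf{S})$ is the logical AND of all $n$ clauses, it is satisfied iff every syndrome vertex simultaneously meets its parity constraint, that is, iff the propagated error is present on every vertex of $\textbf{S}$ and absent on every vertex of $\textbf{Syn}(\textbf{G})\setminus\textbf{S}$ --- which is precisely the statement that the error generated by the assignment ``only remains in $\textbf{S}$.'' Reading the equivalence in both directions yields the theorem: every satisfying assignment is a source error pattern whose syndrome is exactly $\textbf{S}$, and every such pattern is a satisfying assignment.

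I expect no genuine obstacle, since the real content is carried by \autoref{thm:RSGthm}; the only point demanding care is the XOR bookkeeping. Specifically I must verify that the leading constants $0$ and $1$ correctly set the target parity for the ``error present'' versus ``error absent'' syndrome vertices, and that $\bigoplus_{(v_i,s_k)\in\mathbf{E}(\textbf{G})} v_i$ genuinely coincides with the parity of propagated errors invoked in \autoref{thm:RSGthm}. It is also worth recording that, because every source vertex $Q_i[t]$ reaches the syndrome vertex $Q_i[T]$ along the identity edges $(Q_i[t],Q_i[t+1])$ guaranteed by \autoref{df:propgraph}, every variable $v_i$ occurs in at least one clause; no spurious free variable arises, so the satisfying assignments of $\mathcal{F}$ are precisely the source error patterns producing syndrome exactly $\textbf{S}$.
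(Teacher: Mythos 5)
Your proposal is correct and follows essentially the same route as the paper: both translate each XOR clause into the parity condition on its syndrome vertex via \autoref{thm:RSGthm} (the paper additionally phrases the correspondence in the error-subgraph language of \autoref{thm:spanningsubgraphparity}, which is the same content). Your closing observation that the identity edges $(Q_i[t],Q_i[t+1])$ ensure every variable appears in some clause is a harmless extra check not present in the paper's proof.
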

\begin{proof}
    By \autoref{thm:RSGthm}, the parity of the degree of the vertex of any syndrome vertex in $\textbf{RSG}$ determines the error on the vertex of the syndrome. 
    The degrees and their parity of all vertices in $\textbf{Syn}(\textbf{G})$, can be described as follows:

    For all $s_k \in \textbf{S}$ and $v_i \in \textbf{Source}(\textbf{G})$, 
    \begin{equation}
        \textbf{Deg}_{s_k}= \sum_{\substack{v_i \in \textbf{Source}(\textbf{G}) \\s_k \in \textbf{S} \\ (v_i,s_k)\in \mathbf{E}(\textbf{G}) \\ v_i=1 }} v_i, \qquad \textbf{Parity}(\textbf{Deg}_{s_k})=\bigoplus_{\substack{v_i \in \textbf{Source}(\textbf{G}) \\s_k \in \textbf{S} \\ (v_i,s_k)\in \mathbf{E}(\textbf{G}) \\ v_i=1 }} v_i
        \label{eq:oddDegree}
    \end{equation}
     For all $s_k \in \textbf{Syn}(\textbf{G})/\textbf{S}$ and $v_i \in \textbf{Source}(\textbf{G})$, 
    \begin{equation}
        \textbf{Deg}_{s_k}= \sum_{\substack{v_i \in \textbf{Source}(\textbf{G}) \\s_k \in \textbf{Syn}(\textbf{G})/\textbf{S} \\ (v_i,s_k)\in \mathbf{E}(\textbf{G})\\ v_i=1 } } v_i, \qquad \textbf{Parity}(\textbf{Deg}_{s_k})=\bigoplus_{\substack{v_i \in \textbf{Source}(\textbf{G}) \\s_k \in \textbf{Syn}(\textbf{G})/\textbf{S} \\ (v_i,s_k)\in \mathbf{E}(\textbf{G})\\ v_i=1 } } v_i
        \label{eq:evenDegree}
    \end{equation}   

    We denote all $v_i \in \{ v_i| v_i=1, v_i \in \textbf{Source}(\textbf{G}) \}$ together with $\textbf{S}$, a new subgraph $\textbf{G}' \subseteq \textbf{G}$. According to \autoref{thm:spanningsubgraphparity}, $\textbf{G}'$ is an error subgraph in $\textbf{RSG}(\textbf{G})$ if and only if all vertices in $\textbf{S}$ has odd parity, which also uniquely corresponds to a pattern of the random bit-flip error that occurs before the syndrome. The equivalent boolean formula is:
    \begin{equation}
         \big\{\bigwedge_{\substack{k=1 \\ s_k \in \textbf{S}}}^{p}( 0 \oplus \textbf{Parity}(\textbf{Deg}_{s_k}))\big\} \bigwedge \big\{\bigwedge_{\substack{k=p+1 \\ s_k \in \textbf{G}/\textbf{S}}}^{n}( 1 \oplus \textbf{Parity}(\textbf{Deg}_{s_k}))\big\}
         \label{eq:parityFinal}
    \end{equation}
    Combined with \autoref{eq:parityFinal}, \autoref{eq:evenDegree} and \autoref{eq:oddDegree}, we get

    \begin{equation}
    \mathcal{F}(\textbf{RSG}(\textbf{G}),\textbf{S})=\big\{\bigwedge_{ \substack{ k=1 \\ s_k \in \textbf{S}}}^{p}( 0 \oplus \bigoplus_{\substack{v_i \in \textbf{Source}(\textbf{G}) \\s_k \in \textbf{S} \\ (v_i,s_k)\in \mathbf{E}(\textbf{G})} } v_i)\big\}\bigwedge\big\{\bigwedge_{\substack{k=p+1 \\ s_k \in \textbf{G}/\textbf{S}}}^{n}( 1 \oplus \bigoplus_{\substack{v_i \in \textbf{Source}(\textbf{G}) \\s_k \in \textbf{Syn}(\textbf{G})/\textbf{S} \\ (v_i,s_k)\in \mathbf{E}(\textbf{G})} } v_i)\big\}
    \qquad v_i \in \{0,1\}
    \end{equation}
    This finish the proof.
\end{proof}

We use the $4$-qubit transversal circuit depicted in \autoref{fig:RSGProgability} as an example of transforming a $\textbf{RGS}(\textbf{G})$ and the syndrome $\textbf{S}$ to equation \autoref{eq:Boolean}. Now $\textbf{Source}(\textbf{G})=\{v_{Q_1[1]},v_{Q_1[2]},v_{Q_2[1]},v_{Q_2[2]},v_{Q_3[1]},v_{Q_3[2]},v_{Q_4[1]},v_{Q_4[2]}\}$, $\textbf{Syn}(\textbf{G})=\{s_{Q_1[3]},s_{Q_2[3]},s_{Q_3[3]},s_{Q_4[3]} \}$ and $\textbf{S}=\{s_{Q_1[3]},s_{Q_2[3]}\}$. \autoref{eq:Boolean} is:

\begin{equation}
    \begin{split}
          \mathcal{F}(\textbf{RSG}(\textbf{G}),\textbf{S})&=\big\{(0\oplus v_{Q_1[1]}\oplus v_{Q_1[2]}) \wedge (0\oplus v_{Q_2[1]}\oplus v_{Q_2[2]})\big\} \\
          &\bigwedge \big\{(1\oplus v_{Q_1[1]}\oplus v_{Q_3[1]} \oplus v_{Q_3[2]})  \wedge (1\oplus v_{Q_2[1]}\oplus v_{Q_4[1]} \oplus v_{Q_4[2]})\big\}
    \qquad v_i \in \{0,1\}  
    \end{split}
    \label{eq:BooleanExample}
\end{equation}
A satisfying solution for \autoref{eq:BooleanExample} is as provided in \autoref{fig:RSGProgability}, let $\{v_{Q_1[1]},v_{Q_2[1]},v_{Q_3[2]},v_{Q_4[2]}\}$ all be $1$ and the rest be $0$. The value of \autoref{eq:BooleanExample} after such assignment is 
\begin{equation}
      \begin{split}
          \mathcal{F}(\textbf{RSG}(\textbf{G}),\textbf{S})&=\big\{(0\oplus 1 \oplus 0) \wedge (0\oplus 1\oplus 0)\big\} \\
          &\bigwedge \big\{(1\oplus 1\oplus 0 \oplus 1)  \wedge (1\oplus 1 \oplus 0 \oplus 1)\big\}\\
          &=1 \wedge 1  \wedge 1 \wedge 1\\
          &=1
    \end{split}  
\end{equation}
Which means that we get a satisfying assignment for all $v_i$.

By \autoref{thm:spanningsubgraphparity}, we also know that the probability weight of each solution is given as
\begin{equation}
    p^{\sum_{i=1}^n v_i}(1-p)^{n-\sum_{i=1}^n v_i}=\prod_{i=1}^n p^{v_i}(1-p)^{1-v_i}
    \label{eq:weight}
\end{equation}

The decision version of the problem is finding an assignment of $\textbf{Source}(\textbf{G})=\{v_1,v_2,\cdots,v_m\}$ to make \autoref{eq:Boolean} true. 

The language is defined as $\mathcal{L}=\{ v_1v_2\cdots v_m |  \mathcal{F}(\textbf{RSG}(\textbf{G}),\textbf{S})=1 \}$

\begin{theorem}(Equivalent linear equation)
\label{thm:linearequation}
For any $\textbf{G},\textbf{S}$, where $|\textbf{S}|=p$, and $|\textbf{Source}(\textbf{G})|=m$. We can create a matrix $\mathcal{M}(\textbf{G}) \in \mathbb{F}_2^{n \times m}$ such that $\mathcal{M}[i][j]=1$ if and only if $(v_j,s_i) \in \textbf{E}(\textbf{G})$, and a vector $\Vec{\textbf{b}} \in \mathbb{F}_2^{n \times 1}$, such that $\textbf{b}[i]=1$ if and only if $s_i \in \textbf{S}$. 

The decision problem of   $\mathcal{L}=\{ v_1v_2\cdots v_m |  \mathcal{F}(\textbf{RSG}(\textbf{G}),\textbf{S})=1 \}$ is equivalent as finding one solution $\Vec{v} \in \mathbb{F}_2^m$ such that 
\begin{equation}
    \mathcal{M}(\textbf{G}) \Vec{v}=\Vec{\textbf{b}}
    \label{eq:lineareq}
\end{equation}
\autoref{eq:lineareq} is exactly the kernel of the matrix $\mathcal{M}(\textbf{G})$. The kernel space is not empty if and only if $\textbf{rank}(\mathcal{M}(\textbf{G}))<m$ because $\textbf{dim}(\textbf{Ker}(\mathcal{M}(\textbf{G})))=m-\textbf{rank}(\mathcal{M}(\textbf{G}))$.
\end{theorem}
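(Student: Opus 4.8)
The plan is to exploit the fact that the boolean formula $\mathcal{F}(\textbf{RSG}(\textbf{G}),\textbf{S})$ from \autoref{eq:Boolean} is built entirely out of XOR operations, and that XOR on $\{0,1\}$ is precisely addition in the field $\mathbb{F}_2$. By \autoref{thm:satisfiabilityErrorpattern} this formula is a conjunction of exactly $n$ clauses, one per syndrome vertex $s_i \in \textbf{Syn}(\textbf{G})$. First I would rewrite each clause as a single linear equation: for a syndrome vertex $s_i \in \textbf{S}$ the clause $0 \oplus \bigoplus_{(v_j,s_i)\in \mathbf{E}(\textbf{G})} v_j$ is satisfied if and only if $\bigoplus_{(v_j,s_i)\in \mathbf{E}(\textbf{G})} v_j = 1$, while for $s_i \in \textbf{Syn}(\textbf{G})\setminus\textbf{S}$ the clause $1 \oplus \bigoplus_{(v_j,s_i)\in \mathbf{E}(\textbf{G})} v_j$ is satisfied if and only if $\bigoplus_{(v_j,s_i)\in \mathbf{E}(\textbf{G})} v_j = 0$. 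In both cases the left-hand side is exactly $\sum_{j=1}^m \mathcal{M}[i][j]\, v_j$ evaluated in $\mathbb{F}_2$, using the incidence matrix $\mathcal{M}[i][j]=1 \iff (v_j,s_i)\in \mathbf{E}(\textbf{G})$, and the right-hand side is $b[i]$, which equals $1$ precisely when $s_i\in\textbf{S}$.

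Second, I would assemble the $n$ clause-equations into the single matrix identity $\mathcal{M}(\textbf{G})\Vec{v}=\Vec{\textbf{b}}$ over $\mathbb{F}_2$. Since the conjunction in $\mathcal{F}$ demands that every clause hold simultaneously, and each clause is now one row of this linear system, an assignment $\Vec{v}\in\mathbb{F}_2^m$ satisfies $\mathcal{F}(\textbf{RSG}(\textbf{G}),\textbf{S})$ if and only if it solves $\mathcal{M}(\textbf{G})\Vec{v}=\Vec{\textbf{b}}$. Combined with \autoref{thm:satisfiabilityErrorpattern}, which already ties satisfying assignments to genuine error patterns on $\textbf{Source}(\textbf{G})$, this establishes that the language $\mathcal{L}$ is nonempty if and only if the system is consistent.

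Third, I would settle the linear-algebraic characterisation. Solvability of $\mathcal{M}(\textbf{G})\Vec{v}=\Vec{\textbf{b}}$ is equivalent to $\Vec{\textbf{b}}$ lying in the column space of $\mathcal{M}(\textbf{G})$; whenever a particular solution $\Vec{v}_0$ exists, the full solution set is the coset $\Vec{v}_0 + \textbf{Ker}(\mathcal{M}(\textbf{G}))$, so the remaining degrees of freedom in the error pattern are counted by $\textbf{Ker}(\mathcal{M}(\textbf{G}))$. The rank--nullity theorem over $\mathbb{F}_2$ then gives $\dim(\textbf{Ker}(\mathcal{M}(\textbf{G})))=m-\textbf{rank}(\mathcal{M}(\textbf{G}))$, so the homogeneous solution space is nontrivial exactly when $\textbf{rank}(\mathcal{M}(\textbf{G}))<m$, which is the stated condition.

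The translation itself is routine linear algebra; the one point needing care is the bookkeeping of the constant term in each clause, that is, keeping the $0\oplus$ versus $1\oplus$ cases straight so that the target vector $\Vec{\textbf{b}}$ records exactly the desired syndrome rather than its complement. I would also state the kernel claim precisely: the equation $\mathcal{M}(\textbf{G})\Vec{v}=\Vec{\textbf{b}}$ is an inhomogeneous system whose solution set is a coset of the kernel rather than the kernel itself, and it is the associated homogeneous system $\mathcal{M}(\textbf{G})\Vec{v}=\Vec{0}$ whose solutions form $\textbf{Ker}(\mathcal{M}(\textbf{G}))$; making this distinction explicit is what renders the final rank condition meaningful and sets up the polynomial-time solvability via Gaussian elimination over $\mathbb{F}_2$.
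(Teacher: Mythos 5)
Your proof is correct and follows essentially the same route as the paper: both establish the equivalence by translating each syndrome clause of \autoref{eq:Boolean} into one $\mathbb{F}_2$-linear equation through the incidence matrix $\mathcal{M}(\textbf{G})$ (the paper runs this translation from the matrix equation back to the boolean formula, you run it forward, but the content is identical, and both invoke \autoref{thm:satisfiabilityErrorpattern} to tie solutions to error patterns). Your closing paragraph is in fact more careful than the paper, whose proof never addresses the kernel claim in the theorem statement at all: as you note, the solution set of the inhomogeneous system $\mathcal{M}(\textbf{G})\Vec{v}=\Vec{\textbf{b}}$ is a coset $\Vec{v}_0+\textbf{Ker}(\mathcal{M}(\textbf{G}))$ rather than the kernel itself, solvability is the consistency condition that $\Vec{\textbf{b}}$ lie in the column space rather than nontriviality of the kernel, and rank--nullity only counts the residual degrees of freedom once a particular solution exists---so your version repairs an imprecision in the statement rather than merely reproving it.
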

\begin{proof}
    We use $\mathcal{M}(i)$ to represent the $i$-th row of the matrix $\mathcal{M}(\textbf{G})$.
    \autoref{eq:lineareq} is equivalent as
    \begin{equation}
       \forall  1 \leq i\leq m,\quad  <\mathcal{M}(i),\Vec{v}>=\bigoplus_{j=1}^n (\mathcal{M}[i][j] \cdot v_j)= \textbf{b}[i]
        \label{eq:innerproduct}
    \end{equation}
    According to the definition of $\mathcal{M}$, $\mathcal{M}[i][j]=1$ if and only if $(v_j,s_i) \in \textbf{E}(\textbf{G})$, and $\textbf{b}[i]=1$ if and only if $s_i \in \textbf{S}$, so \autoref{eq:innerproduct} is the same as 
    \begin{equation}
        \forall  1 \leq i\leq m,\quad <\mathcal{M}(i),\Vec{v}>=\bigoplus_{\substack{j=1 \\ (v_j,s_i) \in \textbf{E}(\textbf{G})}}^n  v_j= \begin{cases}
              & 0 ,\qquad  s_i \in \textbf{G}/\textbf{S}\\
              & 1 ,\qquad s_i \in \textbf{S}
        \end{cases}
        \label{eq:innerproduct}
    \end{equation}
    After separate $s_i$ into two groups $\textbf{S}=\{s_1,s_2,\cdots,s_p\}$ and $\textbf{Syn}(\textbf{G})/\textbf{S}=\{s_{p+1},s_{p+1},\cdots,s_n\}$, we have
    \begin{equation}
        1=\bigwedge_{\substack{i=1 \\ s_i \in \textbf{S}}}^{p}  (0 \oplus<\mathcal{M}(i),\Vec{v}>)=\bigwedge_{\substack{i=p+1 \\ s_i \in \textbf{G}/\textbf{S}}}^{n} (1 \oplus <\mathcal{M}(i),\Vec{v}>)
        \label{eq:innerproductsepa}
    \end{equation}

     Take \autoref{eq:innerproduct} into \autoref{eq:innerproductsepa}, we arrive at:
    \begin{equation}
    1=\big\{\bigwedge_{ \substack{ k=1 \\ s_k \in \textbf{S}}}^{p}( 0 \oplus \bigoplus_{\substack{v_i \in \textbf{Source}(\textbf{G}) \\s_k \in \textbf{S} \\ (v_i,s_k)\in \mathbf{E}(\textbf{G})} } v_i)\big\}\bigwedge\big\{\bigwedge_{\substack{k=p+1 \\ s_k \in \textbf{G}/\textbf{S}}}^{n}( 1 \oplus \bigoplus_{\substack{v_i \in \textbf{Source}(\textbf{G}) \\s_k \in \textbf{Syn}(\textbf{G})/\textbf{S} \\ (v_i,s_k)\in \mathbf{E}(\textbf{G})} } v_i)\big\}
    \qquad v_i \in \{0,1\}
    \end{equation}
    This is exactly \autoref{eq:Boolean}. By \autoref{thm:satisfiabilityErrorpattern}, we know that any assignment of $\Vec{v}$ corresponds uniquely to a valid error pattern in $\textbf{Source}(\textbf{G})$, and in the meantime is a solution to the linear equation \autoref{eq:lineareq}. This complete the proof.
\end{proof}

\autoref{thm:linearequation} implies the algorithm of finding a solution of  the decision problem.

\begin{enumerate}
    \item Generate the matrix $\mathcal{M}(\textbf{G})$ and $\Vec{\textbf{b}}$.  
    \item Use Gaussian elimination algorithm to get a solution of $\mathcal{M}(\textbf{G}) \Vec{\textbf{v}}=\Vec{\textbf{b}}$.
\end{enumerate}

We still use the $4$-qubit transversal circuit depicted in \autoref{fig:RSGProgability} as an example. Now $\textbf{Source}(\textbf{G})=\{v_{Q_1[1]},v_{Q_1[2]},v_{Q_2[1]},v_{Q_2[2]},v_{Q_3[1]},v_{Q_3[2]},v_{Q_4[1]},v_{Q_4[2]}\}$\footnote{The order of index is given by index of $Q_{k_1}[i]$ smaller than index of $Q_{k_2}[j]$ if $k_1 < k_2$. Otherwise, if $k_1=k_2$, compare $i$ with $j$.}, $\textbf{Syn}(\textbf{G})=\{s_{Q_1[3]},s_{Q_2[3]},s_{Q_3[3]},s_{Q_4[3]} \}$ and $\textbf{S}=\{s_{Q_1[3]},s_{Q_2[3]}\}$. By the definition in \autoref{thm:linearequation}, $n=4$ and $m=8$,the matrix $\mathcal{M}(\textbf{G},\textbf{S}) \in \mathbb{F}_2^{4 \times 8}$. 
\begin{equation}
   \mathcal{M}(\textbf{G}) =\begin{pmatrix}
       1 & 1 & 0 & 0& 0 & 0& 0 & 0\\
       0 & 0 & 1 & 1& 0 & 0& 0 & 0\\
       1 & 0 & 0 & 0& 1 & 1& 0 & 0\\
       0 & 0 & 1 & 0& 0 & 0& 1 & 1\\
   \end{pmatrix}, \qquad \Vec{\textbf{b}}=\begin{pmatrix}
       1\\
       1\\
       0\\
       0
   \end{pmatrix}
\end{equation}
We can verify that if we assign $\Vec{\textbf{v}}=[v_{Q_1[1]},v_{Q_1[2]},v_{Q_2[1]},v_{Q_2[2]},v_{Q_3[1]},v_{Q_3[2]},v_{Q_4[1]},v_{Q_4[2]}]=[1,0,1,0,0,1,0,1]$, then 
\begin{equation}
    \mathcal{M}(\textbf{G}) \Vec{\textbf{v}}=\begin{pmatrix}
       1 & 1 & 0 & 0& 0 & 0& 0 & 0\\
       0 & 0 & 1 & 1& 0 & 0& 0 & 0\\
       1 & 0 & 0 & 0& 1 & 1& 0 & 0\\
       0 & 0 & 1 & 0& 0 & 0& 1 & 1\\
   \end{pmatrix} \times \begin{pmatrix}
       1\\
       0\\
       1\\
       0\\
       0\\
       1\\
       0\\
       1
   \end{pmatrix}=\begin{pmatrix}
       1\\
       1\\
       0\\
       0
   \end{pmatrix}
\end{equation}

On the other hand, when $\Vec{\textbf{v}}$ is unknown, we can use gaussian elimination algorithm on the augmented matrix $\big(  \mathcal{M}(\textbf{G}) | \Vec{\textbf{b}} \big)$ to get the row echelon form:

\begin{equation}
    \begin{split}
           \big(  \mathcal{M}(\textbf{G}) | \Vec{\textbf{b}} \big)&= \begin{pmatrix}
       1 & 1 & 0 & 0& 0 & 0& 0 & 0 & 1\\
       0 & 0 & 1 & 1& 0 & 0& 0 & 0 & 1\\
       1 & 0 & 0 & 0& 1 & 1& 0 & 0 & 0\\
       0 & 0 & 1 & 0& 0 & 0& 1 & 1 & 0\\
   \end{pmatrix}\\
   \xrightarrow{\text{Row}_3-\text{Row}_1} &\begin{pmatrix}
       1 & 1 & 0 & 0& 0 & 0& 0 & 0 & 1\\
       0 & 0 & 1 & 1& 0 & 0& 0 & 0 & 1\\
       0 & 1 & 0 & 0& 1 & 1& 0 & 0 & 1\\
       0 & 0 & 1 & 0& 0 & 0& 1 & 1 & 0\\
   \end{pmatrix}\\
   \xrightarrow{\text{Row}_1-\text{Row}_3} &\begin{pmatrix}
       1 & 0 & 0 & 0& 1 & 1& 0 & 0 & 0\\
       0 & 0 & 1 & 1& 0 & 0& 0 & 0 & 1\\
       0 & 1 & 0 & 0& 1 & 1& 0 & 0 & 1\\
       0 & 0 & 1 & 0& 0 & 0& 1 & 1 & 0\\
   \end{pmatrix}\\
   \xrightarrow{\text{Row}_4-\text{Row}_2} &\begin{pmatrix}
       1 & 0 & 0 & 0& 1 & 1& 0 & 0 & 0\\
       0 & 0 & 1 & 1& 0 & 0& 0 & 0 & 1\\
       0 & 1 & 0 & 0& 1 & 1& 0 & 0 & 1\\
       0 & 0 & 0 & 1& 0 & 0& 1 & 1 & 1\\
   \end{pmatrix}\\
   \xrightarrow{\text{Row}_2-\text{Row}_4} &\begin{pmatrix}
       1 & 0 & 0 & 0& 1 & 1& 0 & 0 & 0\\
       0 & 0 & 1 & 0& 0 & 0& 1 & 1 & 0\\
       0 & 1 & 0 & 0& 1 & 1& 0 & 0 & 1\\
       0 & 0 & 0 & 1& 0 & 0& 1 & 1 & 1\\
   \end{pmatrix}
    \end{split}
\label{eq:Gaussian}
\end{equation}
In the final row echelon form in \autoref{eq:Gaussian}, because the left $4 \times 4$ sub-matrix has row echelon form, we can set  $[v_{Q_1[1]},v_{Q_1[2]},v_{Q_2[1]},v_{Q_2[2]}]=[0,1,0,1]$ and the rest part $0$, this must also be a solution to the equation $\mathcal{M}(\textbf{G},\textbf{S}) \Vec{\textbf{v}}=\Vec{\textbf{b}}$.

\begin{theorem}($\mathcal{L}=\{ v_1v_2\cdots v_m |  \mathcal{F}(\textbf{RSG}(\textbf{G}),\textbf{S})=1 \}$ is in $\mathcal{P}$)
\label{thm:LinP}
\end{theorem}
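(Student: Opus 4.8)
The plan is to lean entirely on the reduction already established in \autoref{thm:linearequation}, which shows that deciding whether a satisfying assignment exists (equivalently, whether the instance lies in $\mathcal{L}$) is the same as deciding whether the linear system $\mathcal{M}(\textbf{G})\vec{v}=\vec{\textbf{b}}$ admits a solution over $\mathbb{F}_2$. Since solving, or merely deciding the solvability of, a linear system over a fixed finite field is a textbook polynomial-time task via Gaussian elimination, the membership problem inherits this efficiency. So the proof is really an assembly of a polynomial-time construction step, a polynomial-time elimination step, and a rank check.

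First I would account for the input size. Given the circuit $\textbf{C}$, by \autoref{thm:RSGAlg} the graph $\textbf{RSG}(\textbf{G})$ is constructible in $O(Tn^2+mn)$ time, and from it the incidence matrix $\mathcal{M}(\textbf{G})\in\mathbb{F}_2^{n\times m}$ with $m=(T-1)n$, together with the syndrome vector $\vec{\textbf{b}}\in\mathbb{F}_2^{n}$, can be read off by a single pass over the edges and the chosen syndrome set $\textbf{S}$. Hence the entire algebraic instance has size polynomial in $n,m,T$, and its construction is itself polynomial-time.

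Next I would run Gaussian elimination over $\mathbb{F}_2$ on the augmented matrix $\big(\mathcal{M}(\textbf{G})\mid\vec{\textbf{b}}\big)$, exactly as illustrated in the worked $4$-qubit example. Reducing an $n\times(m+1)$ matrix to row-echelon form costs $O(n^2 m)$ field operations, and because every entry lies in $\mathbb{F}_2$ each operation is an $O(1)$ bit operation, so the whole reduction is polynomial-time. The system is consistent, and thus a valid $\vec{v}$ exists, precisely when $\textbf{rank}(\mathcal{M}(\textbf{G}))=\textbf{rank}\big(\mathcal{M}(\textbf{G})\mid\vec{\textbf{b}}\big)$, equivalently when no row of the echelon form has an all-zero left block but a nonzero augmented entry. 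The decision procedure accepts in this case and rejects otherwise; if an explicit witness is wanted, a solution is recovered by back-substitution with the free variables set to $0$, still in polynomial time.

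I do not expect a genuine obstacle here: once \autoref{thm:linearequation} is in hand the result is immediate from the polynomial-time solvability of $\mathbb{F}_2$-linear systems. The only points demanding care are bookkeeping ones, namely verifying that both the matrix construction and the elimination stay polynomial in the original parameters, and stating the consistency criterion precisely enough that both the YES and NO cases are decided. It is worth emphasizing the conceptual contrast this proof sets up: the decision (existence) version collapses to linear algebra over $\mathbb{F}_2$ and is therefore easy, whereas the associated weighted counting version will be shown to be hard, so the tractability obtained here does not transfer to computing the full error-number distribution.
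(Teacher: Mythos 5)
Your proposal is correct and takes essentially the same route as the paper, which also deduces membership in $\mathcal{P}$ directly from \autoref{thm:linearequation} by observing that the problem is exactly \textbf{XORSAT}, i.e.\ solvability of a linear system over $\mathbb{F}_2$ decided by Gaussian elimination. The only difference is that you spell out the polynomial-time bookkeeping (matrix construction via \autoref{thm:RSGAlg}, the $O(n^2m)$ elimination cost, and the rank-based consistency check) that the paper's one-line proof leaves implicit.
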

\begin{proof}
    The proof is trivial from \autoref{thm:linearequation}. In complexity theory, this is exactly the \textbf{XORSAT} problem.
\end{proof}

\section{Counting version of the problem}

In the context of quantum error correction with error propagation, we care about the distribution of error number. In this context, the decision version of the problem, when there are $k$ bitflip errors in total, is rephrased as \textit{Is $\{v_1v_2,\cdots v_m\}$ a true assignment to $\mathcal{L}=\{ v_1v_2\cdots v_m |  \mathcal{F}(\textbf{RSG}(\textbf{G}),\textbf{S})=1 \}$'', such that $|\textbf{S}|=k$?}
The constraint can be translated to:

\begin{equation}
k=\mathcal{N}(\textbf{RSG}(\textbf{G}))=\sum_{ \substack{ k=1 \\ s_k \in \textbf{Syn}(\textbf{G})}}^{n}( \bigoplus_{\substack{v_i \in \textbf{Source}(\textbf{G}) \\s_k \in  \textbf{Syn}(\textbf{S}) \\ (v_i,s_k)\in \mathbf{E}(\textbf{G}} } v_i)
\qquad v_i \in \{0,1\}
\label{eq:countingconstraint}
\end{equation}

The new language can be defined as $\mathcal{L}(k)=\{ v_1v_2\cdots v_m |  \mathcal{N}(\textbf{RSG}(\textbf{G}))=k\}$. Apparantly, when $k$ is a small constant or a value with a small constant gap with $n$, $\mathcal{L}(k)\in \mathcal{P}$, because there are at most polynomial way of different arrangement of syndrome $\textbf{S}$, and we just need to enumerate all possible $\textbf{S}$.

Now we define a new language $\mathcal{L}_{max}$, which is defined as

\begin{equation}
  \mathcal{L}_{max}=\{ v_1v_2\cdots v_m |  \mathcal{N}(\textbf{RSG}(\textbf{G}))=\max_{v_1v_2\cdots v_m}(\textbf{RSG}(\textbf{G}))) \}  
  \label{eq:Lmax}
\end{equation}

First, I claim that if $\mathcal{L}(k)$ is in $\mathcal{P}$, then $\mathcal{L}_{max}$ is also in $\mathcal{P}$.  
\begin{lemma}($\mathcal{L}(k)$ in $\mathcal{P}$ implies $\mathcal{L}_{max}$ in $\mathcal{P}$)
\label{lem:LmaxLk}
If $\mathcal{L}(k) \in \mathcal{P}$ for $1 \leq   k \leq n$, then $\mathcal{L}_{max} \in \mathcal{P}$.
\end{lemma}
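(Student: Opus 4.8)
The plan is to reduce deciding membership in $\mathcal{L}_{max}$ to two cheap ingredients: a single evaluation of the error count $\mathcal{N}(\textbf{RSG}(\textbf{G}))$ on the given assignment, and the computation of the optimal value $M=\max_{v_1\cdots v_m}\mathcal{N}(\textbf{RSG}(\textbf{G}))$. The whole difficulty is isolated in producing $M$, and for that I exploit the hypothesised polynomial-time deciders for the languages $\mathcal{L}(k)$. The starting observation is a finiteness fact: since $\textbf{Syn}(\textbf{G})$ contains exactly $n$ vertices, the quantity $\mathcal{N}(\textbf{RSG}(\textbf{G}))$ of \autoref{eq:countingconstraint} always lies in $\{0,1,\dots,n\}$, so $M$ is one of only $n+1$ possible values. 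This cap on the range is what keeps the search over achievable error counts polynomial.

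First I would compute $M$. By assumption, for each $k$ with $1\le k\le n$ we can decide in polynomial time whether an error count of exactly $k$ is realizable by some source assignment, i.e. whether $\mathcal{L}(k)\neq\emptyset$. I would query these deciders for $k=n,n-1,\dots,1$ and set $M$ to be the largest $k$ returning ``yes''; if every query returns ``no'', I set $M=0$, which is always attained by the all-zero assignment. This uses at most $n$ polynomial-time queries, so $M$ is computable in time polynomial in the instance size. No monotonicity of achievability in $k$ is needed: the plain linear scan over the bounded range $\{1,\dots,n\}$ suffices precisely because the error count never exceeds $n$.

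Next, to decide whether a candidate string $v_1\cdots v_m$ belongs to $\mathcal{L}_{max}$, I evaluate $\mathcal{N}(\textbf{RSG}(\textbf{G}))$ on it directly. By \autoref{thm:linearequation} this count is just the Hamming weight of $\mathcal{M}(\textbf{G})\,\vec{v}$ over $\mathbb{F}_2$, a single matrix--vector product computable in $O(nm)$ time, so this step is trivially polynomial. I then accept exactly when this weight equals $M$; correctness is immediate from the definition \autoref{eq:Lmax}, and the combined running time is polynomial, giving $\mathcal{L}_{max}\in\mathcal{P}$. The only point requiring care, and the closest thing to an obstacle, is arguing that invoking the per-$k$ decider across all $k$ remains polynomial overall; this is handled entirely by the boundedness of the error count, so I expect no genuine difficulty beyond making the uniformity of the $n$ queries explicit.
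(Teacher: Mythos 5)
Your proposal is correct and follows essentially the same route as the paper's own proof: a descending linear scan $k=n,n-1,\dots,1$ over the bounded range of achievable error counts, using the hypothesised polynomial-time procedures for $\mathcal{L}(k)$ as oracles, with at most $n$ queries so the whole algorithm stays polynomial. Your only addition is to make the final membership test explicit (computing $\mathcal{N}$ as the Hamming weight of $\mathcal{M}(\textbf{G})\vec{v}$ and comparing with $M$), which is a minor but welcome tightening of the paper's looser witness-based phrasing rather than a different argument.
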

\begin{proof}
  Now since we have oracle $\mathcal{O}(k)$ for language $\mathcal{L}(k)$, which runs in polynomial time and space and returns one witness $\Vec{v} \in \mathcal{L}(k)$ or halt and return an empty string if $\mathcal{L}(k)$ is empty. We can simply enumerate $k$, can call $\mathcal{O}(k)$ from $k=n$,$k=n-1$,$\cdots$, to $k=1$, so that if some $\mathcal{O}(k)$ returns a nonempty string, use this as the output for the witness of $\mathcal{L}_{max}$. Otherwise, the maximum number of satisfiable clauses must be less than $k$ and we keep calling $\mathcal{O}(k-1)$.  The algorithm is still polynomial. We thus find a polynomial reduction from $\mathcal{L}(max)$ to $\mathcal{L}(k)$. 
\end{proof}

The complexity of $\mathcal{L}_{max}$ is, however, $NP-hard$, because this is exactly the \textbf{MAX-XORSAT} problem, which is a famous hard problem in combinitorial optimization. By \autoref{lem:LmaxLk}, we know that $\mathcal{L}(k)$ must be $\mathcal{NP}$ complete in the worst case.

\FloatBarrier
\section{Random sampling algorithm}

From the previous sections we know that exact calculation for error distribution propagation is at least $NP-complete$ in the worst case. In this section, we introduce a random sampling algorithm. The algorithm can be described briefly by the following steps:
\begin{enumerate}
    \item For the input circuit $\mathcal{C}$, construct the $\textbf{EPSTG}$ of $\mathcal{C}$ as $\textbf{G}$ and $\textbf{RSG}(\textbf{G})$.  Construct the matrix $\mathcal{M}(\textbf{G}) \in \mathbb{F}_2^{n \times m}$ defined in \autoref{thm:linearequation}. Initialize a list of probability $P(0),P(1),P(2),\cdots,P(n)$ as $0$ in memory. 
    \item Sample all $m=(n-1)T$ errors by Bernoulli distribution for a given bit-flip error rate $p$. Generate the random error pattern vector $\Vec{v} \in \mathbb{F}_2^m$.
    \item Use equation \autoref{eq:lineareq} $\mathcal{M}(\textbf{G}) \Vec{v}=\Vec{\textbf{b}}$  to calculate the error syndrome vector $\Vec{\textbf{b}}$.
    \item Count the number of $1$ in $\Vec{\textbf{b}}$, denoted as $m$. Update $P(m) \rightarrow P(m)+1$
    \item Repeat step $2$,$3$,$4$ for $N$ times. 
    \item Normalize $P(0),P(1),P(2),\cdots,P(n)$ by $P(k) \rightarrow \frac{P(k)}{N}$ for $1\leq  k \leq n$. 
 \end{enumerate}

It is obvious that the time complexity for the above algorithm is $O(Tn^2+2mn+N(nm^2+n))$, which is polynomial with $n$. By \autoref{thm:RSGAlg}, we can construct $\textbf{RSG}$ given $\mathcal{C}$ in time $O(Tn^2+mn)$. Given $\textbf{RSG}$, we just need $O(mn)$ time to construct $\mathcal{M}(\textbf{G})$. In each of the $N$ iteration in the above algorithm, we need $O(nm^2)$ time to do matrix multiplication and $O(n)$ time to count the number of $1$ in $\Vec{b}$.

\section{Probability space for the problem}

In this section, we use probability theory to analyze the error propagation.  The probability space is defined as follows:

\begin{definition}(Probability space)
    \label{def:probabilitySpace}
    The probability space of error propagation is defined as:
    \begin{enumerate}
        \item The \textbf{sample space} $\Omega$ is the set of all possible outcome of error syndrome.
        \item The \textbf{event space} $\mathcal{F}$ is the set of all events of  error syndrome that we care about.
        \item The probability function $\textbf{P}$ which assign a value to each event in the event space $\mathcal{F}$. 
    \end{enumerate}
We use $X_i$ as the instance of an error occurring at location $i$, which is denoted by the $i$ th column in the matrix $\mathcal{M}(\textbf{G},\textbf{S}) \in \mathbb{F}_2^{m \times n}$, and $Y_i$ as the instance of an error happen at the syndrome qubit $i$, $\overline{Y_i}$ as the instance of no error occurring at the syndrome qubit $i$. $Y_i \cap Y_j$ represent the event that two errors occur in the qubit of the syndrome $i$ and $j$ together.   $Y_i \cup Y_j$ represent the event that at least one error occurs in qubit $i$ or qubit $j$.
\end{definition}

In the following part, we use $\textbf{deg}(Y_i)$ to represent the degree of syndrome qubit in $\textbf{RSG}$ represent $Y_i$. We use $\textbf{Source}(Y_i) \subset \Omega $ to represent the set ot source qubits that has one edge connected with the syndrome qubit represented by $Y_i$. Clearly, $|\textbf{Source}(Y_i)|=\textbf{deg}(Y_i)$.We use $\textbf{deg}(Y_i \cap Y_j)$ to represent the number of edges that $Y_i$  and $Y_j$ share the same source vertices, $\textbf{deg}(Y_i \cap Y_j)=|\textbf{Source}(Y_i) \cap \textbf{Source}(Y_j)|$. For example, in an empty circuit, because all $\{Y_i\}$ are independent, $\textbf{deg}(Y_i \cap Y_j)=0$ for $i \neq j$. In a transversal circuit, only two pair of syndrome qubits with a $CNOT$ gate has  $\textbf{deg}(Y_i \cap Y_j)=1$. 

The benefit of defining such a probability space in \autoref{def:probabilitySpace}, is that we unify the probability measure of multiple syndrome qubit outcome with single syndrome qubit outcome, by taking $\cup$ and $\cap$ between element in event space $\mathcal{F}$. Specifically, we already know that $\textbf{P}(Y_i)$ is easy to calculate:
\begin{equation}
    \begin{split}
            \textbf{P}(Y_i)&=\sum_{k=0}^{\lfloor \textbf{deg}(Y_i)/2 \rfloor} \binom{\textbf{deg}(Y_i)}{2k+1}p^{2k+1}(1-p)^{\textbf{deg}(Y_i)-2k-1}=\frac{1}{2}-\frac{1}{2}(1-2p)^{\textbf{deg}(Y_i)}\\
            \textbf{P}(\overline{Y_i})&=1-\textbf{P}(Y_i)=\sum_{k=0}^{\lfloor \textbf{deg}(Y_i)/2 \rfloor} \binom{\textbf{deg}(Y_i)}{2k}p^{2k}(1-p)^{\textbf{deg}(Y_i)-2k}=\frac{1}{2}+\frac{1}{2}(1-2p)^{\textbf{deg}(Y_i)}
    \end{split}
    \label{eq:probY}
\end{equation}
The tricky part of our problem, however, is that $Y_1,Y_2,\cdots,Y_n$ are not independent so its not trivial to measure the event such as $\textbf{P}(Y_1 \cap Y_3 \cap \cdots Y_n)$.

\begin{theorem}(Inclusion exclusion principle)
    \label{thm:IncluExclu}
    \begin{equation}
        \textbf{P}(\cup_{i=1}^n Y_i)=\sum_{k=1}^n (-1)^{k-1} \sum_{\substack{ I \subseteq \{1,2,\cdots,n\} \\ |I|=k }}\textbf{P}(\cap_{i \in I} Y_i)
        \label{eq:IncluExclu}
    \end{equation}
\end{theorem}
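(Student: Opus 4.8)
The plan is to prove \autoref{thm:IncluExclu} by lifting the set-theoretic identity to the level of indicator random variables, where it becomes a purely algebraic identity that is immediate, and then recovering the probability statement by taking expectations. For each event $Y_i$ write $\mathbf{1}_{Y_i}$ for its indicator on the sample space $\Omega$, so that $\textbf{P}(A)=\mathbb{E}[\mathbf{1}_A]$ for every event $A$ and $\mathbf{1}_{\cap_{i\in I}Y_i}=\prod_{i\in I}\mathbf{1}_{Y_i}$. The single observation driving the argument is that the complement of a union is the intersection of the complements, which at the level of indicators reads
\begin{equation}
\mathbf{1}_{\cup_{i=1}^n Y_i}=1-\prod_{i=1}^n\bigl(1-\mathbf{1}_{Y_i}\bigr).
\label{eq:indicatorunion}
\end{equation}

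First I would expand the product on the right-hand side of \autoref{eq:indicatorunion} by distributing and grouping terms according to which factors contribute their $-\mathbf{1}_{Y_i}$ part. This yields
\begin{equation}
\prod_{i=1}^n\bigl(1-\mathbf{1}_{Y_i}\bigr)=\sum_{I\subseteq\{1,\dots,n\}}(-1)^{|I|}\prod_{i\in I}\mathbf{1}_{Y_i}=\sum_{I\subseteq\{1,\dots,n\}}(-1)^{|I|}\mathbf{1}_{\cap_{i\in I}Y_i},
\end{equation}
where the empty set contributes the constant $1$. Substituting this into \autoref{eq:indicatorunion}, the $I=\emptyset$ term cancels the leading $1$, leaving
\begin{equation}
\mathbf{1}_{\cup_{i=1}^n Y_i}=\sum_{\substack{I\subseteq\{1,\dots,n\}\\ I\neq\emptyset}}(-1)^{|I|-1}\mathbf{1}_{\cap_{i\in I}Y_i}.
\end{equation}

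Finally I would take expectations of both sides and apply linearity of expectation together with $\mathbb{E}[\mathbf{1}_{\cap_{i\in I}Y_i}]=\textbf{P}(\cap_{i\in I}Y_i)$, then reorganize the sum over nonempty subsets $I$ by their cardinality $k=|I|$ to reach exactly \autoref{eq:IncluExclu}. Taking expectations is legitimate because each indicator is a bounded random variable on the finite sample space $\Omega$ of error syndromes defined in \autoref{def:probabilitySpace}, so linearity applies term by term with no convergence concerns.

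I do not expect a genuine obstacle here, as this is a classical identity; the only point demanding care is the sign-and-multiplicity bookkeeping in the product expansion, namely verifying that each intersection $\cap_{i\in I}Y_i$ appears exactly once carrying the sign $(-1)^{|I|-1}$. Should a self-contained argument avoiding indicators be preferred, the same result follows by induction on $n$: splitting $\cup_{i=1}^n Y_i=(\cup_{i=1}^{n-1}Y_i)\cup Y_n$, applying the two-set rule $\textbf{P}(A\cup B)=\textbf{P}(A)+\textbf{P}(B)-\textbf{P}(A\cap B)$, invoking the inductive hypothesis on both $\cup_{i=1}^{n-1}Y_i$ and $\cup_{i=1}^{n-1}(Y_i\cap Y_n)$, and merging the two resulting alternating sums; in that route the delicate step is confirming that the coefficient of each intersection term assembles correctly via Pascal's rule.
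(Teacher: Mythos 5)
Your proposal is correct, but there is nothing in the paper to compare it against: \autoref{thm:IncluExclu} is stated without any proof, as a citation of the classical inclusion--exclusion principle, and is used only in its $n=2$ instance to derive \autoref{eq:Joint}. Your indicator-function argument --- the identity $\mathbf{1}_{\cup_{i=1}^n Y_i}=1-\prod_{i=1}^n(1-\mathbf{1}_{Y_i})$, the expansion $\prod_{i=1}^n(1-\mathbf{1}_{Y_i})=\sum_{I}(-1)^{|I|}\mathbf{1}_{\cap_{i\in I}Y_i}$ with the empty-set term cancelling the leading $1$, and then linearity of expectation, which is indeed unproblematic since the sample space of \autoref{def:probabilitySpace} is finite --- is the standard textbook derivation, and your sign and multiplicity bookkeeping is handled correctly (each nonempty $I$ arises from exactly one choice of factors in the product, carrying $(-1)^{|I|-1}$ after the cancellation). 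Your fallback induction via the two-set rule is likewise sound. In short, your argument is a complete and correct proof that supplies what the paper omits; no gap.
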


\begin{theorem}(Joint probability of two syndrome)
    The probability of there being at least one error in one of $Y_i$, $Y_j$ is
    \begin{equation}
        \textbf{P}(Y_i \cup Y_j)=\textbf{P}(Y_i)+\textbf{P}(Y_j)-\textbf{P}(Y_i \cap Y_j)
        \label{eq:Joint}
    \end{equation}
    Where $\textbf{P}(Y_i \cap Y_j)$ is the probability of both $Y_i$, $Y_j$ have errors:
    \begin{equation}
        \textbf{P}(Y_i \cap Y_j)=\frac{1}{4}-\frac{1}{4}(1-2p)^{\textbf{deg}(Y_i)}-\frac{1}{4}(1-2p)^{\textbf{deg}(Y_j)}+\frac{1}{4}(1-2p)^{\textbf{deg}(Y_i)+\textbf{deg}(Y_j)-2\textbf{deg}(Y_i \cap Y_j)}
        \label{eq:JointCap}
    \end{equation}
\end{theorem}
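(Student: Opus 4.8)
The first identity is nothing more than the standard two-set inclusion-exclusion $\textbf{P}(Y_i\cup Y_j)=\textbf{P}(Y_i)+\textbf{P}(Y_j)-\textbf{P}(Y_i\cap Y_j)$, which holds for any two events and needs no structure of the propagation graph; it is already a special case of \autoref{thm:IncluExclu} with $n=2$. All the real content of the statement lies in the explicit closed form for $\textbf{P}(Y_i\cap Y_j)$, so the plan is to compute that quantity directly. The key structural observation I would exploit is that the source sets $\textbf{Source}(Y_i)$ and $\textbf{Source}(Y_j)$ generally overlap, and it is precisely this overlap that makes $Y_i$ and $Y_j$ dependent. I would therefore partition the relevant source vertices into three disjoint blocks: $A=\textbf{Source}(Y_i)\setminus\textbf{Source}(Y_j)$ of size $\textbf{deg}(Y_i)-\textbf{deg}(Y_i\cap Y_j)$, $B=\textbf{Source}(Y_j)\setminus\textbf{Source}(Y_i)$ of size $\textbf{deg}(Y_j)-\textbf{deg}(Y_i\cap Y_j)$, and the shared block $C=\textbf{Source}(Y_i)\cap\textbf{Source}(Y_j)$ of size $\textbf{deg}(Y_i\cap Y_j)$.

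Next I would invoke \autoref{thm:RSGthm}, which says that $Y_i$ carries an error exactly when the number of random errors propagated into it is odd. Writing $a$, $b$, $e$ for the parities (in $\mathbb{F}_2$) of the error counts inside $A$, $B$, $C$ respectively, this means $Y_i$ has an error iff $a\oplus e=1$ and $Y_j$ has an error iff $b\oplus e=1$. Since $A$, $B$, $C$ are disjoint sets of source vertices and each vertex flips independently with probability $p$, the three parities $a,b,e$ are mutually independent. I would then compute $\textbf{P}(Y_i\cap Y_j)=\textbf{P}(a\oplus e=1,\ b\oplus e=1)$ by conditioning on the shared parity $e$: when $e=0$ we need $a=b=1$, and when $e=1$ we need $a=b=0$. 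This decoupling on the shared block is the conceptual heart of the argument, and is the step I expect to require the most care to state cleanly.

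Finally, I would supply the per-block parity probabilities from \autoref{lem:EvenOddSum}, namely that a block of size $k$ has odd parity with probability $\tfrac12-\tfrac12(1-2p)^k$ and even parity with probability $\tfrac12+\tfrac12(1-2p)^k$. Setting $x=(1-2p)^{\textbf{deg}(Y_i)-\textbf{deg}(Y_i\cap Y_j)}$, $y=(1-2p)^{\textbf{deg}(Y_j)-\textbf{deg}(Y_i\cap Y_j)}$ and $z=(1-2p)^{\textbf{deg}(Y_i\cap Y_j)}$, the conditioning above yields
\begin{equation}
\textbf{P}(Y_i\cap Y_j)=\tfrac18(1+z)(1-x)(1-y)+\tfrac18(1-z)(1+x)(1+y).
\end{equation}
Expanding this product, the odd terms in each of $x$, $y$, $z$ cancel and one is left with $\tfrac14(1+xy-zx-zy)$. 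Re-expressing $xy=(1-2p)^{\textbf{deg}(Y_i)+\textbf{deg}(Y_j)-2\textbf{deg}(Y_i\cap Y_j)}$, $zx=(1-2p)^{\textbf{deg}(Y_i)}$ and $zy=(1-2p)^{\textbf{deg}(Y_j)}$ gives exactly the claimed formula. The only genuine obstacle is justifying the independence of $a,b,e$ and the disjointness of the three blocks; once that is in place the remainder is routine algebraic simplification.
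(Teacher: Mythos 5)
Your proposal is correct and follows essentially the same route as the paper's own proof: the paper likewise dispatches \autoref{eq:Joint} via inclusion-exclusion and computes $\textbf{P}(Y_i \cap Y_j)$ by splitting the sources into the two set-differences and the shared intersection, enumerating exactly your two parity cases (odd/odd/even and even/even/odd), and multiplying the per-block parity probabilities from \autoref{lem:EvenOddSum} before simplifying to \autoref{eq:JointCap}. Your write-up is in fact slightly more careful than the paper's, since you make explicit the independence of the three block parities and the algebraic cancellation, both of which the paper leaves implicit.
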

\begin{proof}
    \autoref{eq:Joint} is a direct result of inclusion exclusion principle when $n=2$. For \autoref{eq:JointCap},  we just need to calculate the probability for two cases:
    \begin{enumerate}
        \item There are odd number of errors in $\textbf{Source}(Y_i)/\textbf{Source}(Y_j)$ and $\textbf{Source}(Y_j)/\textbf{Source}(Y_i)$, and even number of errors in $\textbf{Source}(Y_j)\cap \textbf{Source}(Y_i)$
        \item There are even number of errors in $\textbf{Source}(Y_i)/\textbf{Source}(Y_j)$ and $\textbf{Source}(Y_j)/\textbf{Source}(Y_i)$, and odd number of errors in $\textbf{Source}(Y_j)\cap \textbf{Source}(Y_i)$
    \end{enumerate}
    Let $\textbf{deg}(Y_i)=x$,\quad $\textbf{deg}(Y_j)=x$,\quad $\textbf{deg}(Y_i \cap Y_j)=a$. $|\textbf{Source}(Y_i)/\textbf{Source}(Y_j)|=x-a$,$|\textbf{Source}(Y_j)/\textbf{Source}(Y_i)|=y-a$

    Then the probability of the two cases are:
    \begin{equation}
        \begin{split}
               \textbf{P}(Y_i \cap Y_j)&=[\frac{1}{2}-\frac{1}{2}(1-2p)^{x-a}]\times[\frac{1}{2}+\frac{1}{2}(1-2p)^{a}]\times [\frac{1}{2}-\frac{1}{2}(1-2p)^{y-a}]\\
               &+[\frac{1}{2}+\frac{1}{2}(1-2p)^{x-a}]\times[\frac{1}{2}-\frac{1}{2}(1-2p)^{a}]\times [\frac{1}{2}+\frac{1}{2}(1-2p)^{y-a}]\\
               &=\frac{1}{4}-\frac{1}{4}(1-2p)^x-\frac{1}{4}(1-2p)^y+\frac{1}{4}(1-2p)^{x+y-2a}
        \end{split}
    \end{equation}
\end{proof}

It remains open how to bound the error propagation with inclusion exclusion principle in the general case.

\section{From the perspective of Entropy}

In this section, we discuss the entropy increase caused by error propagation and try to connect it with the question that we are interested in. 

\begin{definition}(Entropy of error distribution)
    \label{def:entropy}
     The entropy of the syndrome error distribution $P(k)$ is defined as 
     \begin{equation}
         \mathcal{H}(P)=-\sum_{k=1}^n P(k)\log(P(k))
         \label{eq:entropy}
     \end{equation}
     If $P$ can be described by a continuous function, then the entropy can also be defined in the continuous integral format:
      \begin{equation}
         \mathcal{H}(P)=-\int_{k=0}^\infty P(x)\log(P(x)) dx
         \label{eq:entropyInt}
     \end{equation}    
     Because there is a one-one correspondence between the error probability $p$, the reverse spanning graph $\textbf{RSG}$ of the circuit, with the final error distribution $P(k)$, we can also define the entropy of $\textbf{RSG}$ as \autoref{eq:entropy}.
\end{definition}

\begin{theorem}(Entropy increase with time)
\label{thm:Eincrease}
Given a quantum circuit $\textbf{C}$ with $n$ qubits and not quantum gates. The entropy of error distribution defined in \autoref{def:entropy} always increase.
\end{theorem}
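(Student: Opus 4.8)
The plan is to reduce the statement to the monotonicity of a one–parameter family of binomial entropies. First I would recall from \autoref{lem:expC0} that in the empty circuit $\mathbf{C}_0$ the $n$ qubits evolve independently, so the number of surviving bit-flip errors at time $T$ is a binomial random variable $K_T\sim\mathrm{Bin}(n,q_T)$ whose single-qubit error probability, by \autoref{eq:PoddPeven} and \autoref{lem:EvenOddSum}, is
\begin{equation}
    q_T=P(\text{odd})=\frac12-\frac12(1-2p)^{T}.
\end{equation}
Hence $P(k)=\binom{n}{k}q_T^{\,k}(1-q_T)^{n-k}$, and the entropy in \autoref{def:entropy} depends on $T$ only through $q_T$, i.e.\ $\mathcal{H}(P)=g(q_T)$ where $g(q):=\mathcal{H}(\mathrm{Bin}(n,q))$.

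Second, I would control the parameter. For $0<p<\frac12$ the factor $(1-2p)^{T}$ is positive and strictly decreasing in $T$, so $q_T$ is strictly increasing with $q_T\uparrow\frac12$; this is the same monotone behaviour already exploited in \autoref{thm:shifttime} and \autoref{cor:convergeC0}. It therefore suffices to show that $g$ is strictly increasing on $[0,\frac12)$, since the chain rule $\frac{d}{dT}\mathcal{H}(P)=g'(q_T)\,\frac{dq_T}{dT}$ and $\frac{dq_T}{dT}>0$ then give the claim.

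Third, the crux is the monotonicity of $g$. In the regime the paper already invokes — large $n$, where by the central limit theorem $K_T$ is approximately Gaussian — there is a clean argument using the continuous entropy \autoref{eq:entropyInt}: a Gaussian with variance $\sigma_T^2=n\,q_T(1-q_T)$ has entropy $\tfrac12\log(2\pi e\,\sigma_T^2)$, and because $q\mapsto q(1-q)$ is strictly increasing on $[0,\frac12)$, $\sigma_T^2$ strictly increases with $T$, so $\mathcal{H}(P)$ does too. To treat finite $n$ exactly I would differentiate $g$ directly; writing $p_k=\binom{n}{k}q^{k}(1-q)^{n-k}$ and using $\sum_k p_k'=0$ yields
\begin{equation}
    g'(q)=-\frac{1}{q(1-q)}\,\mathrm{Cov}\!\left(K,\ \log\binom{n}{K}\right)-n\log\frac{q}{1-q}.
\end{equation}
For $q<\frac12$ the second term is strictly positive, while the first is governed by the fact that $k\mapsto\log\binom{n}{k}$ is concave (log-concavity of the binomial coefficients) and that $\mathrm{Bin}(n,q)$ is positively skewed for $q<\frac12$: the linear part of the covariance exactly cancels the term $n\log\frac{q}{1-q}$, and the concavity-times-positive-skew correction leaves $g'(q)>0$.

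The step I expect to be the main obstacle is precisely this finite-$n$ monotonicity, because the two contributions to $g'(q)$ nearly cancel, so the sign is decided by a second-order (curvature $\times$ third central moment) term rather than by either piece alone. I would therefore either present the Gaussian/continuous-entropy computation as the intended (asymptotic) proof, or close the finite case rigorously by bounding the remainder in the Taylor expansion of $\log\binom{n}{k}$ about $k=nq$, or simply invoke the known result that $\mathcal{H}(\mathrm{Bin}(n,q))$ is increasing on $[0,\frac12]$. Everything else — the binomial form of $P(k)$, the strict monotonicity of $q_T$, and the reduction $\mathcal{H}(P)=g(q_T)$ — is routine given the earlier lemmas.
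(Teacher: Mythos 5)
You should first know that the paper contains no proof of this theorem at all: the entire entropy section consists of bare, unproved statements, so there is no in-house argument to compare against, and your proposal would in effect supply the missing proof. Its skeleton is sound. The reduction is exactly right: in a gate-free circuit the qubits are independent, the syndrome count is $K_T\sim\mathrm{Bin}(n,q_T)$ with $q_T=\tfrac12-\tfrac12(1-2p)^{T}$ by \autoref{lem:EvenOddSum}, and $q_T$ increases strictly to $\tfrac12$ for $0<p<\tfrac12$, so everything hinges on $g(q)=\mathcal{H}(\mathrm{Bin}(n,q))$ being increasing on $[0,\tfrac12)$. Your derivative identity checks out: from $\frac{dp_k}{dq}=p_k\frac{k-nq}{q(1-q)}$ and $\sum_k\frac{dp_k}{dq}=0$ one gets
\begin{equation}
g'(q)=-\frac{1}{q(1-q)}\Bigl[\mathrm{Cov}\bigl(K,\log\tbinom{n}{K}\bigr)+\log\tfrac{q}{1-q}\,\mathrm{Var}(K)\Bigr],
\end{equation}
and $\mathrm{Var}(K)=nq(1-q)$ yields your formula; you are also correct that the linear part of $\log\binom{n}{k}$ about $k=nq$ (slope $\log\tfrac{1-q}{q}$) cancels the $-n\log\tfrac{q}{1-q}$ term exactly, leaving the sign to be decided by the curvature of $\log\binom{n}{k}$ against the positive skew $nq(1-q)(1-2q)$.

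Two caveats on how you close the crux. First, the Gaussian/differential-entropy computation proves nothing at finite $n$: $\tfrac12\log(2\pi e\,\sigma_T^2)$ is the differential entropy of the continuous approximant, not the Shannon entropy of the discrete law, so that route is an asymptotic heuristic only; likewise the Taylor-expansion route needs genuine remainder control precisely because the cancellation is exact at first order. The rigorous closure is your third option: monotonicity of $\mathcal{H}(\mathrm{Bin}(n,q))$ on $[0,\tfrac12]$ is a known (and nontrivial) theorem, due to Mateev (1978), and citing it makes the whole argument complete. Minor points: strict increase requires $p\in(0,\tfrac12)$ — at $p=\tfrac12$ the distribution is already stationary at $\mathrm{Bin}(n,\tfrac12)$ — though the symmetry $\mathcal{H}(\mathrm{Bin}(n,q))=\mathcal{H}(\mathrm{Bin}(n,1-q))$ together with $|q_T-\tfrac12|=\tfrac12|1-2p|^{T}$ decreasing extends monotonicity to all $p\notin\{0,\tfrac12,1\}$; and \autoref{def:entropy} literally sums from $k=1$, dropping the $P(0)$ term — your argument correctly treats the full Shannon entropy, which is evidently what was intended.
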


\begin{theorem}(Entropy increase with propagation)
\label{thm:Eincrease}
Given a quantum circuit $\textbf{C}$ with $n$ qubits and some $CNOT$ quantum gates. The entropy of error distribution defined in \autoref{def:entropy} always increase.
\end{theorem}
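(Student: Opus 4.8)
The plan is to lift the analysis from the error-number distribution $P(k)$ to the full distribution $\mu_T$ over error \emph{patterns} in $\mathbb{F}_2^n$ at time $T$, prove monotonicity of the Shannon entropy there by an $H$-theorem argument, and only then transfer the conclusion back to $P(k)$ as defined in \autoref{def:entropy}. Concretely, I would model one time step as the composition of two maps on the pattern distribution: first the conjugation action of the $CNOT$ layer, which on the $X$-support acts as a \emph{linear} map $A$ on $\mathbb{F}_2^n$ (each $CNOT(i,j)$ sends $e_i \mapsto e_i + e_j$, cf.\ \autoref{eq:cnotconj}) and is a \emph{bijection} because $CNOT$ is an involution; second the addition of an independent single-qubit bit-flip vector $Z$ whose law $\nu$ is the product Bernoulli$(p)$ measure. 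Thus $\mu_{T+1} = (A_\ast \mu_T) \ast \nu$.

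The rigorous core is a single entropy inequality. Since $A$ is a bijection, $H(A_\ast \mu_T) = H(\mu_T)$. Writing $X \sim A_\ast\mu_T$ and $Z \sim \nu$ independently, and $Y = X + Z$, I would use that conditioning cannot increase entropy together with the fact that, for each fixed $z$, the map $x \mapsto x+z$ is a bijection of $\mathbb{F}_2^n$:
\[
H(Y) \geq H(Y \mid Z) = H(X+Z \mid Z) = H(X \mid Z) = H(X) = H(\mu_T).
\]
Hence $H(\mu_{T+1}) \geq H(\mu_T)$. Equivalently, the one-step kernel $K(x,y) = \nu(y - Ax)$ is doubly stochastic, since $A$ bijective gives $\sum_x \nu(y - Ax) = 1$, so the chain relaxes toward the uniform distribution on its reachable subspace and its entropy is nondecreasing. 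This is consistent with the limiting behaviour in \autoref{cor:convergeC0}, where the per-qubit error probability tends to $\tfrac12$.

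The main obstacle is that \autoref{def:entropy} measures the entropy of the \emph{weight} distribution $P(k) = \sum_{w(x)=k}\mu_T(x)$ rather than of $\mu_T$ itself, and pushing monotonicity through the nonlinear lumping $x \mapsto w(x)$ is not automatic: the data-processing inequality only yields $\mathcal{H}(P_T) \le H(\mu_T)$, not monotonicity of $\mathcal{H}(P_T)$ in $T$, and the map $A$ genuinely scrambles Hamming weights. My plan to close this gap is via majorization and the Schur-concavity of entropy: I would aim to show $P_T \succ P_{T+1}$, i.e.\ that the induced transition on weight distributions is a smoothing (doubly stochastic relative to the appropriate reference) operator, so that each step flattens $P$ and $\mathcal{H}(P)$ increases. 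Establishing that the lumped weight-transition is genuinely mixing, rather than merely that the pattern chain mixes, is the delicate step; in the $CNOT$-free case it degenerates to monotonicity of the Binomial entropy in its parameter (cf.\ \autoref{lem:expC0}), and the remaining task is to extend this to the \emph{dependent} syndrome variables created by propagation, where I expect to fall back on bounding $\mathcal{H}(P)$ through the marginal syndrome entropies and their pairwise couplings.
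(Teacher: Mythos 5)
First, a point of reference: the paper states this theorem (and both of its neighbours in the entropy section) with no proof at all, so there is no argument of the paper to compare your route against; your proposal has to stand on its own. Its first half does stand: modelling one time window as a bijective $\mathbb{F}_2$-linear map $A$ (a layer of disjoint $CNOT$s is invertible) followed by convolution with product-Bernoulli noise, and concluding $H(\mu_{T+1}) \geq H(\mu_{T+1}\mid Z) = H(A_\ast\mu_T) = H(\mu_T)$, is correct and rigorous, and gives a genuine $H$-theorem for the distribution over error \emph{patterns}. That is already more than the paper establishes.

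The gap you flag at the lumping step, however, is not a technicality that majorization can be expected to close: the claim $P_T \succ P_{T+1}$ is false at the level of generality at which your argument operates. Concretely, take $n=2$ and the product pattern law $\mathrm{Bern}(0.5)\otimes\mathrm{Bern}(0.9)$; the map $(x_1,x_2)\mapsto(x_1,x_1\oplus x_2)$ induced by $CNOT(1,2)$ sends the weight distribution $\big(P(0),P(1),P(2)\big)=(0.05,\,0.5,\,0.45)$ to $(0.05,\,0.9,\,0.05)$, so $\mathcal{H}(P)$ drops from about $0.86$ to about $0.39$ nats even though the pattern entropy is exactly preserved (the map is a bijection). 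Hence bijectivity plus double stochasticity of the pattern kernel — the only structure your outline uses — cannot yield monotonicity of the weight entropy of \autoref{def:entropy}; any correct proof must exploit properties special to the \emph{reachable} distributions. Indeed, starting from the all-clean state with $p<\frac{1}{2}$, the pattern is an $\mathbb{F}_2$-linear image $x=Mz$ of i.i.d.\ $\mathrm{Bern}(p)$ sources, so every bias $\mathbb{E}\big[(-1)^{\langle s,x\rangle}\big]=(1-2p)^{w(M^\top s)}$ is nonnegative, whereas my counterexample has a negative bias on the second qubit and is therefore unreachable — that nonnegativity (or something like it) is the ingredient a proof would need and your plan never invokes. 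Two further cautions: the lumped chain's stationary law is $\mathrm{Bin}(n,\frac{1}{2})$, not uniform, so Schur-concavity relative to the uniform measure is the wrong tool — data processing does give monotone decrease of $D\big(P_T\,\big\|\,\mathrm{Bin}(n,\tfrac{1}{2})\big)$, but that does not imply monotone increase of the Shannon entropy $\mathcal{H}(P_T)$; and the theorem itself is stated too vaguely (``always increase'', presumably per time window) to fix the equality cases any stationary analysis must handle. In short: your pattern-level $H$-theorem is sound, but the stated claim about $\mathcal{H}(P)$ remains unproven, both in your proposal and in the paper.
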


\begin{theorem}(Entropy in disjoint subgraph)
\label{thm:Edisjoint}
Given a quantum circuit $\textbf{C}$ with $n$ qubits and some $CNOT$ quantum gates. If the $\textbf{RSG}$ of the circuit has disjoin subgraphs $\textbf{G}_1$,$\textbf{G}_2$, then 
\begin{equation}
    \mathcal{H}(\textbf{RSG})=\mathcal{H}(\textbf{G}_1)+\mathcal{H}(\textbf{G}_2)
\end{equation}
\end{theorem}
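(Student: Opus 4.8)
The plan is to reduce the statement to the additivity of Shannon entropy under independence, exploiting the combined--distribution theorem already established for disjoint subgraphs. First I would invoke \autoref{def:disjoint}: since $\textbf{G}_1$ and $\textbf{G}_2$ are disjoint, no source vertex feeds a syndrome vertex of both blocks, so by \autoref{thm:proppath} the error count $K_1$ carried by the syndrome of $\textbf{G}_1$ and the error count $K_2$ carried by the syndrome of $\textbf{G}_2$ are \emph{independent} random variables, with marginal number distributions $P_1$ and $P_2$. The combined--distribution theorem then gives the convolution $P(k)=\sum_{r=0}^{k}P_1(r)P_2(k-r)$, so that $\mathcal{H}(\textbf{RSG})=\mathcal{H}(K_1+K_2)$ while $\mathcal{H}(\textbf{G}_1)=\mathcal{H}(K_1)$ and $\mathcal{H}(\textbf{G}_2)=\mathcal{H}(K_2)$. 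The target identity is therefore exactly $\mathcal{H}(K_1+K_2)=\mathcal{H}(K_1)+\mathcal{H}(K_2)$, and the proof must be carried out at the level of the number distribution, not the pattern distribution.

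Second, I would set up the information--theoretic bookkeeping through the chain rule. Independence of $K_1,K_2$ makes their joint entropy split cleanly, $\mathcal{H}(K_1,K_2)=\mathcal{H}(K_1)+\mathcal{H}(K_2)$, so the right--hand side of the claim equals $\mathcal{H}(K_1,K_2)$. Writing $S=K_1+K_2$ and using that the pair $(K_1,K_2)$ and the pair $(S,K_1)$ are in bijection (since $K_2=S-K_1$), the chain rule gives
\begin{equation}
\mathcal{H}(K_1)+\mathcal{H}(K_2)=\mathcal{H}(K_1,K_2)=\mathcal{H}(S)+\mathcal{H}(K_1\mid S).
\label{eq:entropyChain}
\end{equation}
Hence proving \autoref{thm:Edisjoint} is equivalent to establishing $\mathcal{H}(K_1\mid S)=0$, i.e. that observing the total syndrome error count $S$ already pins down how many of those errors originated in $\textbf{G}_1$.

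The main obstacle is precisely this final equality. The conditional entropy $\mathcal{H}(K_1\mid S)$ in \eqref{eq:entropyChain} vanishes if and only if the addition map $(k_1,k_2)\mapsto k_1+k_2$ is injective on $\mathrm{supp}(P_1)\times\mathrm{supp}(P_2)$, so the crux is a combinatorial collision--freeness statement about the two count supports rather than anything quantum. I would attack it by characterizing exactly when the supports decode uniquely --- for example when one block is error--free with probability one (degenerate $P_2$, so $K_1$ passes through unchanged and the constant contributes zero), or when the achievable count ranges occupy separated scales so that $S$ determines $(K_1,K_2)$ --- and then prove the identity in that regime, which is the regime where the additive analogy with the expectation (\autoref{thm:ExpecDisj}) and variance (\autoref{thm:VarDisj}) results genuinely extends to entropy. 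I expect the honest difficulty to be that for two generic blocks both count ranges are contiguous integer intervals based at $0$, whose sumset collides heavily; there \eqref{eq:entropyChain} yields only $\mathcal{H}(\textbf{RSG})\le \mathcal{H}(\textbf{G}_1)+\mathcal{H}(\textbf{G}_2)$ with gap $\mathcal{H}(K_1\mid S)$, and forcing that gap to zero for \emph{arbitrary} disjoint subgraphs is the step the proof must ultimately confront.
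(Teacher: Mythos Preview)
Your analysis is sharper than the paper's treatment: the paper states \autoref{thm:Edisjoint} with no proof at all, so there is nothing to compare against line by line. More importantly, you have correctly isolated the reason no proof is given---the claim, read literally through \autoref{def:entropy}, is false in general.

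Your chain-rule reduction is exactly right. Independence of the blocks gives $\mathcal{H}(K_1,K_2)=\mathcal{H}(K_1)+\mathcal{H}(K_2)$, and the bijection $(K_1,K_2)\leftrightarrow(S,K_1)$ yields $\mathcal{H}(K_1)+\mathcal{H}(K_2)=\mathcal{H}(S)+\mathcal{H}(K_1\mid S)$. So the theorem is equivalent to $\mathcal{H}(K_1\mid S)=0$, i.e.\ the sum $S$ must determine the pair $(K_1,K_2)$ almost surely. As you observe, this fails whenever both supports are nondegenerate intervals anchored at $0$. A concrete instance inside the paper's own running example: take two copies of the single-qubit empty block from \autoref{lem:expC0}, each with $P(\text{odd})=q\in(0,1)$. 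Then $\mathcal{H}(K_1)+\mathcal{H}(K_2)=2h(q)$ while $S\in\{0,1,2\}$ with law $((1-q)^2,\,2q(1-q),\,q^2)$, whose entropy is strictly below $2h(q)$ for every $q\in(0,1)$. The gap is precisely $\mathcal{H}(K_1\mid S=1)=1$ bit weighted by $\Pr[S=1]$.

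So the honest conclusion of your argument is not a proof of the stated equality but a proof of the inequality $\mathcal{H}(\textbf{RSG})\le \mathcal{H}(\textbf{G}_1)+\mathcal{H}(\textbf{G}_2)$, with equality only in the degenerate regimes you list (one block deterministic, or supports whose sumset is collision-free). That is the correct statement; the paper's version should be read as that inequality, or else as additivity of the entropy of the full syndrome \emph{pattern} rather than of the error \emph{count}---a distinction the paper's \autoref{def:entropy} does not make.
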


\section{Simulation and experiment result}

We implement the numerical simulation in python. The source code is stored in the following github repository \url{https://github.com/yezhuoyang/Qerrorprop}. First, to justify that our model is mathematically correct and the phenomena of error propagation exists, we implement experiment for empty circuit, transversal CNOT circuit and random circuit. Because exact calculation for error number distribution is hard, we use sampling method to approximate the error number distribution, expectation value and variance.

\subsection{Experiment with probabilistic model}

\begin{figure}[h!]
    \centering
\fbox{
\begin{minipage}{0.45\textwidth}
     \centering
    \includegraphics[width=\linewidth]{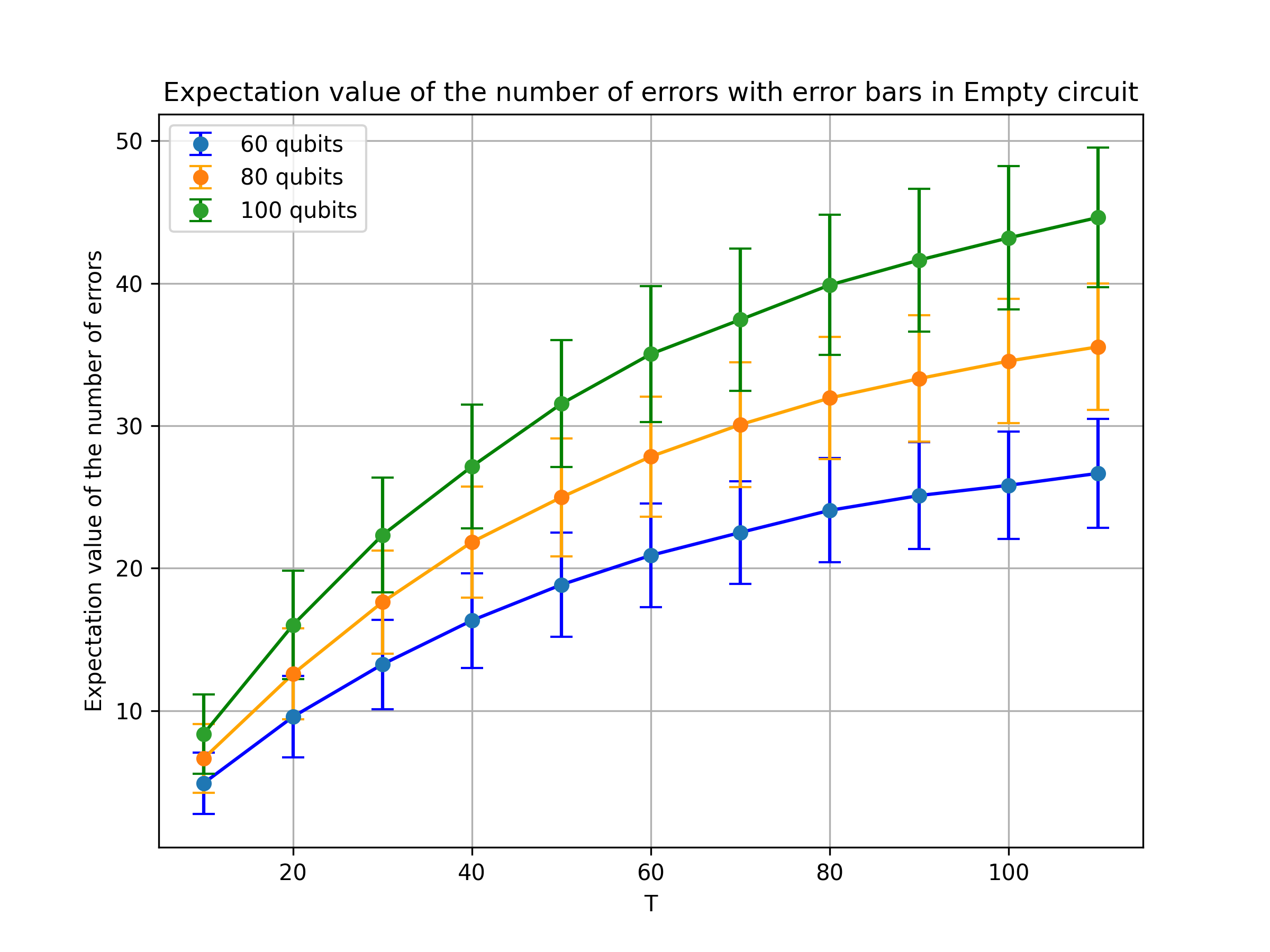}
    \caption{We do experiment for empty circuit with the random sampling algorithm for qubit number $n=60,80,100$, bit-flip rate $p=0.01$, and $T$ from $0$ to $100$. All three curve converges to $\frac{n}{2}$.}
    \label{fig:expErrorEmpty}
\end{minipage}
}
\fbox{
\begin{minipage}{0.45\textwidth}
     \centering
    \includegraphics[width=\linewidth]{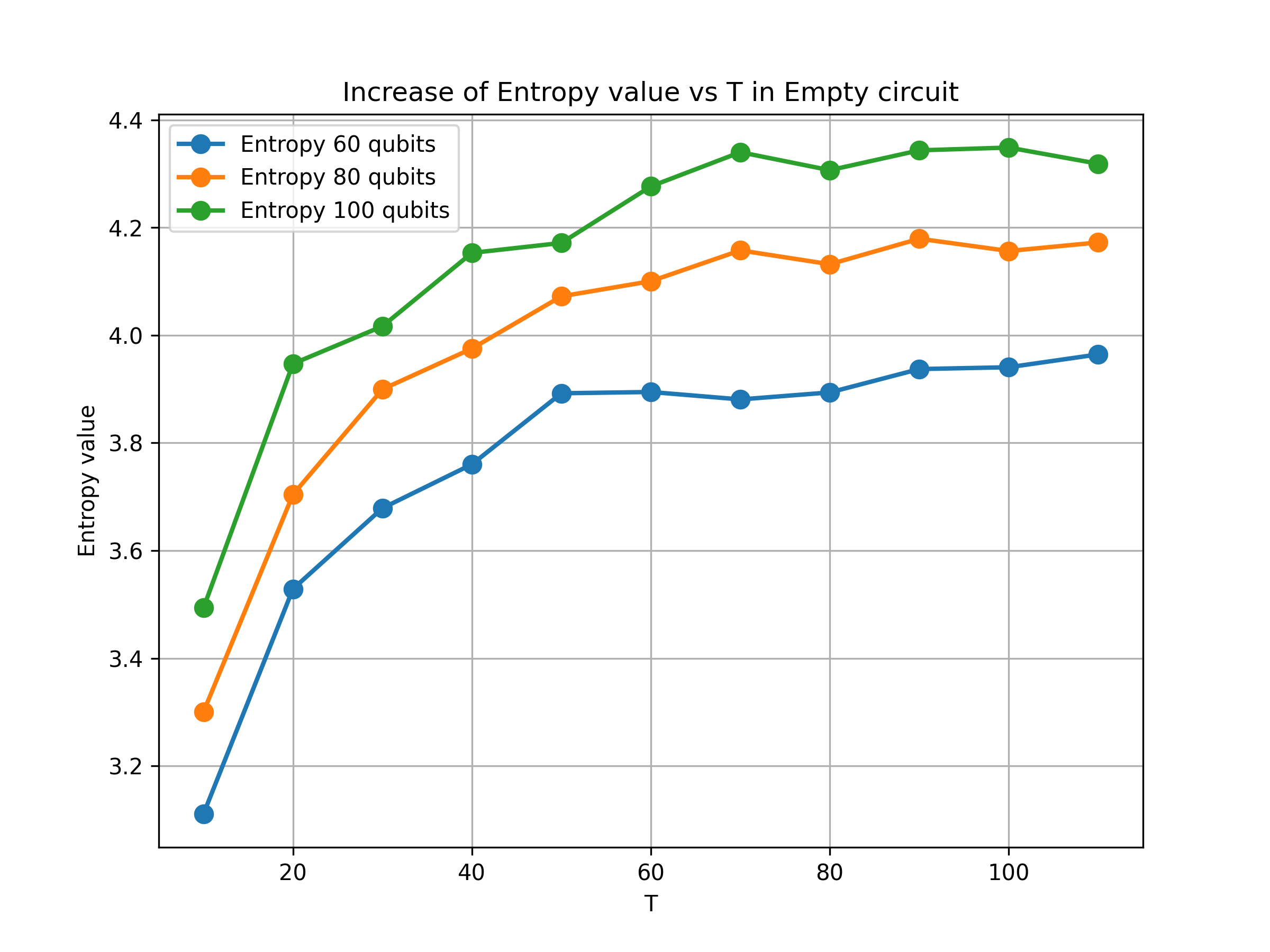}
    \caption{We plot the entropy of the distribution with respect to $T$.}
    \label{fig:entropy}
\end{minipage}
}
\end{figure}

We implement the experiment for an empty circuit with different $T$ using random sampling algorithm. As shown in \autoref{fig:expErrorEmpty}, the result is consistent with what we have proved in \autoref{lem:expC0}, \autoref{thm:shifttime} and \autoref{thm:shiftp}, that the expectation value increase monotonically with time and finally converge to $\frac{n}{2}$. We also calculate the entropy defined in \autoref{def:entropy} for different $T$. As expected by intuition, the entropy also increase. 

\begin{figure}[h!]
\centering
\begin{minipage}{0.45\textwidth}
     \centering
    \includegraphics[width=\linewidth]{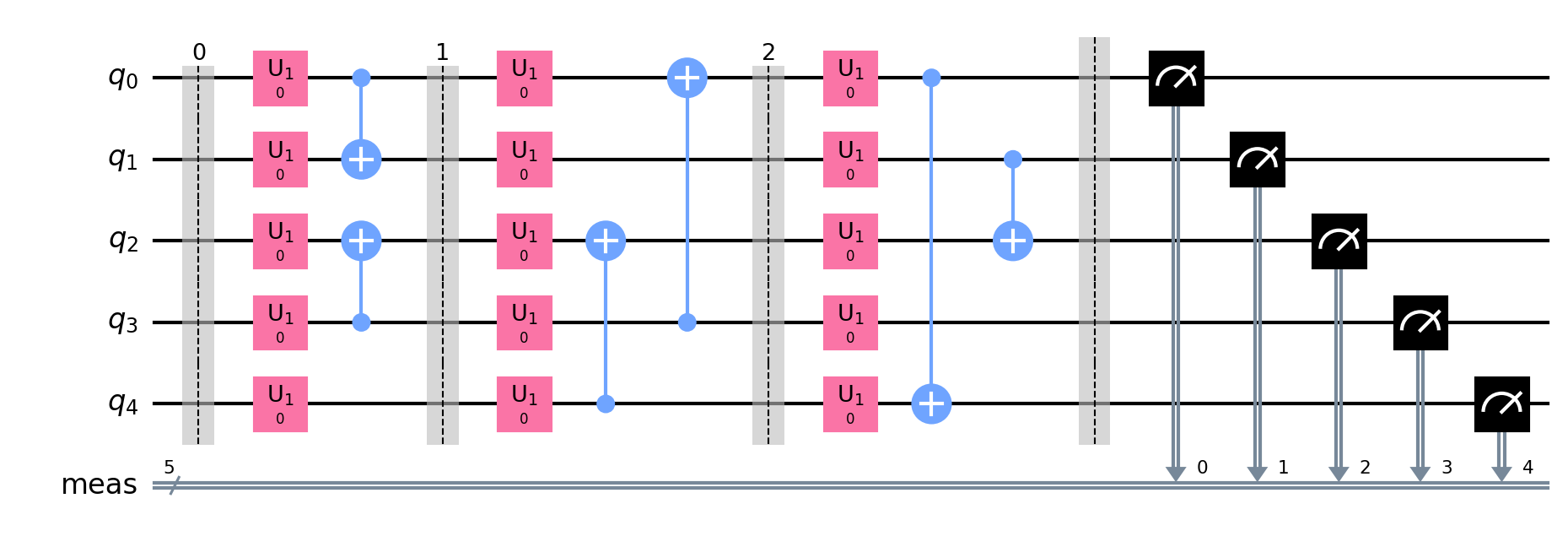}
\end{minipage}
\begin{minipage}{0.45\textwidth}
     \centering
    \includegraphics[width=\linewidth]{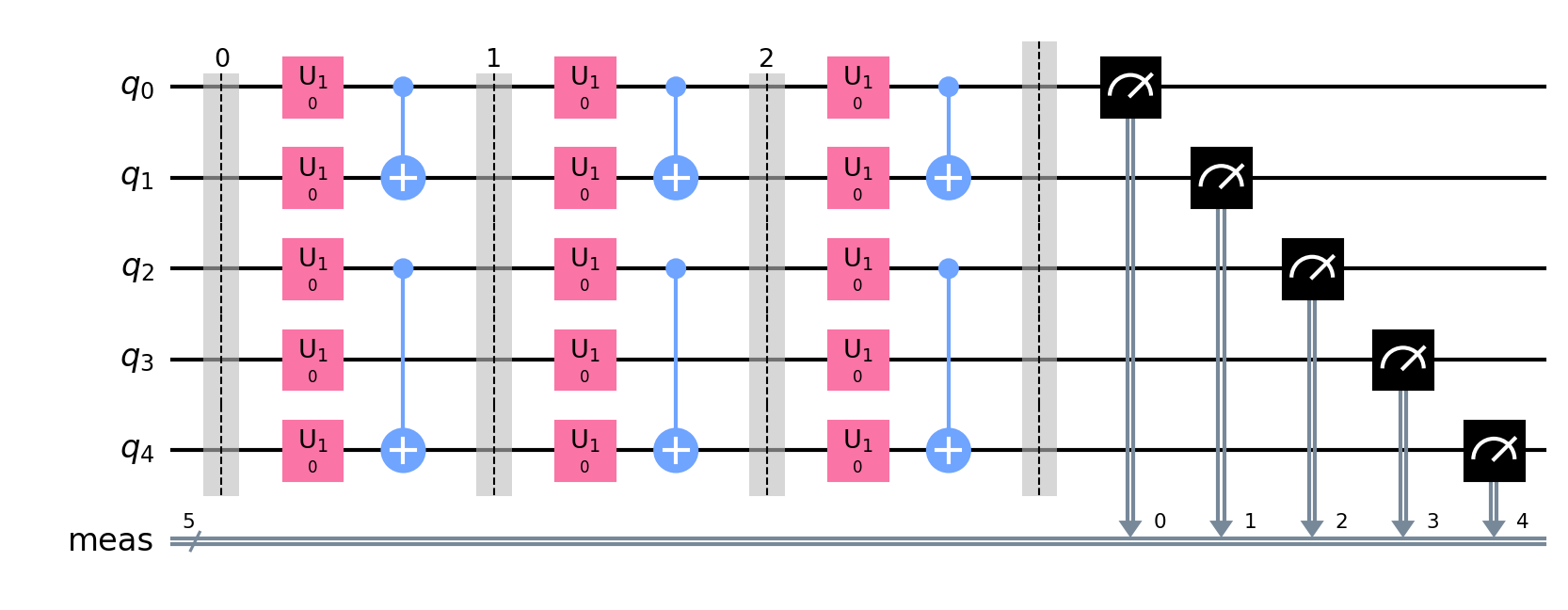}
\end{minipage}
    \caption{Illustration of the two different types of random circuit we use. The left figure is a random circuit with global connection, the right figure is a random circuit with only fixed local connection. }
\label{fig:RandomCircuit}
\end{figure}

\begin{figure}
    \centering
    \includegraphics[width=0.9\linewidth]{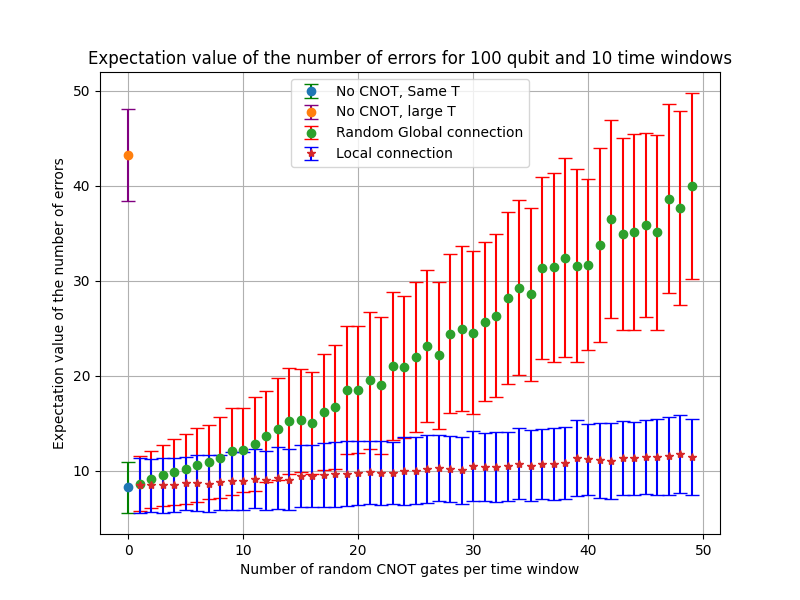}
    \caption{The simulation result of the error number distribution after propagation in random generated circuit. For each fixed qubit number, we generate a fixed number of random $CNOT$ gates in each time window. We also generate another circuit with the same number of $CNOT$ gates, but only has local fixed connection, as shown in \autoref{fig:RandomCircuit}. The expectation value for error number increases with respect to the density of $CNOT$ gates, as expect by intuition. And there is a sharp contrast between the error propagation caused by globally connected circuit with locally connected circuit. }
    \label{fig:randomCircuitResult}
\end{figure}

To justify that the error propagation has a large impact on globally connected circuit. We design the following experiment of random circuit:
\begin{enumerate}
    \item Fix $T$, $n$ and a value $k$ represent the number of random $CNOT$ gates to be sampled in each time windows.
    \item Within each time window, we randomly sample $k$ $CNOT$ gates on $k$ different random pairs of qubit. We call this circuit $\mathcal{C}_1$. An example when $n=5,k=2,T=3$ is the left figure in \autoref{fig:RandomCircuit}. 
    \item Within each time window, we fix the same set of qubit pair, and add one $CNOT$ gate to each of these qubit pair. We call this circuit $\mathcal{C}_2$.An example is the right figure in \autoref{fig:RandomCircuit}.
    \item We run the random error sample algorithm and estimate $E(\textbf{RSG}(\mathbf{G}(\mathcal{C}_1,n,0,T)[1,T,p]))$,
    $Var(\textbf{RSG}(\mathbf{G}(\mathcal{C}_1,n,0,T)[1,T,p])),E(\textbf{RSG}(\mathbf{G}(\mathcal{C}_2,n,0,T)[1,T,p])),Var(\textbf{RSG}(\mathbf{G}(\mathcal{C}_2,n,0,T)[1,T,p]))$.
\end{enumerate}

The result of the above experiment is shown in \autoref{fig:randomCircuitResult}. The error propagation in random circuit with global connection clearly has a severe error propagation problem, while the circuit with fixed local connection has bounded error propagation.

\begin{figure}
    \centering
    \includegraphics[width=0.8\linewidth]{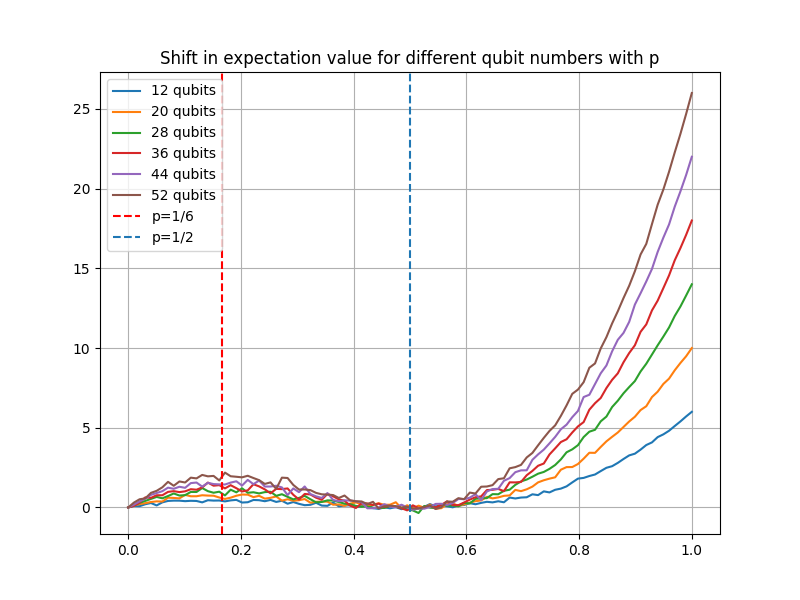}
    \caption{The simulation result for parallel transversal qubit for $n=12,20,28,36,44,52$, with respect to different $p$. We use the random sampling algorithm with $1000$ shots to estimate the expectation value and the shift. The simulation result justify the correctness of  \autoref{cor:shifttransversal}. }
    \label{fig:ShiftExpectationTransP}
\end{figure}

In \autoref{cor:shifttransversal}, we theoretically prove that the shift of error propagation is bounded by $\frac{n}{27}$ when $p=\frac{1}{6}$. Here we also implement numerical simulation. As shown in \autoref{fig:ShiftExpectationTransP}.

\begin{figure}
    \centering
    \includegraphics[width=0.8\linewidth]{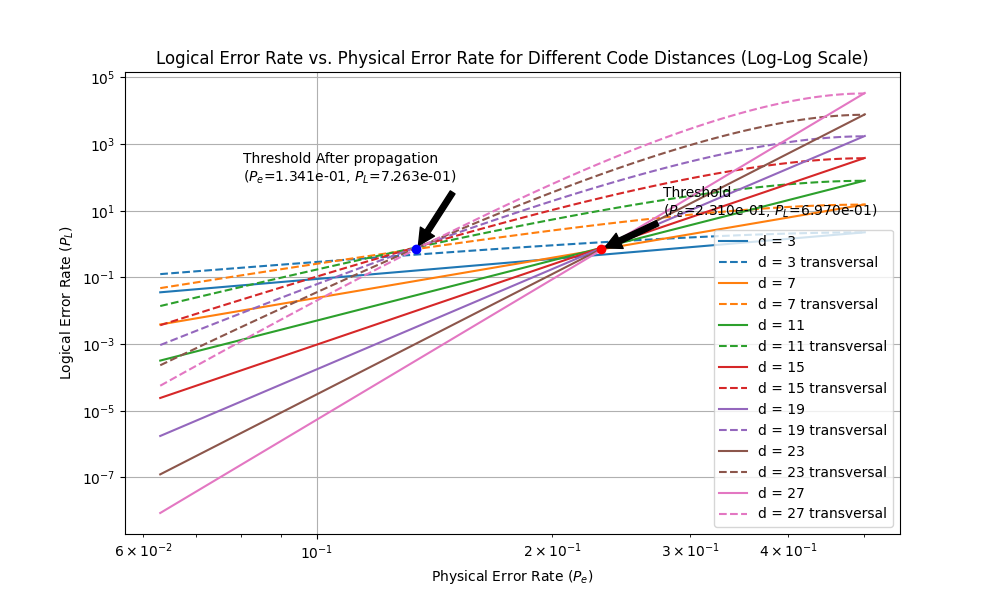}
    \caption{The error threshold before and after error propagation for surface code with transversal logical $CNOT$ gate. The threshold decreases from $0.231$ to $0.134$ because of propagation.}
    \label{fig:surfaceThres}
\end{figure}

Moreover, we apply out framework in the estimation of logical error rate in surface code with parallel transversal $CNOT$ gates. The threshold after error propagation, as demonstrated in \autoref{fig:surfaceThres}, decrease from $0.231$ to $0.134$, after parallel transversal $CNOT$. The significance of the simulation result is that even if the error propagation when circuit only has local connection, the threshold can still be affected a lot. This will case more overhead in a fault tolerant circuit, as $d \approx (\frac{p}{p_{thres}})^{\frac{1}{\epsilon}}$.

\subsection{Qiskit experiment}

\begin{figure}[h!]
    \centering
\fbox{    
\begin{minipage}{0.45\textwidth}
     \centering
    \includegraphics[width=\linewidth]{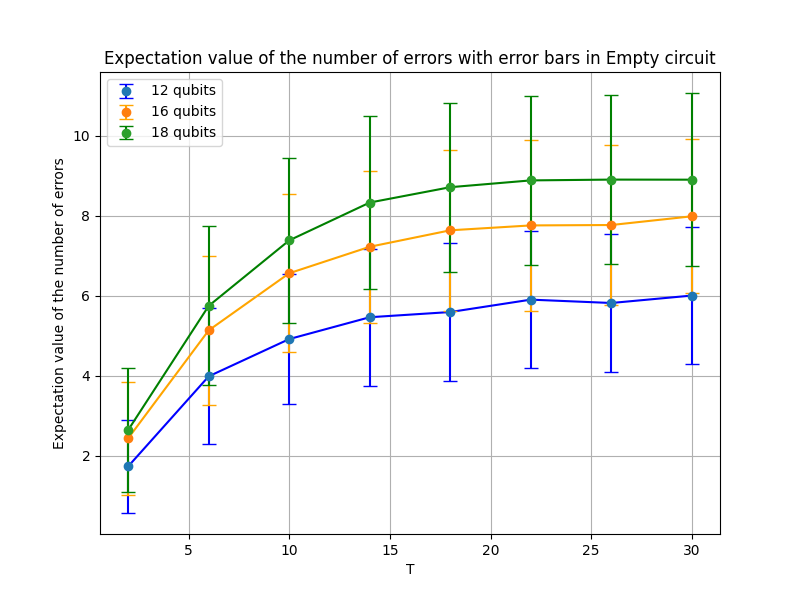}
    \caption{The experiment result done in qiskit Aersimulator. We plot the expectation value of the total number of error for empty circuit for qutbit number $n=12,16,18$, and $T$ from $1$ to $30$. All three curves converge to $\frac{n}{2}$ when $T$ becomes large enough.}
    \label{fig:expErrorEmptyqiskit}
\end{minipage}
}
\fbox{
\begin{minipage}{0.45\textwidth}
     \centering
    \includegraphics[width=\linewidth]{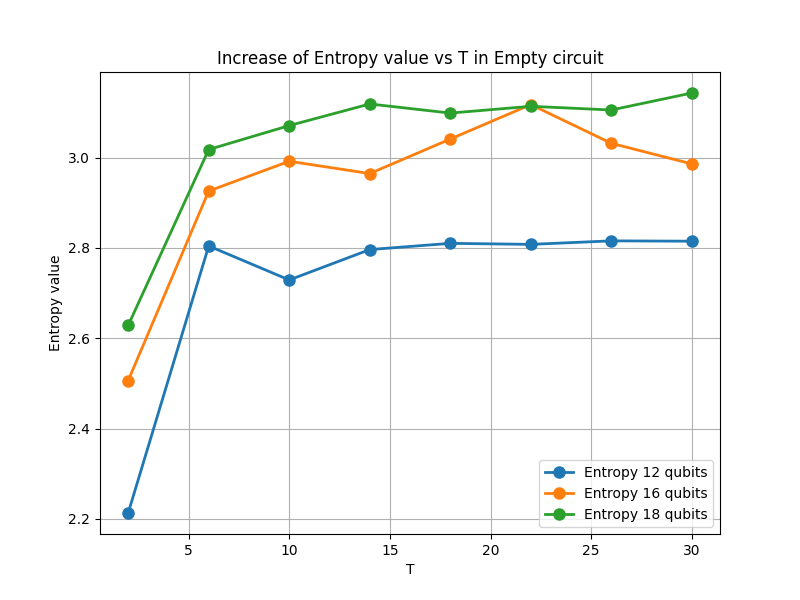}
    \caption{We plot the entropy of empty circuit for different time $T$.}
    \label{fig:entropyqiskit}
\end{minipage}
}
\end{figure}

\begin{figure}
    \centering
    \includegraphics[width=0.9\linewidth]{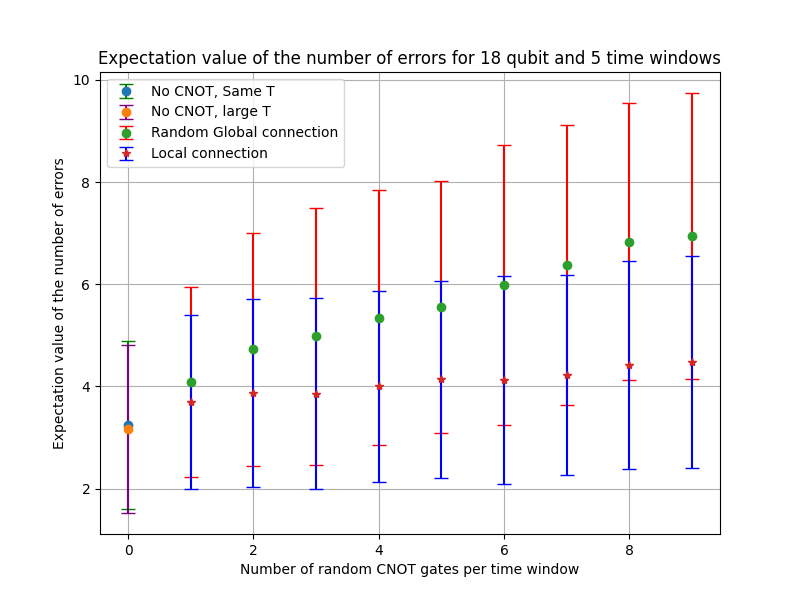}
    \caption{The simulation result of the error number distribution after propagation in random generated circuit with qiskit simulator.}
    \label{fig:randomCircuitResultqiskit}
\end{figure}

We also run some experiment with qiskit-Aer simulator. In qiskit simulation, we use bitflip noise model attached to single qubit $U_1$ gate with bit-flip probability $p$,  where we set $U_1=\hat{I}$. 
In \autoref{fig:expErrorEmptyqiskit} and \autoref{fig:entropyqiskit}, we do the same numerical simulation with empty circuit, with qubit number $n=12,16,18$. And all three curves show the same convergence trend with time $T$. We also run the the same random circuit experiment with qiskit for $n=18$ and $T=5$.  There is also sharp contrast between the circuit with random circuit with global connection and local connection. The consistency between qiskit simulation result and the random sampling result of our model justify the correctness of our modeling.

\FloatBarrier

\section{Conclusion and future direction}

In this paper, a new theoretical framework of measuring error propagation statistically has been made. 

The construction of such model from a general circuit can be done in polynomial time, but the exact calculation of error number distribution is hard. From numerical result, however, we observe that random sampling algorithms give us rather accurate approximate distribution. We have done numerical simulation to show that error propagation get severe with increasing error accumulation time and circuit connectivity. Even though the distribution shift of locally connected transversal circuit is bounded by $\frac{n}{27}$, the error threshold can still decrease a lot after error propagation. All evidence clearly demonstrate that we should put emphasis on error propagation. Not only should we bound the connectivity of the circuit, but we should also optimize the measure of propagation as much as possible to reduce the overhead of error correction.  

We list some future direction for this research:
\begin{enumerate}
    \item Is there a better and more general theoretical bound for error distribution shift after propagation? 
    \item We can apply this framework to more code construction.
    \item We should generalize the framework to circuit that has both $Z$ and $X$ error propagation.
    \item We could explore the possibility of designing a better threshold estimation algorithm with the knowledge of error propagation. 
\end{enumerate}

\section{Acknowledgment}
The author thanks the help from the discussion with Sunny He from MIT and the guidence from professor Jens Palsberg from UCLA.


\bibliographystyle{abbrvnat} 
\bibliography{references}  

\begin{thebibliography}{35}
\providecommand{\natexlab}[1]{#1}
\providecommand{\url}[1]{\texttt{#1}}
\expandafter\ifx\csname urlstyle\endcsname\relax
  \providecommand{\doi}[1]{doi: #1}\else
  \providecommand{\doi}{doi: \begingroup \urlstyle{rm}\Url}\fi

\bibitem[Arute et~al.(2019)Arute, Arya, Babbush, Bacon, Bardin, Barends, Biswas, Boixo, Brandao, Buell, et~al.]{arute2019quantum}
F.~Arute, K.~Arya, R.~Babbush, D.~Bacon, J.~C. Bardin, R.~Barends, R.~Biswas, S.~Boixo, F.~G. Brandao, D.~A. Buell, et~al.
\newblock Quantum supremacy using a programmable superconducting processor.
\newblock \emph{Nature}, 574\penalty0 (7779):\penalty0 505--510, 2019.

\bibitem[Awschalom and Loss(2002)]{awschalom2002semiconductor}
D.~Awschalom and D.~Loss.
\newblock \emph{Semiconductor spintronics and quantum computation}.
\newblock Springer Science \& Business Media, 2002.

\bibitem[Bravyi et~al.(2024)Bravyi, Cross, Gambetta, Maslov, Rall, and Yoder]{Bravyi_2024}
S.~Bravyi, A.~W. Cross, J.~M. Gambetta, D.~Maslov, P.~Rall, and T.~J. Yoder.
\newblock High-threshold and low-overhead fault-tolerant quantum memory.
\newblock \emph{Nature}, 627\penalty0 (8005):\penalty0 778–782, Mar. 2024.
\newblock ISSN 1476-4687.
\newblock \doi{10.1038/s41586-024-07107-7}.
\newblock URL \url{http://dx.doi.org/10.1038/s41586-024-07107-7}.

\bibitem[Calderbank and Shor(1996)]{calderbank1996good}
A.~R. Calderbank and P.~W. Shor.
\newblock Good quantum error-correcting codes exist.
\newblock \emph{Physical Review A}, 54\penalty0 (2):\penalty0 1098, 1996.

\bibitem[Childs and Wiebe(2012)]{childs2012hamiltonian}
A.~M. Childs and N.~Wiebe.
\newblock Hamiltonian simulation using linear combinations of unitary operations.
\newblock \emph{arXiv preprint arXiv:1202.5822}, 2012.

\bibitem[Cirac and Zoller(1995)]{cirac1995quantum}
J.~I. Cirac and P.~Zoller.
\newblock Quantum computations with cold trapped ions.
\newblock \emph{Physical review letters}, 74\penalty0 (20):\penalty0 4091, 1995.

\bibitem[Clarke and Wilhelm(2008)]{clarke2008superconducting}
J.~Clarke and F.~K. Wilhelm.
\newblock Superconducting quantum bits.
\newblock \emph{Nature}, 453\penalty0 (7198):\penalty0 1031--1042, 2008.

\bibitem[Derks et~al.(2024)Derks, Townsend-Teague, Burchards, and Eisert]{derks2024designingfaulttolerantcircuitsusing}
P.-J. H.~S. Derks, A.~Townsend-Teague, A.~G. Burchards, and J.~Eisert.
\newblock Designing fault-tolerant circuits using detector error models, 2024.
\newblock URL \url{https://arxiv.org/abs/2407.13826}.

\bibitem[Flannigan et~al.(2022)Flannigan, Pearson, Low, Buyskikh, Bloch, Zoller, Troyer, and Daley]{flannigan2022propagationerrorsquantitativequantum}
S.~Flannigan, N.~Pearson, G.~H. Low, A.~Buyskikh, I.~Bloch, P.~Zoller, M.~Troyer, and A.~J. Daley.
\newblock Propagation of errors and quantitative quantum simulation with quantum advantage, 2022.
\newblock URL \url{https://arxiv.org/abs/2204.13644}.

\bibitem[Fowler et~al.(2012)Fowler, Mariantoni, Martinis, and Cleland]{fowler2012surface}
A.~G. Fowler, M.~Mariantoni, J.~M. Martinis, and A.~N. Cleland.
\newblock Surface codes: Towards practical large-scale quantum computation.
\newblock \emph{Physical Review A}, 86\penalty0 (3):\penalty0 032324, 2012.

\bibitem[Gidney and Eker{\aa{}}(2021)]{Gidney2021howtofactorbit}
C.~Gidney and M.~Eker{\aa{}}.
\newblock How to factor 2048 bit {RSA} integers in 8 hours using 20 million noisy qubits.
\newblock \emph{{Quantum}}, 5:\penalty0 433, Apr. 2021.
\newblock ISSN 2521-327X.
\newblock \doi{10.22331/q-2021-04-15-433}.
\newblock URL \url{https://doi.org/10.22331/q-2021-04-15-433}.

\bibitem[Gottesman(1998)]{gottesman1998theory}
D.~Gottesman.
\newblock Theory of fault-tolerant quantum computation.
\newblock \emph{Physical Review A}, 57\penalty0 (1):\penalty0 127, 1998.

\bibitem[Gottesman(2009)]{gottesman2009introductionquantumerrorcorrection}
D.~Gottesman.
\newblock An introduction to quantum error correction and fault-tolerant quantum computation, 2009.
\newblock URL \url{https://arxiv.org/abs/0904.2557}.

\bibitem[Harrow et~al.(2009)Harrow, Hassidim, and Lloyd]{Harrow_2009}
A.~W. Harrow, A.~Hassidim, and S.~Lloyd.
\newblock Quantum algorithm for linear systems of equations.
\newblock \emph{Physical Review Letters}, 103\penalty0 (15), Oct. 2009.
\newblock ISSN 1079-7114.
\newblock \doi{10.1103/physrevlett.103.150502}.
\newblock URL \url{http://dx.doi.org/10.1103/PhysRevLett.103.150502}.

\bibitem[Harty et~al.(2014)Harty, Allcock, Ballance, Guidoni, Janacek, Linke, Stacey, and Lucas]{Harty_2014}
T.~Harty, D.~Allcock, C.~Ballance, L.~Guidoni, H.~Janacek, N.~Linke, D.~Stacey, and D.~Lucas.
\newblock High-fidelity preparation, gates, memory, and readout of a trapped-ion quantum bit.
\newblock \emph{Physical Review Letters}, 113\penalty0 (22), Nov. 2014.
\newblock ISSN 1079-7114.
\newblock \doi{10.1103/physrevlett.113.220501}.
\newblock URL \url{http://dx.doi.org/10.1103/PhysRevLett.113.220501}.

\bibitem[Heim et~al.(2016)Heim, Svore, and Hastings]{heim2016optimalcircuitleveldecodingsurface}
B.~Heim, K.~M. Svore, and M.~B. Hastings.
\newblock Optimal circuit-level decoding for surface codes, 2016.
\newblock URL \url{https://arxiv.org/abs/1609.06373}.

\bibitem[Janardan et~al.(2016)Janardan, Tomita, Gutiérrez, and Brown]{faultpathtracer}
S.~Janardan, Y.~Tomita, M.~Gutiérrez, and K.~R. Brown.
\newblock Analytical error analysis of clifford gates by the fault-path tracer method.
\newblock \emph{Quantum Information Processing}, 15\penalty0 (8):\penalty0 3065–3079, May 2016.
\newblock ISSN 1573-1332.
\newblock \doi{10.1007/s11128-016-1330-z}.
\newblock URL \url{http://dx.doi.org/10.1007/s11128-016-1330-z}.

\bibitem[Jones and Mosca(1998)]{jones1998implementation}
J.~A. Jones and M.~Mosca.
\newblock Implementation of a quantum algorithm on a nuclear magnetic resonance quantum computer.
\newblock \emph{The Journal of chemical physics}, 109\penalty0 (5):\penalty0 1648--1653, 1998.

\bibitem[Kim et~al.(2024)Kim, Sevior, and Usman]{kim2024transversalcnotgatemulticycle}
Y.~Kim, M.~Sevior, and M.~Usman.
\newblock Transversal cnot gate with multi-cycle error correction, 2024.
\newblock URL \url{https://arxiv.org/abs/2406.12267}.

\bibitem[Kitaev(2003)]{KITAEV20032}
A.~Kitaev.
\newblock Fault-tolerant quantum computation by anyons.
\newblock \emph{Annals of Physics}, 303\penalty0 (1):\penalty0 2--30, 2003.
\newblock ISSN 0003-4916.
\newblock \doi{https://doi.org/10.1016/S0003-4916(02)00018-0}.
\newblock URL \url{https://www.sciencedirect.com/science/article/pii/S0003491602000180}.

\bibitem[Knill et~al.(2001)Knill, Laflamme, and Milburn]{knill2001scheme}
E.~Knill, R.~Laflamme, and G.~J. Milburn.
\newblock A scheme for efficient quantum computation with linear optics.
\newblock \emph{nature}, 409\penalty0 (6816):\penalty0 46--52, 2001.

\bibitem[Kubica et~al.(2018)Kubica, Beverland, Brandão, Preskill, and Svore]{Kubica_2018}
A.~Kubica, M.~E. Beverland, F.~Brandão, J.~Preskill, and K.~M. Svore.
\newblock Three-dimensional color code thresholds via statistical-mechanical mapping.
\newblock \emph{Physical Review Letters}, 120\penalty0 (18), May 2018.
\newblock ISSN 1079-7114.
\newblock \doi{10.1103/physrevlett.120.180501}.
\newblock URL \url{http://dx.doi.org/10.1103/PhysRevLett.120.180501}.

\bibitem[Pryadko(2020)]{Pryadko2020maximumlikelihood}
L.~P. Pryadko.
\newblock On maximum-likelihood decoding with circuit-level errors.
\newblock \emph{{Quantum}}, 4:\penalty0 304, Aug. 2020.
\newblock ISSN 2521-327X.
\newblock \doi{10.22331/q-2020-08-06-304}.
\newblock URL \url{https://doi.org/10.22331/q-2020-08-06-304}.

\bibitem[Ribeiro et~al.(2021)Ribeiro, Paredes, Silva, Aparicio, and Silva]{subgraphCounting}
P.~Ribeiro, P.~Paredes, M.~E.~P. Silva, D.~Aparicio, and F.~Silva.
\newblock A survey on subgraph counting: Concepts, algorithms, and applications to network motifs and graphlets.
\newblock \emph{ACM Comput. Surv.}, 54\penalty0 (2), mar 2021.
\newblock ISSN 0360-0300.
\newblock \doi{10.1145/3433652}.
\newblock URL \url{https://doi.org/10.1145/3433652}.

\bibitem[Saffman et~al.(2010)Saffman, Walker, and M{\o}lmer]{saffman2010quantum}
M.~Saffman, T.~G. Walker, and K.~M{\o}lmer.
\newblock Quantum information with rydberg atoms.
\newblock \emph{Reviews of modern physics}, 82\penalty0 (3):\penalty0 2313--2363, 2010.

\bibitem[Santagati et~al.(2024)Santagati, Aspuru-Guzik, Babbush, Degroote, González, Kyoseva, Moll, Oppel, Parrish, Rubin, Streif, Tautermann, Weiss, Wiebe, and Utschig-Utschig]{Santagati_2024}
R.~Santagati, A.~Aspuru-Guzik, R.~Babbush, M.~Degroote, L.~González, E.~Kyoseva, N.~Moll, M.~Oppel, R.~M. Parrish, N.~C. Rubin, M.~Streif, C.~S. Tautermann, H.~Weiss, N.~Wiebe, and C.~Utschig-Utschig.
\newblock Drug design on quantum computers.
\newblock \emph{Nature Physics}, 20\penalty0 (4):\penalty0 549–557, Mar. 2024.
\newblock ISSN 1745-2481.
\newblock \doi{10.1038/s41567-024-02411-5}.
\newblock URL \url{http://dx.doi.org/10.1038/s41567-024-02411-5}.

\bibitem[Saravanan and Saeed(2022)]{saravanan2022paulierrorpropagationbasedgate}
V.~Saravanan and S.~M. Saeed.
\newblock Pauli error propagation-based gate reschedulingfor quantum circuit error mitigation, 2022.
\newblock URL \url{https://arxiv.org/abs/2201.12946}.

\bibitem[Shor(1996)]{shor1996fault}
P.~W. Shor.
\newblock Fault-tolerant quantum computation.
\newblock In \emph{Proceedings of 37th conference on foundations of computer science}, pages 56--65. IEEE, 1996.

\bibitem[Shor(1999)]{shor1999polynomial}
P.~W. Shor.
\newblock Polynomial-time algorithms for prime factorization and discrete logarithms on a quantum computer.
\newblock \emph{SIAM review}, 41\penalty0 (2):\penalty0 303--332, 1999.

\bibitem[Tan et~al.(2024)Tan, Niu, and Gidney]{tan2024satscalpellatticesurgery}
D.~B. Tan, M.~Y. Niu, and C.~Gidney.
\newblock A sat scalpel for lattice surgery: Representation and synthesis of subroutines for surface-code fault-tolerant quantum computing, 2024.
\newblock URL \url{https://arxiv.org/abs/2404.18369}.

\bibitem[Tillmann et~al.(2013)Tillmann, Daki{\'c}, Heilmann, Nolte, Szameit, and Walther]{tillmann2013experimental}
M.~Tillmann, B.~Daki{\'c}, R.~Heilmann, S.~Nolte, A.~Szameit, and P.~Walther.
\newblock Experimental boson sampling.
\newblock \emph{Nature photonics}, 7\penalty0 (7):\penalty0 540--544, 2013.

\bibitem[Vandersypen et~al.(2001)Vandersypen, Steffen, Breyta, Yannoni, Sherwood, and Chuang]{Vandersypen_2001}
L.~M.~K. Vandersypen, M.~Steffen, G.~Breyta, C.~S. Yannoni, M.~H. Sherwood, and I.~L. Chuang.
\newblock Experimental realization of shor’s quantum factoring algorithm using nuclear magnetic resonance.
\newblock \emph{Nature}, 414\penalty0 (6866):\penalty0 883–887, Dec. 2001.
\newblock ISSN 1476-4687.
\newblock \doi{10.1038/414883a}.
\newblock URL \url{http://dx.doi.org/10.1038/414883a}.

\bibitem[Wang et~al.(2022)Wang, Gu, Ding, Li, Chong, Pan, and Han]{Wang_2022}
H.~Wang, J.~Gu, Y.~Ding, Z.~Li, F.~T. Chong, D.~Z. Pan, and S.~Han.
\newblock Quantumnat: quantum noise-aware training with noise injection, quantization and normalization.
\newblock In \emph{Proceedings of the 59th ACM/IEEE Design Automation Conference}, DAC ’22. ACM, July 2022.
\newblock \doi{10.1145/3489517.3530400}.
\newblock URL \url{http://dx.doi.org/10.1145/3489517.3530400}.

\bibitem[Yu and Palsberg(2021)]{abstractInter}
N.~Yu and J.~Palsberg.
\newblock Quantum abstract interpretation.
\newblock In \emph{Proceedings of the 42nd ACM SIGPLAN International Conference on Programming Language Design and Implementation}, PLDI 2021, page 542–558, New York, NY, USA, 2021. Association for Computing Machinery.
\newblock ISBN 9781450383912.
\newblock \doi{10.1145/3453483.3454061}.
\newblock URL \url{https://doi.org/10.1145/3453483.3454061}.

\bibitem[Yu and Li(2022)]{yu2022analysiserrorpropagationquantum}
Z.~Yu and Y.~Li.
\newblock Analysis of error propagation in quantum computers, 2022.
\newblock URL \url{https://arxiv.org/abs/2209.01699}.

\end{thebibliography}


\end{document}